\documentclass[a4paper]{cas-sc} 
\ExplSyntaxOn 
\bool_set_true:N \g_stm_nologo_bool
\ExplSyntaxOff
\usepackage{calc,amsfonts,amsmath,amssymb,amsthm,graphicx}
\usepackage{mathtools}
\usepackage[utf8]{inputenc}
\usepackage{CJKutf8}
\usepackage[shortlabels,inline]{enumitem}
\usepackage{multicol}
\usepackage[nameinlink]{cleveref}
\usepackage{comment}
\usepackage{epigraph}
\usepackage[numbers]{natbib}

\crefformat{page}{#2page~#1#3}%
\Crefformat{page}{#2Page~#1#3}%
\crefformat{equation}{#2(#1)#3}%
\Crefformat{equation}{#2(#1)#3}%
\crefformat{figure}{#2Figure~#1#3}%
\Crefformat{figure}{#2Figure~#1#3}%
\crefformat{theorem}{#2Theorem~#1#3}%
\Crefformat{theorem}{#2Theorem~#1#3}%
\crefformat{lemma}{#2Lemma~#1#3}%
\Crefformat{lemma}{#2Lemma~#1#3}%
\crefformat{corollary}{#2Corollary~#1#3}%
\Crefformat{corollary}{#2Corollary~#1#3}%
\crefformat{conjecture}{#2Conjecture~#1#3}%
\Crefformat{conjecture}{#2Conjecture~#1#3}%
\Crefformat{problem}{#2Problem~#1#3}%
\crefformat{problem}{#2Problem~#1#3}%
\Crefformat{open}{#2Open problem~#1#3}%
\crefformat{open}{#2Open problem~#1#3}%
\Crefformat{definition}{#2Definition~#1#3}%
\crefformat{definition}{#2Definition~#1#3}%
\crefformat{example}{#2Example~#1#3}%
\Crefformat{example}{#2Example~#1#3}%
\crefformat{section}{#2Section~#1#3}
\Crefformat{section}{#2Section~#1#3}
\crefformat{chapter}{#2Chapter~#1#3}
\Crefformat{chapter}{#2Chapter~#1#3}
\crefformat{chapter*}{#2Chapter~#1#3}
\Crefformat{chapter*}{#2Chapter~#1#3}
\crefformat{part}{#2Part~#1#3}
\Crefformat{part}{#2Part~#1#3}
\crefformat{enumi}{#2(#1)#3}
\Crefformat{enumi}{#2(#1)#3}
\crefformat{algorithm}{#2algorithm~#1#3}%
\Crefformat{algorithm}{#2Algorithm~#1#3}%

\usepackage{mathrsfs}

\usepackage[all]{xy}

\theoremstyle{plain}
\newtheorem{theorem}{Theorem}[section]
\newenvironment{xrefthm}[1]{%
  \def\thexref{\ref{#1}}
  \begin{thexrefthm}
}{%
  \end{thexrefthm}
}
\newtheorem{corollary}{Corollary}[section]

\newtheorem*{thexrefthm}{Theorem \thexref}
\newtheorem*{thexrefcor}{Corollary \thexref}
\newtheorem{lemma}[theorem]{Lemma}
\newtheorem{proposition}[theorem]{Proposition}

\theoremstyle{definition}
\newtheorem{definition}[theorem]{Definition}
\newtheorem{conjecture}[theorem]{Conjecture}
\newtheorem{remark}[theorem]{Remark}
\newtheorem{problem}[theorem]{Problem}

\newtheorem{example}[theorem]{Example}

\renewcommand{\phi}{\varphi}
\renewcommand{\mid}{~:~}
\newcommand{\Cc}{\mathscr{C}}
\newcommand{\Dd}{\mathscr{D}}
\newcommand{\Gg}{\mathscr{G}}
\newcommand{\Ww}{\mathscr{W}}
\newcommand{\Ff}{\mathscr{F}}

\DeclareMathOperator{\rw}{\mathrm{rw}}
\DeclareMathOperator{\lrw}{\mathrm{lrw}}
\DeclareMathOperator{\Class}{\mathrm{Class}}
\DeclareMathOperator{\IC}{\mathrm{ICol}}
\DeclareMathOperator{\NC}{\mathrm{NCol}}

\newcommand{\limp}{\mathbin{\rightarrow}}
\newcommand{\N}{\mathbb{N}}

\newcommand{\str}[1]{\mathbf #1}

\newcommand{\interp}[1]{\mathsf{#1}}

\begin{document}
\let\WriteBookmarks\relax
\def\floatpagepagefraction{1}
\def\textpagefraction{.001}

\title[mode=title]{Classes of graphs with low complexity:\\
 the case of classes with bounded linear rankwidth}\tnotemark[1]
 \tnotetext[1]{Submitted to the special issue of the {\em European Journal of Combinatorics} celebrating Xuding Zhu's sixtieth birthday.}
\shorttitle{Classes of graphs with low complexity}
\shortauthors{J. Ne{\v{s}}et{\v{r}}il et~al.}
\author{Jaroslav Ne{\v{s}}et{\v{r}}il}[orcid=0000-0002-5133-5586]\fnmark[1,2]
\address[1]{Institute for Theoretical Computer Science 
Charles University
Prague, Czech Republic}
\ead{nesetril@iuuk.mff.cuni.cz}

\author{Patrice {Ossona de Mendez}}[orcid=0000-0003-0724-3729]\fnmark[2]
\address[2]{Centre d'Analyse et de Math\'ematique Sociales (UMR 8557)
Centre National de la Recherche Scientifique,
Paris, France}
\ead{pom@ehess.fr}

\author{Roman Rabinovich}\fnmark[3]
\address[3]{Technical University Berlin,
Germany}
\ead{roman.rabinovich@tu-berlin.de}
\author{Sebastian Siebertz}[orcid=0000-0002-6347-1198]
\cormark[1]
\address[4]{University of Bremen,
 Germany}
\ead{siebertz@uni-bremen.de}
\cortext[cor1]{Corresponding author}
\fntext[fn1]{Supported by  CE-ITI P202/12/G061 of GA\v{C}R}
\fntext[fn2]{Supported by by the European 
Associated Laboratory (LEA STRUCO), and
by the
European Research Council (ERC) under the European Union's Horizon
2020 research and innovation programme (ERC Synergy Grant DYNASNET, grant agreement No 810115).}
\fntext[fn3]{Supported by Deutsche Forschungsgemeinschaft (DFG) --- ``Graph Classes of Bounded Shrubdepth'' Projekt number 420419861}

\begin{abstract}
  Classes with bounded rankwidth are MSO-transductions of trees and
  classes with bounded linear rankwidth are MSO-transductions of paths
  --
  a result that shows a strong link between the properties of these
  graph classes considered from the point of view of structural graph
  theory and from the point of view of finite model theory.
  We take both views on classes with bounded linear rankwidth and
  prove structural and model theoretic properties of these
  classes. The structural results we obtain are the following.
  1) The number of unlabeled graphs of order $n$ with linear
  rank-width at most~$r$ is at most
  $\bigl[(r/2)!\,2^{\binom{r}{2}}3^{r+2}\bigr]^n$.
  2) Graphs with linear rankwidth at most $r$ are linearly
  $\chi$-bounded. Actually, they have bounded $c$-chromatic number,
  meaning that they can be colored with $f(r)$ colors, each color
  inducing a cograph.
  3) To the contrary, based on a Ramsey-like argument, we prove for
  every proper hereditary family $\Ff$ of graphs (like
  cographs) that there is a class with bounded rankwidth that does not
  have the property that graphs in it can be colored by a bounded
  number of colors, each inducing a subgraph in $\Ff$.
  
  From the model theoretical side we obtain the following results:
  1) A direct short proof that graphs with linear rankwidth at most
  $r$ are first-order transductions of linear orders. This result
  could also be derived from Colcombet's theorem on first-order
  transduction of linear orders and the equivalence of linear
  rankwidth with linear cliquewidth.
  2) For a class $\mathscr C$ with bounded linear rankwidth the
  following conditions are equivalent: a) $\mathscr C$ is stable, b)
  $\mathscr C$ excludes some half-graph as a semi-induced subgraph, c)
  $\mathscr C$ is a first-order transduction of a class with bounded
  pathwidth.
  %
  These results open the perspective to study classes admitting low
  linear rankwidth covers.
  
  \mbox{ }

  \mbox{ }
  
  \mbox{ }
%
%
\end{abstract}

\begin{keywords}
rankwidth \sep linear rankwidth \sep cliquewidth \sep linear cliquewidth \sep linear NLC-width \sep pathwidth \sep coloring \sep c-coloring \sep cographs \sep $\chi$-bounded \sep low shrubdepth coloring \sep monadic stability \sep monadic dependence \sep first-order transduction \sep structurally bounded expansion 
\MSC[2010]{05C75 (Structural characterization of families of graphs), 05C15 (Coloring of graphs and hypergraphs), 	05C50 (Graphs and linear algebra), 03C13 (Finite structures), 03C45 (Classification theory, stability and related concepts)}
\end{keywords}
\maketitle


\setlength\epigraphwidth{.5\textwidth}
\epigraph{On devient jeune \`a soixante ans. \\
	Malheureusement, c'est trop tard.\\
	~\\	
	{\em You become young when you're sixty.\\
	Unfortunately, it's too late.}\\
	~\\
	\begin{CJK*}{UTF8}{gbsn}
	到60岁，我们才开始变得年轻。\\
	不幸的是，为时晚矣。
\end{CJK*}
	}{Pablo Picasso}
	
	\pagebreak

\tableofcontents

\section{Introduction}
\noindent A primary concern in many areas of mathematics is to classify structures (or classes of structures) according to their intrinsic complexity.  In this paper we consider three approaches and their interplay to the notion of structural complexity:
the model theoretic approach based on the standard dividing lines that are stability and dependence, the algebraic approach founding the notion of rankwidth and linear rankwidth, and a more classical graph theoretical approach based on colorings and decompositions of graphs.
%
%
%

A theory of sparse structures was initiated in~\cite{Sparsity}, which mainly fits to the classification of monotone classes. The theory has led to the nowhere dense/somewhere dense dichotomy that can be observed in several areas of graph theory, theoretical computer science, model theory, analysis, category theory and probability theory.  Motivated by the connection with model theory -- nowhere dense classes are monadically stable~\cite{adler2014interpreting} and even have low VC-density~\cite{pilipczuk2018number}~-- and by a possible extension of first-order model-checking algorithms for bounded expansion classes~\cite{DKT,DKT2} and for nowhere dense classes~\cite{Grohe2013}, these notions were extended to classes that are obtained as first-order transductions of sparse classes, the \emph{structurally sparse classes} \cite{SurveyND, SBE_drops}.  The central tool used in our approach is the transduction machinery, which establishes a fruitful bridge between graph theory and finite model theory.  Informally, a first-order transduction is a way to interpret a structure in another structure, where the new structure is defined by means of first-order formulas with set parameters.  Indeed, a standard approach of both model theory and computability theory is to determine the relative complexity of two structures by showing that the first interprets in the second, and is therefore not more complex than the second.  In this context, important classes of structures are the class of finite linear orders and the class of element to finite set membership graphs (powerset graphs), as they define the two most important model theoretical dividing lines: {\em stability}, which corresponds to the impossibility to interpret arbitrarily large linear orders, and {\em dependence} (or {\em NIP}, for ``Non-Independence Property''), which corresponds to the the impossibility to interpret arbitrarily large membership graphs. The versions of these properties where we allow set parameters are {\em monadic stability} and {\em monadic dependence}.

The use of first-order transductions naturally fits the study of hereditary classes. 
If we consider classes that are obtained as first-order transductions of other classes, the natural tractability limit is the realm of monadically NIP structures, as non monadically NIP classes allow to interpret the whole class of finite graphs. In this world, typical well behaved monadically NIP but monadically unstable classes of graphs are classes with bounded rankwidth (like cographs) and classes with bounded linear rankwidth (like half-graphs). This justifies a specific study of these classes, as well as the classes that admit finite $p$-covers with bounded rankwidth \cite{kwon17} or classes that admit finite $p$-covers with bounded linear rankwidth (like unit interval graphs), as they naturally extend structurally bounded expansion classes, which admit finite $p$-covers with bounded shrubdepth \cite{SBE_drops}.  However we do not know whether classes with such covers are monadically NIP.  The whole framework is schematically pictured on \Cref{fig:Universe}.

\begin{figure}
  \begin{center}
    \includegraphics[width=1\textwidth]{}
  \end{center}
  \caption{Inclusion map of graph classes. Some examples of classes are given in brackets. }
  \label{fig:Universe} 
\end{figure}

This paper consists of two parts. The first part sets the scene and builds the framework that supports our study.  The second part roots our study in concrete problems.  In particular, we consider classes with bounded linear rankwidth and show how model theoretic and structural properties of classes with bounded linear rankwidth allow to prove new properties of these classes.  In particular we prove the following theorems (formal definitions will be given in \Cref{sec:prelims}).

\begin{xrefthm}{thm:stlrw}
  Let $\Cc$ be a class of graphs with bounded linear rankwidth. Then the following are equivalent:

  \begin{multicols}{2}
    \begin{enumerate}
    \item $\Cc$ is stable,
    \item $\Cc$ is monadically stable,
    \item $\Cc$ has $2$-covers with bounded shrubdepth,
    \item $\Cc$ is sparsifiable,
    \item $\Cc$ excludes some semi-induced half-graph,
    \item $\Cc$ is a first-order transduction of a class with bounded expansion (i.e.\ has structurally bounded expansion),
    \item $\Cc$ is a first-order transduction of a class with bounded pathwidth (i.e.\ has structurally bounded pathwidth).
    \end{enumerate}
  \end{multicols} 
\end{xrefthm} 

And we deduce 

\begin{xrefthm}{thm:lowlrwcover}
  Let $\Cc$ be a class with low linear rankwidth covers. Then the following are equivalent:
  \begin{enumerate}
  \item $\Cc$ is monadically stable,
  \item $\Cc$ is stable,
  \item $\Cc$ excludes a semi-induced half-graph,
  \item $\Cc$ has structurally bounded expansion.
  \end{enumerate} 
\end{xrefthm}

From the graph theoretic point of view, we briefly discuss how classes with bounded rankwidth differ from classes with bounded linear rankwidth and give some lower bounds for $\chi$-boundedness of graphs with bounded rankwidth and for graphs with bounded linear rankwidth.  Then we prove upper bounds for graphs with bounded linear rankwidth.

\begin{xrefthm}{thm:cog}
  Every graph $G$ with linear rankwidth at most $r$ can be colored with at most $3(r+2)!2^{\binom{r+1}{2}}$ colors such that each color induces a cograph with cotree height at most $r+2$.
  In particular, for every graph $G$ with linear rankwidth at most $r$ we have 
\[ \chi(G)\leq 3(r+2)!2^{\binom{r+1}{2}}\,\omega(G).  \] 
\end{xrefthm}

\Cref{thm:stlrw} and a weaker form of \Cref{thm:cog} (\Cref{thm:cog0}) are proved in \Cref{sec:NLC} by using the notion of linear NLC-width expression and Simon's factorization forest theorem.

The strong form of \Cref{thm:cog} is proved in \Cref{sec:lrw} by a fine analysis of linear rankwidth decompositions. Along the way we also obtain an upper bound for the number of graphs with linear rankwidth at most $r$.

\begin{xrefthm}{thm:numlrw}
  Unlabeled graphs with linear rankwidth at most $r$ can be encoded using at most
  $\binom{r}{2}+r\log_2 r+\log_2(3/e)r+O(\log_2 r)$ bits per vertex.
  Precisely, the number of unlabelled graphs of order $n$ with linear rankwidth at most~$r$ is at most $\left[(r+2)!\,2^{\binom{r}{2}}3^{r+2}\right]^{n}$.  
\end{xrefthm}

\section{Classes with low complexity}\label{sec:prelims}

\subsection{Structures and logic}  

\smallskip
\noindent A {\em{signature}} $\Sigma$ is a finite set of relation and
function symbols, each with a prescribed arity.  In this paper we
consider only signatures with relation symbols. A
\mbox{$\Sigma$-\em{structure}}~$\str A$ consists of a finite {\em
  universe} (or {\em domain}) $V(\str A)$ and interpretations of the
symbols in the signature: each relation symbol $R\in \Sigma$, say of
arity $k$, is interpreted as a $k$-ary relation
$R^{\str A}\subseteq V(\str A)^k$.  For a signature $\Sigma$, we
consider standard first-order logic over $\Sigma$.
If $\str A$ is a structure and $X\subseteq V(\str A)$ then we denote
by $\str A[X]$ the \emph{substructure} of $\str A$ induced by $X$.
The {\em{Gaifman graph}} of a structure $\str A$ is the graph with
vertex set $V(\str A)$ where two distinct elements $u,v\in \str A$ are
adjacent if and only if $u$ and $v$ appear together in some tuple in
some relation of~$\str A$.
For a formula $\phi(x_1,\dots,x_k)$ with $k$ free variables and a
structure~$\str A$, we define
\[
\phi(\str A)=\{(v_1,\dots,v_k)\in V(\str A)^k\mid \str A\models
\phi(v_1,\dots,v_k)\}.
\]

We usually write $\bar{x}$ for a tuple $(x_1,\dots,x_k)$ of variables
and leave it to the context to determine the length of the tuple. The
above equality then rewrites as
$\phi(\str A)=\{\bar{v}\in V(\str A)^{|\bar x|}\mid \str A\models
\phi(\bar{v})\}$.
Also, for a formula $\phi(\bar x,\bar y)$ and
$\bar{b}\in V(\str A)^{|\bar y|}$ we define
\[
\phi(\bar{b},\str A)=\{\bar{v}\in V(\str A)^{|\bar x|}\mid \str
A\models \phi(\bar v,\bar b)\}.
\]

A {\em monadic lift} $\Lambda$ of a $\Sigma$-structure $\mathbf A$ is
a $\Sigma^+$-structure $\Lambda(\mathbf A)$ such that $\Sigma^+$ is
the union of $\Sigma$ and a set of unary relation symbols and
$\mathbf A$ is the {\em shadow} of $\Lambda(\mathbf A)$, that is the
$\Sigma$-structure obtained from $\Lambda(\mathbf A)$ by
``forgetting'' all the relations in $\Sigma^+\setminus\Sigma$.

\subsection{Graphs, colored graphs and trees.}  

\smallskip
\noindent Graphs can be viewed as finite structures over the signature
consisting of a binary relation symbol~$E$, interpreted as the edge
relation, in the usual way.  For a finite label set $\Gamma$, by a
{\em{$\Gamma$-colored}} graph we mean a graph enriched by a unary
predicate~$U_\gamma$ for each $\gamma\in \Gamma$.  A rooted forest is
an acyclic graph $F$ together with a unary predicate $R\subseteq V(F)$
selecting one root in each connected component of~$F$. A tree is a
connected forest.  The {\em{depth}} of a node $x$ in a rooted
forest~$F$ is the number of vertices in the unique path between~$x$
and the root of the connected component of~$x$ in~$F$. In particular,
$x$ is a root of $F$ if and only if $F$ has depth $1$ in $F$.  The
depth of a forest is the largest depth of any of its nodes.  The
{\em{least common ancestor}} of nodes $x$ and $y$ in a rooted tree is
the common ancestor of~$x$ and $y$ that has the largest depth.

\subsection{Sparse graph classes}
\smallskip
\paragraph*{Treewidth, pathwidth and treedepth.} 
Treewidth is an important width parameter of graphs that was
introduced in~\cite{RS-GraphMinorsII-JAlg86} as part of the graph
minors project.  Pathwidth is a more restricted width measure that was
introduced in~\cite{RS-GraphMinorsI-JCTB83}.  The notion of treedepth
was introduced in~\cite{Taxi_tdepth}.

For our purposes it will be convenient to define these width measures
in terms of intersection graphs.  Let $S_1,\ldots, S_n$ be a family of
sets. The \emph{intersection graph} defined by this family is the
graph with vertex set $\{v_1,\ldots, v_n\}$ and edge set
$\{\{v_i,v_j\} : S_i\cap S_j\neq \emptyset\}$.

A {\em chordal graph} is the intersection graph of a family of
subtrees of a tree. An {\em interval graph} is the intersection graph
of a family of intervals. A {\em trivially perfect graph} is the
intersection graph of a family of nested intervals. Alternatively, a
trivially perfect graph is the comparability graph of a bounded depth
tree order.

The {\em treewidth} of a graph $G$ is one less than the minimum clique
number of a chordal supergraph of $G$, the {\em pathwidth} of a graph
$G$ is one less than the minimum clique number of an interval
supergraph of $G$, and the {\em treedepth} of a graph~$G$ is the
minimum clique number of a trivially perfect supergraph of $G$:
\begin{align*}
  \mathrm{tw}(G)&=\min\{\omega(H)-1\mid H\text{ chordal and }H\supseteq G\},\\
  \mathrm{pw}(G)&=\min\{\omega(H)-1\mid H\text{ interval graph and }H\supseteq G\},\\
  \mathrm{td}(G)&=\min\{\omega(H)\mid H\text{ trivially perfect and }H\supseteq G\}.\\	
\end{align*}

A class $\Cc$ of graphs has \emph{bounded treewidth, bounded
  pathwidth, or bounded treedepth}, respectively, if there is a bound
$k \in \mathbb N$ such that every graph in $\Cc$ has treewidth,
pathwidth, or treedepth, respectively, at most $k$.

\paragraph*{Classes with bounded expansion.}

A graph $H$ is a \emph{depth-$r$ topological minor} of a graph $G$ if
$G$ contains a subgraph isomorphic to a $\leq 2r$-subdivision of $H$.
A class $\Cc$ of graphs has \emph{bounded expansion} if there is a
function $f\colon \mathbb{N} \rightarrow \mathbb{N}$ such that
$\frac{\|H\|}{|H|}\leq f(r)$ for every $r \in \mathbb{N}$ and every
depth-$r$ topological minor $H$ of a graph from $\Cc$.  Examples of
classes with bounded expansion include the class of planar graphs, any
class of graphs with bounded maximum degree, or more generally, any
class of graphs that excludes a fixed topological minor.  We lift the
notion with bounded expansion to classes of structures over an
arbitrary fixed signature, by requiring that their class of Gaifman
graphs has bounded expansion. In particular, a class of colored graphs
has bounded expansion if and only if the class of underlying uncolored
graphs has bounded expansion.  For an in-depth study of classes with
bounded expansion we refer the reader to the monography
\cite{Sparsity}.

\paragraph*{Nowhere dense classes.}

A class $\Cc$ is {\em nowhere dense} if there is a function
$f\colon \mathbb{N} \rightarrow \mathbb{N}$ such that
$\omega(H)\leq f(r)$ for every $r \in \mathbb{N}$ and every depth-$r$
topological minor $H$ of a graph from $\Cc$ \cite{ND_logic,
  ND_characterization}.

\subsection{Monadic stability, monadic dependence, and low VC-density}
\smallskip
\noindent The model theoretic approach of complexity is based on the
study of properties rather than on the study of objects.  This is
witnessed by the fact that the central subjects of study in model
theory are theories and that the actual structures are only considered
as models of theories.  Nevertheless, most notions defined on theories
have their counterpart on models or on classes of models. One of the
main goals of stability theory (also known as classification theory)
is to classify the models of a given first-order theory according to
some simple system of cardinal invariants.  In this respect,
elementary theories are {\em stable theories} and still reasonably
well behaved theories are {\em NIP theories} (also called {\em
  dependent theories}).  These notions can be translated to classes of
structures as follows:
\begin{definition}
  A class $\Cc$ of structures is {\em stable} if for every first-order
  formula $\varphi(\bar x,\bar y)$ there exists an integer $k$ such
  that for every structure $\mathbf{A}\in\Cc$ and for all tuples
  $\bar a_1,\dots,\bar a_\ell,\bar b_1,\dots,\bar b_\ell$ of elements
  of $\mathbf A$, if
  \begin{equation}
    \mathbf A\models\varphi(\bar a_i,\bar b_j)\quad\iff\quad i<j
  \end{equation}
  for all $i,j\in [\ell]$, then $\ell\leq k$.
\end{definition}

\begin{definition}
  A class $\Cc$ of structures is {\em dependent} (or {\em NIP}) if for
  every first-order formula $\varphi(\bar x,\bar y)$ there exists an
  integer $k$ such that for every structure $\mathbf A\in\Cc$ and for
  all tuples $\bar a_i$ ($i\in[\ell]$) and, $\bar b_I$
  ($I\subseteq [\ell]$) of elements of $\mathbf A$, if
  \begin{equation}
    \mathbf A\models\varphi(\bar a_i,\bar b_I)\quad\iff\quad i\in I
  \end{equation}
  for all $i\in [\ell]$ and all $I\subseteq [\ell]$, then
  $\ell\leq k$.
\end{definition}

A stronger notion of stability and of dependence arises when one
allows to apply arbitrary monadic lifts to the structures in $\Cc$
before using the formula~$\varphi$. These variants are called {\em
  monadic stability} and {\em monadic dependence}. The expressive
power gained by the monadic lift is so strong that tuples of free
variables can be replaced by single free variables in the above
definitions \cite{baldwin1985second}.

\begin{definition}
  A class $\Cc$ of structures is {\em monadically stable} if for every
  first-order formula $\varphi(x,y)$ there exists an integer $k$ such
  that for every monadic lift~$\mathbf A^+$ of a structure
  $\mathbf{A}\in\Cc$ and for all elements
  $a_1,\dots,a_\ell,b_1,\dots,b_\ell$ of $\mathbf A$, if
  \begin{equation}
    \mathbf A^+\models\varphi(a_i,b_j)\quad\iff\quad i<j
  \end{equation}
  for all $i,j\in [\ell]$, then $\ell\leq k$.
\end{definition}

\begin{definition}
  A class $\Cc$ of structures is {\em monadically dependent} (or {\em
    monadically NIP}) if for every first-order formula~$\varphi(x,y)$
  there exists an integer $k$ such that for every monadic lift
  $\mathbf A^+$ of a structure $\mathbf A\in\Cc$ and for all elements
  $a_i$ ($i\in[\ell]$) and $b_I$ ($I\subseteq [\ell]$) of $\mathbf A$,
  if
  \begin{equation}
    \mathbf A^+\models\varphi(a_i,b_I)\quad\iff\quad i\in I
  \end{equation}
  for all $i\in [\ell]$ and all $I\subseteq [\ell]$, then
  $\ell\leq k$.
\end{definition}

For a formula $\phi(\bar x,\bar y)$, the {\em VC-density}
$\mathrm{vc}^\Cc(\phi)$ of a formula $\phi$ in a class $\Cc$
(containing arbitrarily large structures) is defined as
\[
\mathrm{vc}^{\Cc}(\phi)=\lim_{t\rightarrow\infty} \sup_{\mathbf
  A\in\Cc}\sup_{\substack{B\subseteq V(\str A)\\|B|= t}}\frac{\log
  |\{\phi(\bar v, \str A)\cap B^{|\bar x|}\mid \bar v\in V(\str
  A)^{|\bar y|}\}|}{\log |B|}
\]
The {\em VC-density} $\mathrm{vc}^{\Cc}$ of the class $\Cc$ is
\[
\mathrm{vc}^{\Cc}(n)=\sup \{\mathrm{vc}^{\Cc}(\phi)\mid \phi(\bar
x;\bar y) \text{ is a formula with }|\bar y|=n\}.
\]

According to the Sauer-Shelah Lemma
\cite{Sauer1972,shelah1972combinatorial}, a class $\Cc$ is NIP if and
only if $\mathrm{vc}^{\Cc}(\phi)<\infty$ for every formula
$\phi$. However, it is possible for a NIP class (and even for a stable
class) to have $\mathrm{vc}^{\Cc}(1)=\infty$. On the other hand, is
easily checked that (unless structures in $\Cc$ have bounded size) for
every positive integer $n$ we have $\mathrm{vc}^{\Cc}(n)\geq n$. A
class $\Cc$ has {\em low VC-density} if $\mathrm{vc}^{\Cc}(n)=n$ for
all integers~$n$~\cite{guingona2013vc}. We say that $\Cc$ has {\em
  monadically low VC-density} if every monadic lift of $\Cc$ has low
VC-density.

\begin{theorem}\label{thm:Adler}
  Let $\Cc$ be a class of graphs.
  \begin{enumerate}
  \item If $\Cc$ is nowhere dense, then $\Cc$ is monadically stable
    ([Adler, Adler \cite{adler2014interpreting}; Podewski, Ziegler
    \cite{Podewski1978}).
  \item If $\Cc$ is nowhere dense, then $\Cc$ has monadicallly low
    VC-density (Pilipczuk, Siebertz, and Toru{\'n}czyk
    \cite{pilipczuk2018number}]).
  \end{enumerate}
\end{theorem}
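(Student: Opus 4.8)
The plan is to treat both statements uniformly by first reducing them to their non-monadic counterparts. The key observation is that a monadic lift of a class of graphs has the same Gaifman graphs as the original class, since adding unary predicates creates no new adjacencies; hence a monadic lift of a nowhere dense class is again nowhere dense. By the Baldwin--Shelah principle cited in the excerpt (tuples of free variables can be replaced by single variables once monadic lifts are allowed), monadic stability of $\Cc$ is equivalent to ordinary stability of every monadic lift of $\Cc$, and monadically low VC-density is by definition low VC-density of every monadic lift. Thus it suffices to prove the unlifted statements for coloured graphs: every nowhere dense class is stable and has low VC-density. The structural lever for both is the characterisation of nowhere dense classes by uniform quasi-wideness, combined with Gaifman's locality theorem.

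For part~1 (stability) I would argue by contraposition. Suppose $\varphi(\bar x,\bar y)$ has the order property, so for every $\ell$ there are tuples $\bar a_1,\dots,\bar a_\ell$ and $\bar b_1,\dots,\bar b_\ell$ in some $\mathbf A\in\Cc$ with $\mathbf A\models\varphi(\bar a_i,\bar b_j)\iff i<j$. By Gaifman's theorem, $\varphi$ is equivalent to a Boolean combination of basic local sentences and $r$-local formulas for a fixed radius $r$, so whether $\varphi(\bar a_i,\bar b_j)$ holds is controlled by the isomorphism type of the $r$-ball around $\bar a_i\bar b_j$. Now invoke uniform quasi-wideness: from a sufficiently long order one can delete a bounded set $S$ of vertices and pass to a still-long subsequence whose parameter tuples are pairwise $r$-scattered in $\mathbf A-S$. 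Since the $r$-balls of scattered tuples are disjoint, for $i\neq j$ the truth of $\varphi(\bar a_i,\bar b_j)$ can no longer depend on the relative order of $i$ and $j$ — the two configurations look identical whether $i<j$ or $j<i$. This contradicts the order pattern once $\ell$ exceeds the quasi-wideness margin, so $\Cc$ is stable.

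For part~2 (low VC-density), since $\mathrm{vc}^{\Cc}(n)\ge n$ always holds, only the matching upper bound is at stake: for every $\varphi(\bar x;\bar y)$ with $|\bar y|=n$ and every $B$ with $|B|=t$, the number of traces $\{\varphi(\bar v,\mathbf A)\cap B^{|\bar x|}:\bar v\in V(\mathbf A)^{n}\}$ should be $t^{\,n+o(1)}$. Again reduce $\varphi$ to a local formula by Gaifman, so that a trace is determined by local neighbourhood data of $\bar v$ relative to $B$. The heart of the matter is the near-linear neighbourhood complexity of nowhere dense classes: the number of distinct $r$-neighbourhood traces a single vertex can cut out on an $m$-set is $m^{1+o(1)}$, which I would derive from quasi-wideness (too many distinct traces would force a large scattered set realising too many types, violating sparsity). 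Bootstrapping from one free variable to $n$ by a peeling argument, each coordinate of $\bar y$ contributes a factor $t^{1+o(1)}$, giving $t^{\,n+o(1)}$ traces in total and hence $\mathrm{vc}^{\Cc}(n)=n$.

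The main obstacle in both parts is the bounded exceptional set $S$ produced by quasi-wideness: Gaifman locality only trivialises the formula once the parameters are genuinely far apart, but scattering is guaranteed only after deleting $S$, and the interaction of the tuples with $S$ must be tracked. I expect to handle this by absorbing $S$ into the monadic colouring — legitimate because we have already reduced to the coloured, nowhere dense setting — and then bounding, for each of the boundedly many colour/type patterns on $S$, the contribution separately. Summing over the bounded number of patterns preserves both the stability bound and the $t^{o(1)}$ error term in the VC-density count, which is why the bounded-size exception does not spoil either estimate.
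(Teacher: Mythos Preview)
The paper does not give its own proof of \Cref{thm:Adler}; the theorem is stated with attribution to the cited references and used as a black box elsewhere in the paper. So there is no proof in the paper to compare your proposal against.

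That said, your outline follows the strategy of the cited sources, and the reduction ``monadic lifts of nowhere dense classes are nowhere dense, hence it suffices to prove the unlifted statements for coloured graphs'' is exactly right. Two remarks on the content of the sketch itself.

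For Part~1 you are missing an explicit homogenisation step. After scattering modulo the exceptional set $S$, the truth of $\varphi(\bar a_i,\bar b_j)$ factors, via Feferman--Vaught on top of Gaifman locality, through the $q$-types of the $r$-balls around $\bar a_i$ and around $\bar b_j$ separately, together with their interaction pattern with $S$. These $q$-types take only boundedly many values (in terms of $q$, $r$, $|S|$, and the signature), so a pigeonhole pass yields a still-long subsequence on which $\varphi(\bar a_i,\bar b_j)$ is constant, contradicting the order. Your sentence ``the two configurations look identical whether $i<j$ or $j<i$'' presupposes this homogenisation; without it, scattered tuples may still have non-isomorphic $r$-neighbourhoods and the symmetry claim fails.

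For Part~2 your sketch names the right intermediate object (near-linear neighbourhood complexity) but elides essentially all of the work. The bound $m^{1+o(1)}$ on the number of $r$-neighbourhood traces on an $m$-set is the main technical theorem of \cite{pilipczuk2018number}, and its proof---via weak colouring numbers and the splitter game rather than a direct quasi-wideness argument---is substantially more involved than the stability argument in Part~1. It is fine to invoke it, but you should cite it as input rather than present it as something you ``would derive from quasi-wideness''.
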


\begin{theorem}[[Adler, Adler \cite{adler2014interpreting}; Podewski,
  Ziegler \cite{Podewski1978}] Let $\Cc$ be a monotone class of
  graphs. If $\Cc$ is NIP, then $\Cc$ is nowhere dense.
\end{theorem}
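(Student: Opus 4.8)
The plan is to prove the contrapositive: assuming $\Cc$ is \emph{somewhere dense} (not nowhere dense), I will exhibit a single first-order formula $\varphi(x,y)$ witnessing the independence property on $\Cc$, so that $\Cc$ is not NIP. The whole argument rests on the fact that, for monotone classes, being somewhere dense is an extremely rigid condition: it forces the class to contain the subdivisions of \emph{all} graphs, and these subdivisions are logically transparent, in the sense that the original adjacency can be read off by a fixed distance formula.

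First I would extract a uniform subdivision parameter. If $\Cc$ is somewhere dense, then by definition there is an $r$ such that the clique numbers of depth-$r$ topological minors of graphs in $\Cc$ are unbounded; equivalently, $\Cc$ contains, as subgraphs, $\leq 2r$-subdivisions of $K_n$ for arbitrarily large $n$. In such a subdivision each edge of $K_n$ is subdivided some number of times in $\{0,1,\dots,2r\}$; colouring the edges of $K_n$ by this number and applying Ramsey's theorem yields, inside any sufficiently large $K_n$, a clique $K_m$ (with $m\to\infty$ as $n\to\infty$) all of whose edges are subdivided the \emph{same} number $p\le 2r$ of times. Since $\Cc$ is monotone (closed under subgraphs), it therefore contains the exact $p$-subdivision $K_m^{(p)}$ for all $m$, and hence the exact $p$-subdivision $G^{(p)}$ of \emph{every} graph $G$, each such $G^{(p)}$ being a subgraph of some $K_m^{(p)}$. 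This is exactly the standard structural characterisation of somewhere dense monotone classes from \cite{Sparsity}, and reproving it is the only genuinely nontrivial ingredient.

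Next I would recover adjacency by a distance formula. Fix the parameter $p$ above and let $\varphi(x,y)$ be the first-order formula expressing that $x$ and $y$ are at distance exactly $p+1$; for fixed $p$ this is plainly first-order, as it asserts the existence of a path of length $p+1$ together with the nonexistence of any shorter path. In the exact $p$-subdivision $G^{(p)}$ the distance between two branch vertices $u,v$ (the vertices coming from $G$) satisfies $d_{G^{(p)}}(u,v)=(p+1)\cdot d_G(u,v)$, because every subdivided edge contributes length $p+1$ and every internal subdivision vertex has degree two. Hence, evaluated on branch vertices, $\varphi$ holds precisely when $d_G(u,v)=1$, i.e.\ precisely when $uv\in E(G)$: on branch vertices, $\varphi$ defines the edge relation of the pre-subdivision graph $G$.

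Finally I would plug in the membership graphs. For each $\ell$, let $G_\ell$ be the bipartite membership graph with left part $\{u_i\mid i\in[\ell]\}$ and right part $\{w_I\mid I\subseteq[\ell]\}$, joining $u_i$ to $w_I$ exactly when $i\in I$. Taking $\mathbf A=G_\ell^{(p)}\in\Cc$ and letting $a_i,b_I$ be the branch vertices corresponding to $u_i,w_I$, the previous paragraph gives $\mathbf A\models\varphi(a_i,b_I)\iff i\in I$ for all $i\in[\ell]$ and all $I\subseteq[\ell]$. As $\ell$ is arbitrary, $\varphi$ has the independence property on $\Cc$, so $\Cc$ is not NIP, completing the contrapositive. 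The main obstacle is the first step, the Ramsey-plus-monotonicity argument producing a single subdivision count $p$ and hence all subdivisions; once that structural rigidity is in hand, the logical part is a routine distance-formula interpretation.
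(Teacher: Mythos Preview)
The paper does not prove this theorem; it is stated with attribution to Adler--Adler and Podewski--Ziegler and used as a black box. Your argument is correct and is precisely the standard Adler--Adler proof: a somewhere dense monotone class contains all exact $p$-subdivisions for some fixed $p$, and the distance-$(p{+}1)$ formula then interprets the edge relation of an arbitrary graph (in particular the membership bipartite graphs) on the branch vertices, yielding the independence property. The only step you slightly elide is that the Ramsey argument a priori gives a value $p=p(n)\le 2r$ depending on $n$; a further pigeonhole on the finite range $\{0,\dots,2r\}$ is needed to obtain one $p$ working for all $m$, but this is trivial.
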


\pagebreak
\begin{corollary}
  Let $\Cc$ be a monotone class of graphs. Then the following are
  equivalent.
  \begin{enumerate}
  \item $\Cc$ is nowhere dense,
  \item $\Cc$ is stable,
  \item $\Cc$ is monadically stable,
  \item $\Cc$ is NIP,
  \item $\Cc$ is monadically NIP,
  \item $\Cc$ has low VC-density,
  \item $\Cc$ has monadically low VC-density.
  \end{enumerate}
\end{corollary}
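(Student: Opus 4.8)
The plan is to prove the seven conditions equivalent by closing a cycle of implications, in which the only substantive inputs are \Cref{thm:Adler} and the Adler--Adler / Podewski--Ziegler theorem stated just above; every other arrow is a routine consequence of the definitions and of classical model theory, so the proof is essentially bookkeeping.

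First I would record the ``cheap'' implications, valid for arbitrary classes of structures. Monadic stability implies stability: via the equivalence of the single-variable and tuple formulations of monadic stability (Baldwin and Shelah \cite{baldwin1985second}), the tuple version of monadic stability specializes to stability by simply ignoring the added unary predicates; this gives $(3)\Rightarrow(2)$, and the same remark yields $(5)\Rightarrow(4)$. Stability implies dependence, and hence $(2)\Rightarrow(4)$ and $(3)\Rightarrow(5)$: if a formula $\varphi(\bar x,\bar y)$ shatters arbitrarily large index sets on a class, then feeding it the final segments $I_j=\{j,\dots,\ell\}$ shows that $\neg\varphi$ has the order property on the class without bound, so instability follows from the independence property; applying this to the class of all monadic lifts of members of $\Cc$ turns monadic stability into monadic dependence. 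Monadically low VC-density implies low VC-density, giving $(7)\Rightarrow(6)$, by taking the trivial monadic lift. And low VC-density implies dependence, giving $(6)\Rightarrow(4)$: by the Sauer--Shelah Lemma recalled above, $\Cc$ is NIP as soon as $\mathrm{vc}^{\Cc}(\phi)<\infty$ for every $\phi$, and this holds because $\mathrm{vc}^{\Cc}(\phi)\le\mathrm{vc}^{\Cc}(|\bar y|)=|\bar y|<\infty$ when $\Cc$ has low VC-density.

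It remains to close the cycle through the sparsity side. By \Cref{thm:Adler}, nowhere density implies both monadic stability and monadically low VC-density, so $(1)\Rightarrow(3)$ and $(1)\Rightarrow(7)$; and by the Adler--Adler / Podewski--Ziegler theorem above, a \emph{monotone} NIP class of graphs is nowhere dense, so $(4)\Rightarrow(1)$. Chaining $(1)\Rightarrow(3)\Rightarrow(2)\Rightarrow(4)\Rightarrow(1)$ makes $(1)$--$(4)$ equivalent; $(3)\Rightarrow(5)\Rightarrow(4)$ folds in $(5)$; and $(1)\Rightarrow(7)\Rightarrow(6)\Rightarrow(4)$ folds in $(6)$ and $(7)$. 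Hence all seven conditions are equivalent.

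The only real obstacle is already packaged in the cited implication $(4)\Rightarrow(1)$, which genuinely uses monotonicity: dropping that hypothesis breaks the corollary, since classes of bounded linear rankwidth (the subject of the rest of the paper) are monadically dependent yet contain arbitrarily large cliques and so are not nowhere dense. Among the elementary steps, the two that call for a moment's care are the passage between the single-variable and tuple formulations of monadic stability and the implication ``low VC-density $\Rightarrow$ NIP''; both are dispatched by results quoted above, and no further computation is needed.
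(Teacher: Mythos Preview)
Your proof is correct and matches the paper's intended approach: the corollary is stated in the paper without proof, as an immediate consequence of \Cref{thm:Adler} and the Adler--Adler/Podewski--Ziegler theorem just above it, together with the standard implications among stability, dependence, and their monadic and VC-density variants. Your write-up simply spells out these routine arrows and closes the cycle exactly as the paper expects the reader to do.
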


\subsection{Interpretations and transductions}
\smallskip
\noindent In this paper, by an {\em interpretation} of
$\Sigma'$-structures in $\Sigma$-structures we mean a transformation
$\mathsf I$ defined by means of formulas $\phi_R(\bar x)$ (for
$R\in\Sigma'$ of arity~$|\bar x|$) and a formula $\nu(x)$.  For every
$\Sigma$-structure $\mathbf A$, the $\Sigma'$-structure
$\mathsf I(\mathbf A)$ has domain~$\nu(\mathbf A)$ and the
interpretation of each relation $R\in\Sigma'$ is given by
\mbox{$R^{\mathsf I(\mathbf A)}=\phi_R(\mathbf A)\cap \nu(\mathbf
  A)^{|\bar x|}$}.

A {\em transduction} $\mathsf T$ is the composition
$\mathsf I\circ\Lambda$ of a monadic lift and an interpretation. It is
easily checked that the composition of two transductions is again a
transduction.

Let $\Cc$ and $\Dd$ be classes of $\Sigma_\Cc$-structures and
$\Sigma_\Dd$-structures, respectively. Let $\mathsf{I}$ be an
interpretation of $\Sigma_\Dd$-structures in
$\Sigma_\Cc^+$-structures, where $\Sigma_\Cc^+\setminus\Sigma_\Cc$ is
a finite set of unary relation symbols.  If, for every $\mathbf B$ in
$\Dd$ there exists a lift~$\mathbf A^+$ of some structure
$\mathbf A\in\Cc$ such that $\mathbf B=\mathsf{I}(\mathbf A^+)$ we
write \vspace{-3mm}
\[
\xy\xymatrix{ \Cc \ar@{->>}[r]^(.4){\mathsf{I}}&\Dd, }\endxy
\]
and we write
\[
\xy\xymatrix{ \Cc \ar@{->>}[r]&\Dd }\endxy
\] if there exists $\interp I$ such that
$\xy\xymatrix{ \Cc \ar@{->>}[r]^(.4){\mathsf{I}}&\Dd.  }\endxy$ 
Let
$\mathscr{H}$ denote the class of half graphs and let $\Gg$ denote the
class of all finite graphs.  We have \vspace{-3mm}
\begin{align*}
\Cc\text{ is monadically stable}\quad&\iff\quad
\xymatrix{
	\Cc\ar@{->>}[r]|-(.4){/}&\mathscr{H}.\\
	}\\
\Cc\text{ is monadically NIP}\quad&\iff\quad
\xymatrix{
	\Cc\ar@{->>}[r]|-(.4){/}&\Gg.\\
	}\\
\end{align*}

\vspace{-8mm}

\begin{lemma}[\cite{anderson1990tree}]
  A stable class $\Cc$ is monadically unstable if and only $\Cc$ has a
  transduction to the class of all \mbox{$1$-subdivided} complete
  bipartite graphs.
\end{lemma}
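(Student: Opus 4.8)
The plan is to prove the two implications separately. The ``if'' direction is short: write $\mathscr B=\{K^{(1)}_{n,n}:n\in\N\}$ for the class of $1$-subdivided complete bipartite graphs, with rows $a_1,\dots,a_n$, columns $b_1,\dots,b_n$, and $c_{ij}$ the vertex subdividing the edge $a_ib_j$. Monadic stability is exactly the nonexistence of a transduction to the class $\mathscr H$ of half-graphs (as recorded in the excerpt), and transductions compose, so monadic stability passes to transductions; hence it suffices to see that $\mathscr B$ is monadically unstable. For this, mark the two sides of $K^{(1)}_{n,n}$ by disjoint unary predicates $A,B$ and set $U=\{c_{ij}:i\le j\}$: the transduction with domain $A\cup B$ and edge relation given symmetrically between the two sides by $\exists z\,(E(x,z)\wedge E(z,y)\wedge U(z))$ outputs exactly the half-graph of order $n$, since $c_{ij}$ is the unique common neighbour of $a_i$ and $b_j$. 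Thus $\mathscr B$ transduces $\mathscr H$, and any class transducing onto $\mathscr B$ is monadically unstable; stability of $\Cc$ is not used here.

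For the ``only if'' direction, assume $\Cc$ is stable and monadically unstable. By \cite{baldwin1985second}, monadic instability is witnessed by a single formula $\varphi(x,y)$ over $\Sigma_\Cc$ expanded by finitely many unary predicates $U_1,\dots,U_m$ (finitely many, since $\varphi$ mentions only finitely many symbols) such that for every $k$ some monadic lift of some member of $\Cc$ realises, via $\varphi$, a half-graph of order $k$: elements $a_1,\dots,a_k,b_1,\dots,b_k$ with $\varphi(a_i,b_j)\iff i<j$. The goal is a single transduction outputting $K^{(1)}_{n,n}$ for every $n$. I would organise the argument around an ultraproduct $\str M^+$ of such lifts over a non-principal ultrafilter --- it carries an infinite half-graph for $\varphi$, and its $\Sigma_\Cc$-reduct $\str M$, being an ultraproduct of members of $\Cc$, has stable theory --- while keeping in mind that the extraction below can equivalently be carried out inside a sufficiently large finite witness, which is what ultimately produces the finite copies of $K^{(1)}_{n,n}$.

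The interpretation is then built as follows. First, by Ramsey's theorem extract an order-indiscernible sequence $(a_ib_i)_{i<\omega}$ in $\str M^+$ still witnessing the half-graph. In the stable reduct $\str M$ this sequence is an indiscernible \emph{set}, so $\str M$ imposes no order on it; the asymmetry making $\varphi(a_i,b_j)$ depend on the order of $i$ and $j$ is therefore carried entirely by the colours $U_1,\dots,U_m$. Next, resolve the quantifiers of $\varphi$ by Skolem witnesses and apply Ramsey again to extract an indiscernible array, obtaining elements $c_{ij}$ ($i,j<\omega$) together with finitely many further unary predicates such that (i) the selected rows, columns and $c_{ij}$ are pairwise distinct and distinguished by unary predicates of the lift, and (ii) a fixed formula $\chi(x,y)$ in the expanded language holds of $(a_i,c_{kl})$ precisely when $i=k$, of $(b_j,c_{kl})$ precisely when $j=l$, and of no pair consisting of two rows, two columns, a row and a column, or two subdivision vertices. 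Reading $\chi$ as the edge relation on the definable vertex set $\{a_i\}\cup\{b_j\}\cup\{c_{ij}\}$ then produces a $1$-subdivided biclique; carrying this out in a large finite witness and using the further colours also to select $n$ rows, $n$ columns and the corresponding $n^2$ witness vertices yields $K^{(1)}_{n,n}$ for each $n$, so that $\Cc$ has a transduction to $\mathscr B$, as required.

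The main obstacle is the clean extraction behind (i)--(ii): a priori the Skolem witnesses $c_{ij}$ need not be distinct, and a single $c_{ij}$ may interact --- through the quantifier-free matrix of $\varphi$ and the colours $U_1,\dots,U_m$ --- with many rows and columns, not only with $a_i$ and $b_j$, which would wreck the subdivided-biclique incidence pattern. Controlling this after only boundedly many Ramsey refinements is exactly where stability of the reduct must be used quantitatively: one invokes the bounded ladder index (equivalently, finite VC-dimension) of reduct formulas on the indiscernible array, together with stationarity of types over an indiscernible set and the locality of forking, to confine the interaction of $c_{ij}$ with the array to the coordinates $i$ and $j$. Packaging these stability-theoretic facts into the combinatorial statement (i)--(ii) is the technical heart of the proof, and is essentially Anderson's argument in \cite{anderson1990tree}.
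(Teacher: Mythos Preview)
The paper does not prove this lemma: it is quoted from \cite{anderson1990tree} with no argument and immediately used to derive \Cref{crl:mon-stable}. There is therefore no proof in the paper to compare your proposal against.

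On its own merits: your ``if'' direction is correct and complete. Marking the subdivision vertices $c_{ij}$ with $i\le j$ by a unary predicate $U$ and interpreting adjacency by the common-neighbour-in-$U$ formula does transduce arbitrarily large half-graphs from $1$-subdivided bicliques, and composition of transductions finishes that implication; you are right that stability of $\Cc$ plays no role there. Your ``only if'' direction is a reasonable high-level outline --- pass to an ultraproduct, pull an indiscernible half-graph, use stability of the $\Sigma_\Cc$-reduct to force the ordering information into the monadic colours, then Skolemise and Ramsey to extract witnesses $c_{ij}$ --- but, as you explicitly concede, turning ``there exist witnesses'' into the clean incidence pattern (i)--(ii) with pairwise distinct $c_{ij}$ and no spurious incidences is the entire content of the result, and you defer that to Anderson. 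So what you have written is an honest sketch rather than a proof; the paper itself does no more, since it simply cites the result.
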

\begin{corollary}\label{crl:mon-stable}
  A class $\Cc$ is monadically stable if and only if it is both stable
  and monadically NIP.
\end{corollary}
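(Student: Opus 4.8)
The plan is to prove the two implications separately, with the Lemma of \cite{anderson1990tree} stated just above as the only non-formal ingredient.

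The forward implication -- monadic stability implies both stability and monadic dependence -- is essentially formal. For stability I would invoke the Baldwin--Shelah normalisation quoted above \cite{baldwin1985second}: in the definition of monadic stability one may allow tuples of free variables, and reading this off with the trivial monadic lift is precisely the definition of stability. For monadic dependence I would relativise the classical implication ``independence property implies order property'' to monadic lifts: if, in a monadic lift $\mathbf A^+$ of some $\mathbf A\in\Cc$, a formula $\varphi(x,y)$ shatters a set $\{b_1,\dots,b_\ell\}$ via parameters $\{a_I: I\subseteq[\ell]\}$, then the elements $u_i:=a_i$ and $v_j:=b_{\{j,j+1,\dots,\ell\}}$ satisfy $\mathbf A^+\models\varphi(u_i,v_j)\iff i\geq j$, which after reversing the indexing is a ladder of length $\ell$ for $\varphi$ in $\mathbf A^+$. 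Hence a uniform bound on ladders across all monadic lifts (monadic stability) forces a uniform bound on shattered sets across all monadic lifts (monadic dependence).

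For the converse I would argue by contraposition: assume $\Cc$ is stable but monadically unstable, and deduce that $\Cc$ is not monadically dependent. By the Lemma of \cite{anderson1990tree}, $\Cc$ admits a transduction onto the class $\mathscr S$ of all $1$-subdivided complete bipartite graphs. Since transductions compose (as noted above) and, by the equivalence displayed above, failure of monadic dependence is exactly the existence of a transduction onto the class $\Gg$ of all finite graphs, it is enough to produce a single transduction from $\mathscr S$ onto $\Gg$.

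Such a transduction is the only step with real content. Given a finite graph $H$ with $V(H)=[n]$, take $S(K_{n,n})\in\mathscr S$ with left vertices $a_1,\dots,a_n$, right vertices $b_1,\dots,b_n$ and subdivision vertices $c_{i,j}$ ($i,j\in[n]$), where $a_i\sim c_{i,j}\sim b_j$; the monadic lift marks the left side by a predicate $L$, the diagonal $\{c_{i,i}:i\in[n]\}$ by a predicate $M$, and the set $\{c_{i,j}:\{i,j\}\in E(H)\}$ by a predicate $F$. Restricting the domain to $L$ and setting
\[
x\sim y\ :\equiv\ x\neq y\ \wedge\ \exists z\,\exists b\,\exists w\ \bigl(F(z)\wedge E(x,z)\wedge E(z,b)\wedge M(w)\wedge E(b,w)\wedge E(w,y)\bigr)
\]
recovers $H$ on $\{a_1,\dots,a_n\}$: for $x=a_i$, $y=a_j$ with $i\neq j$ the clauses force $z=c_{i,k}$ with $\{i,k\}\in E(H)$, then $b=b_k$, then $w=c_{k,k}$, then $k=j$, so $\{i,j\}\in E(H)$; conversely the edge $\{i,j\}$ is witnessed by $z=c_{i,j}$, $b=b_j$, $w=c_{j,j}$. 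The interpreting formulas do not depend on $H$, so this is a single transduction; composing it with the transduction of $\Cc$ onto $\mathscr S$ yields a transduction of $\Cc$ onto $\Gg$, contradicting monadic dependence. The main -- indeed the only -- obstacle is guessing the three markings above; once they are in place the verification is a routine quantifier chase, and everything else reduces to the cited facts.
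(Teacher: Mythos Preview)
Your argument is correct and is precisely the derivation the paper leaves implicit: the corollary is stated there without proof, as an immediate consequence of the preceding lemma of \cite{anderson1990tree}, and the only missing step---that the class of $1$-subdivided complete bipartite graphs transduces the class of all finite graphs---is exactly what you have supplied explicitly. (One notational slip: in your IP\,$\Rightarrow$\,OP paragraph the subscripts on $a$ and $b$ are swapped relative to your own setup, where the $a$'s are indexed by subsets and the $b$'s by elements; the intended assignment is $u_i:=b_i$ and $v_j:=a_{\{j,j+1,\dots,\ell\}}$.)
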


\begin{figure}
\begin{center}
  \includegraphics[width=.35\textwidth]{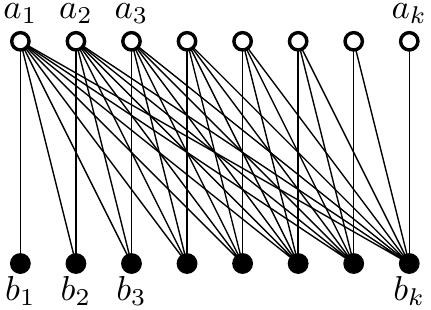}
\end{center}
\caption{The half-graph $H_k$}
\label{fig:HG}
\end{figure}

We use the term of {\em structurally {\rm xxx}} for classes that are
transductions of classes that are xxx. For instance, a class has
structurally bounded treewidth if it is the transduction of a class
with bounded treewidth.

The following characterizations of classes with bounded treewidth,
pathwidth, rankwidth, linear rankwidth, and shrubdepth show the deep
connections between these width measures and logical transductions
(and at this point will serve as a definition of the notions of
rankwidth, linear rankwidth and shrubdepth).

\begin{empty} 
  \begin{enumerate}
  \item A class $\Cc$ of graphs has bounded treewidth (pathwidth,
    respectively) if and only if there exists an
    MSO-transduction~$\mathsf{T}$ such that the incidence graph of
    every $G\in\Cc$ is the result of applying $\mathsf{T}$ to some
    tree (path, respectively) (\cite{courcelle1992monadic} (see also
    \cite{courcelle2012graph}, Theorem 7.47)).
  \item A class $\Cc$ of graphs has bounded rankwidth (linear
    rankwidth, respectively) if and only if there exists an
    MSO-transduction~$\mathsf{T}$ such that every $G\in\Cc$ is the
    result of applying $\mathsf{T}$ to some tree (path, respectively).
    (\cite{courcelle1992monadic} (see also~\cite{courcelle2012graph},
    Theorem 7.47)).
  \item\label{it:colc} A class $\Cc$ of graphs has bounded rankwidth
    (linear rankwidth, respectively) if and only if there exists an
    FO-transduction~$\mathsf{T}$ such that every $G\in\Cc$ is the
    result of applying $\mathsf{T}$ to some tree order (linear order,
    respectively)~(\cite{colcombet2007combinatorial}).
  \item\label{it:sd} A class $\Cc$ of graphs has bounded shrubdepth if
    and only if there exists an FO-transduction~$\mathsf{T}$ and a
    height $h$ such that every $G\in\Cc$ is the result of applying
    $\mathsf{T}$ to some tree of depth at most $h$~(\cite{Ganian2012,
      Ganian2017}).
  \end{enumerate}
\end{empty}

We can rewrite properties~\eqref{it:colc} and~\eqref{it:sd} as follows:

\vspace{-3mm}
\begin{align*}
\Cc\text{ has bounded rankwidth}\quad&\iff\quad
\xymatrix{
	\mathscr{Y}^{\leq}\ar@{->>}[r]&\Cc,\\
	}\\
\Cc\text{ has bounded linear rankwidth}\quad&\iff\quad
\xymatrix{
	\mathscr{L}^{\leq}\ar@{->>}[r]&\Cc,\\
	}\\\Cc\text{ has bounded shrubdepth}\quad&\iff\quad
\exists n\ \xymatrix{
	\mathscr{Y}_n\ar@{->>}[r]&\Cc,\\
	}\\
\end{align*}
where $\mathscr{Y}^{\leq}$ denotes the class of all finite tree
orders, $\mathscr{L}^{\leq}$ denotes the class of all linear orders,
and $\mathscr{Y}_n$ denotes the class of trees with depth at most $n$.

Note that in the characterizations above $\mathscr{Y}^{\leq}$ can be
replaced by the class of trivially perfect graphs (or by the larger
class of cographs) and $\mathscr{L}^{\leq}$ can be replaced by the
class of transitive tournaments or by the class of half-graphs.

\begin{remark}
  Since the class of all graphs does not have bounded rankwidth, we
  deduce that if $\Cc$ has bounded rankwidth we have
  $\xymatrix{ \Cc\ar@{->>}[r]|-(.4){/}&\Gg.  }$ Hence every class with
  bounded rankwidth is monadically NIP.
\end{remark}

In particular, \Cref{crl:mon-stable} implies the following:
\begin{remark}
  A class with bounded rankwidth is monadically stable if and only if
  it is stable.
\end{remark}

\subsection{Weakly sparse classes}

\smallskip\noindent It appears that a basic property that makes a
graph class dense is that graphs in it contain arbitrarily large
bicliques. Indeed, forbidding a biclique as a subgraph (or,
equivalently, forbidding a clique and a biclique as induced subgraphs)
is known to have a strong consequence on classes with low
complexity. We call a class $\Cc$ {\em weakly sparse} if it excludes
some biclique as a subgraph.

\begin{theorem}
Let $\Cc$ be a weakly sparse class of graphs. 
\begin{enumerate}
\item If $\Cc$ has bounded shrubdepth, then $\Cc$ has bounded
  treedepth~\cite{SBE_drops}.
\item If $\Cc$ has bounded linear rankwidth, then $\Cc$ has bounded
  pathwidth~\cite{Gurski2000}.
\item If $\Cc$ has bounded rankwidth, then $\Cc$ has bounded
  treewidth~\cite{Gurski2000}.
\end{enumerate}
\end{theorem}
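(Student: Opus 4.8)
The plan is to prove (2) and (3) together — they are the linear and tree-shaped faces of one statement — and to treat (1) by a separate induction. For (3), recall that a class has bounded rankwidth exactly when there is a bound $k$ such that every graph $G$ in it admits a \emph{rank-decomposition of width at most $k$}: a subcubic tree $T$ whose leaves are the vertices of $G$, every edge $e$ of which induces a bipartition $(A_e,B_e)$ of $V(G)$ whose ``adjacency matrix'' over $\mathrm{GF}(2)$ has rank at most $k$; bounded linear rankwidth is the same with $T$ required to be a caterpillar, equivalently, with a vertex ordering $v_1,\dots,v_n$ all of whose prefix bipartitions $(\{v_1,\dots,v_i\},\{v_{i+1},\dots,v_n\})$ have rank at most $k$. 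The only linear-algebraic input I need is that rank at most $k$ forces the bipartite graph between $A_e$ and $B_e$ to have at most $2^k$ distinct neighbourhood traces on each side: $A_e$ partitions into classes $X_1,\dots,X_p$ (with $p\le 2^k$) according to the trace $N_\alpha\subseteq B_e$ of the neighbourhood, and symmetrically for $B_e$.

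The heart of the argument is that weak sparsity upgrades each such cut to a bounded-size \emph{vertex} separator. If a class $X_\alpha\subseteq A_e$ has $|X_\alpha|\ge t$ and $|N_\alpha|\ge t$, then $X_\alpha\cup N_\alpha$ carries a complete bipartite subgraph $K_{t,t}$; so, $G$ being $K_{t,t}$-free, every class is either \emph{thin} ($|X_\alpha|<t$) or has \emph{small trace} ($|N_\alpha|<t$). Let $S_e$ be the union, over both sides of the cut, of all thin classes together with all (small) traces of the non-thin classes. Then $|S_e|\le 4\cdot 2^k\,t$, and there is no edge of $G$ between $A_e\setminus S_e$ and $B_e\setminus S_e$: if $v\in A_e\setminus S_e$ belongs to $X_\alpha$, then $X_\alpha$ is non-thin, hence $N_\alpha$ is small, hence $N_\alpha\subseteq S_e$, while every $B_e$-neighbour of $v$ lies in $N_\alpha$. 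So every cut of the rank-decomposition is split by at most $4\cdot 2^k\,t$ vertices.

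To finish (3) I would invoke the standard fact that a graph carrying a branch-decomposition in which every cut is split by at most $w$ vertices has treewidth $O(w)$; concretely, via brambles: orient each edge $e$ of the decomposition tree towards a side that contains a bramble element avoiding $S_e$ (possible whenever the bramble has order $>w$), take a node $z$ all of whose $\le 3$ incident edges point inward, and observe that a bramble element avoiding $S_{e_1}\cup S_{e_2}\cup S_{e_3}$ (of size $\le 3w$) would have to be empty. Hence $G$ has no bramble of order $>12\cdot 2^k t$, so $\mathrm{tw}(G)=O(2^k t)$ uniformly over $\Cc$. For (2), the same computation applied to the prefix bipartitions of a linear rank-decomposition gives a separator $S_i$ of size $\le 4\cdot 2^k t$ for every prefix, and one concludes $\mathrm{pw}(G)=O(2^k t)$ (pathwidth being the vertex-separation number). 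The one point that genuinely needs care here — and where I expect to have to follow \cite{Gurski2000} — is that the ordering witnessing bounded linear rankwidth is in general a poor path-layout (a star already shows that prefix boundaries can be huge although prefix separators are tiny), so the bounded path-decomposition must be rebuilt, not just read off.

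Item (1) needs a different idea, as there is no tree-decomposition to shrink. I would use that bounded shrubdepth means every $G\in\Cc$ carries a depth-$h$ tree-model with a bounded label set (for some fixed $h$), and induct on $h$. If $V_1,\dots,V_p$ are the vertex sets of the subtrees hanging from the root, each $G[V_i]$ is weakly sparse with a depth-$(h-1)$ tree-model, so by induction $\mathrm{td}(G[V_i])\le c$ for a bound $c$ independent of $i$; the edges of $G$ whose least common ancestor is the root form a blow-up of a fixed graph on the label set, and $K_{t,t}$-freeness forces this blow-up to degenerate — for every adjacent pair of labels, all but fewer than $t$ of the vertices of one of the two labels sit in a single subtree $V_i$. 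Consequently, after deleting a bounded exceptional set $S$ (the ``spread'' vertices), the surviving root-level edges reach only a bounded number of distinguished subtrees, and $\mathrm{td}(G)\le|S|+\mathrm{td}(G-S)$ with $G-S$ back in the inductive regime; this is the route taken in \cite{SBE_drops}. Here the obstacle is the concentration lemma itself: making ``degenerate blow-up'' precise while keeping $|S|$ bounded independently of the (unbounded) number $p$ of subtrees — which, together with the separators-to-width translation needed in (2)–(3), is where essentially all the work sits.
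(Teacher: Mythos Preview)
The paper does not prove this theorem at all: it is stated as a collection of known results, with each item carrying a citation (\cite{SBE_drops} for (1), \cite{Gurski2000} for (2) and (3)) and no argument following. So there is no ``paper's own proof'' to compare your proposal against; what you have written is an independent proof sketch of cited material.

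That said, your route for (2) and (3) is essentially the Gurski--Wanke argument recast in the language of rank decompositions rather than clique-width expressions. The key step --- that in a $K_{t,t}$-free graph a cut of $\mathrm{GF}(2)$-rank at most $k$ admits a vertex separator of size $O(2^k t)$, because each of the at most $2^k$ neighbourhood classes is either thin or has a thin trace --- is exactly the engine of their proof, and your bramble argument for converting edge-separators on a subcubic tree into a treewidth bound is correct. You are also right to flag (2) as the place needing care: having separators $S_i$ of bounded size for every prefix of a fixed ordering does \emph{not} literally give bounded vertex-separation for that ordering, and the path-decomposition has to be rebuilt. The cleanest fix is to pass through linear clique-width (bounded linear rankwidth $\Rightarrow$ bounded linear clique-width $\Rightarrow$, with $K_{t,t}$-freeness, bounded pathwidth as in \cite{Gurski2000}); alternatively one can run a linear cops-and-robber argument directly on the sequence $(S_i)$.

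For (1) your inductive scheme on the depth of a connection model is the natural one, and you correctly identify the concentration step as where the content is. Your informal statement of that step (``for every adjacent pair of labels, all but fewer than $t$ vertices of one label sit in a single subtree'') is not quite the right invariant: what $K_{t,t}$-freeness actually gives is that for any subtree $V_i$, one of $|A\setminus V_i|$ and $|B\cap V_i|$ is below $t$ (where $A,B$ are the two label classes), and turning this two-sided constraint into a bounded exceptional set $S$ independent of the number $p$ of subtrees needs a short additional argument. This is exactly the lemma one has to extract from \cite{SBE_drops}; your outline is on the right track but would not yet compile into a proof without it.
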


We call a class {\em sparsifiable} if it is transduction-equivalent to
a weakly sparse class.

\pagebreak

Importance of weakly sparse classes are witnessed by numerous
result. Among them, let us cite
\begin{itemize}
\item The $k$-Dominating Set problem is fixed parameter tractable
  (FPT) and has a polynomial kernel for any weakly sparse class
  \cite{philip2009solving}.
\item Connected $k$-Dominating Set, Independent $k$-Dominating Set and
  Minimum Weight $k$-Dominating Set are FPT, when parameterized by
  $t+k$ (where $t$ is the output size) \cite{telle2012fpt}.
\item Dominating Set Reconfiguration is FPT on weakly sparse classes
  \cite{LOKSHTANOV2018122}.
\item For every graph $H$ and for weakly sparse class $\Cc$ there
  exists $d\in\mathbb N$ such that every graph $G\in\Cc$ with average
  degree at least $d$ contains an induced subdivision of $H$
  \cite{kuhn2004induced}. This result has further been strengthened as
  follows: every weakly sparse class that excludes an induced
  subdivision of some graph $H$ has bounded expansion
  \cite{DVORAK2018143}.
\end{itemize}

The assumption that a class is weakly sparse allows frequently to work
with induced subgraph instead of subgraphs. For instance:

\begin{theorem}[Dvo\v r\'ak \cite{DVORAK2018143}]
  A hereditary weakly sparse class $\Cc$ has bounded expansion if and
  only if there exists a function
  $f\colon\mathbb N\rightarrow\mathbb N$ such that for every graph
  $H$, if the $\leq k$-subdivision of $H$ belongs to $\Cc$ then the
  average degree of~$H$ is at most $f(k)$.
\end{theorem}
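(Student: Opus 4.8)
The plan is to treat the two directions separately: the forward implication is a one-line consequence of the definition of bounded expansion via shallow topological minors and needs neither hereditariness nor weak sparsity, while the converse is the substantial direction. The hard part will be a ``cleaning'' step which turns a subdivision living inside a member of $\Cc$ into an \emph{induced} such subdivision; this is exactly where the $K_{t,t}$-free assumption enters and where I would lean on the work of K\"uhn and Osthus~\cite{kuhn2004induced}.

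For the forward direction I would argue as follows. Suppose $\Cc$ has bounded expansion, witnessed by a function $g\colon\N\to\N$ with $\|M\|/|M|\le g(\rho)$ for every depth-$\rho$ topological minor $M$ of a graph in $\Cc$. If some $\le k$-subdivision $H'$ of a graph $H$ belongs to $\Cc$, then each edge of $H$ becomes in $H'$ a path of length at most $k+1\le 2\lceil k/2\rceil+1$, so $H'$ is itself a $\le 2\lceil k/2\rceil$-subdivision of $H$ and $H$ is a depth-$\lceil k/2\rceil$ topological minor of $H'\in\Cc$. Hence $\|H\|/|H|\le g(\lceil k/2\rceil)$, and I would take $f(k):=2\,g(\lceil k/2\rceil)$, which bounds the average degree of $H$ as required.

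For the converse I would assume that such an $f$ exists and fix $t$ so that no member of $\Cc$ has a subgraph isomorphic to $K_{t,t}$, then bound, for each fixed $r$, the density of depth-$r$ topological minors of members of $\Cc$. Given $G\in\Cc$ and a densest depth-$r$ topological minor $H$ of $G$, I would fix a subgraph $S\subseteq G$ isomorphic to a $\le 2r$-subdivision of $H$ and set $W:=G[V(S)]$; since $\Cc$ is hereditary, $W\in\Cc$, and $W$ has no $K_{t,t}$ subgraph. The crux is a \emph{cleaning lemma}: there are $k_0=k_0(r,t)$ and $c=c(r,t)$ such that $W$ contains, as an \emph{induced} subgraph, a $\le k_0$-subdivision of some graph $H'$ with $\|H'\|/|H'|\ge\bigl(\|H\|/|H|\bigr)/c$. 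Granting it, the induced copy $W'$ lies in $\Cc$ by hereditariness and is a $\le k_0$-subdivision of $H'$, so the hypothesis on $f$ gives that the average degree of $H'$ is at most $f(k_0)$; hence $\|H\|/|H|\le c\,f(k_0)/2$, a bound independent of $G$, and since $r$ was arbitrary $\Cc$ has bounded expansion.

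The remaining task --- and the main obstacle --- is the cleaning lemma. Writing $B\subseteq V(S)$ for the branch vertices (a copy of $V(H)$) and $P_e$ for the path of $S$ realising each $e\in E(H)$, I would first homogenise path lengths: some $j_0\in\{0,\dots,2r\}$ is the number of subdivisions of at least $\|H\|/(2r+1)$ edges, and restricting $H$ to these edges and discarding isolated branch vertices yields a subgraph $H_1$ with $\|H_1\|/|H_1|\ge\bigl(\|H\|/|H|\bigr)/(2r+1)$ whose canonical $j_0$-subdivision $S_1$ sits inside $W$ as a subgraph. If $j_0=0$ I am already done with $k_0=0$ and $c=2r+1$, since $H':=W[V(H_1)]$ is induced in $W$ and at least as dense as $H_1$. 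For $j_0\ge 1$ the obstruction is the chords of $W$ among $V(S_1)$ --- between two branch vertices, between a branch vertex and a path, and within or across paths --- which prevent $W[V(S_1)]$ from being a subdivision. Here I would run a K\"uhn--Osthus style surgery: the absence of a $K_{t,t}$ subgraph forces the chord structure to be locally sparse, so that by a pigeonhole selection one can retain a positive fraction of the paths together with a suitable branch subset, possibly re-routing through boundedly many other paths, and obtain an induced subgraph of $W$ that is again a subdivision of a minor of $H_1$, losing only a factor bounded by a function of $r$ and $t$ in the density and only boundedly many subdivisions per edge. Carrying this surgery out precisely --- in particular quantifying, via the $K_{t,t}$-freeness, how many paths and branch vertices must be sacrificed --- is the technical heart; the same lemma is what powers the strengthened K\"uhn--Osthus theorem of Dvo\v r\'ak~\cite{DVORAK2018143} quoted above, from which this characterisation is taken.
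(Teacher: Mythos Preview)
The paper does not actually prove this theorem: it is quoted as a result of Dvo\v r\'ak with a citation to~\cite{DVORAK2018143} and no argument is supplied (the paper only proves the analogous statement for nowhere dense classes, via the Ramsey-type lemma that follows). So there is no ``paper's own proof'' to compare against.

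Regarding your proposal on its own: the forward direction is correct and complete. For the converse, your reduction to a cleaning lemma --- take the induced subgraph $W=G[V(S)]\in\Cc$, then extract from it an \emph{induced} $\le k_0$-subdivision of some $H'$ whose density is within a factor $c(r,t)$ of that of $H$ --- is exactly the right shape, and the $j_0=0$ base case is handled correctly. But you do not carry out the surgery for $j_0\ge 1$; you only describe what it should accomplish and then point back to~\cite{DVORAK2018143}. Since that surgery \emph{is} the entire content of Dvo\v r\'ak's theorem, what you have written is an accurate plan rather than a proof. You are candid about this in your last paragraph, and given that the paper itself simply cites the result, deferring the cleaning lemma to~\cite{DVORAK2018143} is a legitimate choice here.
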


We now prove a similar characterization of nowhere dense classes.
\begin{theorem}
  A hereditary weakly sparse class $\Cc$ is nowhere dense if and only
  if there exists a function $f\colon\mathbb N\rightarrow\mathbb N$
  such that the class $\Cc$ contains no $\leq k$-subdivided clique of
  order greater than $f(k)$.
\end{theorem}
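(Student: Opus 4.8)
The plan is to prove both directions through the characterization of nowhere dense classes via topological minors at bounded depth, using the fact that $\Cc$ is weakly sparse (so $\omega(H)$ being bounded is equivalent, for the relevant $H$, to $\|H\|/|H|$ being bounded — but here we want the clean statement in terms of subdivided cliques only). First I would handle the easy direction: suppose $\Cc$ is nowhere dense, witnessed by $f_0$ with $\omega(H)\le f_0(r)$ for every depth-$r$ topological minor $H$ of a member of $\Cc$. If the $\le k$-subdivision of $K_m$ lies in $\Cc$, then $K_m$ is a depth-$k$ topological minor of that graph (a $\le k$-subdivision is a $\le 2r$-subdivision with $r=k$, hence depth $k$), so $\omega(K_m)=m\le f_0(k)$. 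Thus $f(k):=f_0(k)$ works, and since $\Cc$ is hereditary the $\le k$-subdivision itself being in $\Cc$ is all we need to invoke.

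For the converse, assume such an $f$ exists and suppose for contradiction that $\Cc$ is somewhere dense, i.e.\ there is an $r$ such that $\Cc$ contains depth-$r$ topological minors $H$ with $\omega(H)$ arbitrarily large. Fix this $r$. The key step is to extract, inside a single graph $G\in\Cc$, a large subdivided clique as an \emph{induced} subgraph, using weak sparseness to pass from "contains a $\le 2r$-subdivision of $K_m$ as a subgraph" to "contains a $\le 2r$-subdivision of $K_{m'}$ as an induced subgraph" with $m'$ still large. Here is where I would be careful: a subgraph that is a subdivision of $K_m$ may have chords, but since $\Cc$ excludes the biclique $K_{t,t}$ as a subgraph, the branch vertices of the subdivision have bounded degree-to-each-path, and more to the point, a Ramsey/cleaning argument on the pairs of branch vertices lets us select $m'\gg 0$ branch vertices whose connecting paths form an induced $\le 2r$-subdivision of $K_{m'}$: for each pair of the chosen paths we need the two paths to be non-adjacent except at shared branch vertices, and we need no extra chords within a path; weak sparseness bounds how badly this can fail, so a bounded-size "forbidden configuration" can be avoided by passing to a subset via Ramsey's theorem (applied to a colouring of pairs, or $2r$-tuples, of branch vertices). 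Since $\Cc$ is hereditary, this induced $\le 2r$-subdivision of $K_{m'}$ is itself a member of $\Cc$, contradicting the bound $m'\le f(2r)$ once $m'>f(2r)$.

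The main obstacle is precisely the cleaning step: turning an arbitrary subgraph-subdivision of a large clique into an \emph{induced} one of a slightly smaller clique, within the same (hereditary) class, while only using weak sparseness. The ingredients are (i) bounding the number of "short" connections between two branch-to-branch paths using that no large biclique occurs as a subgraph, so that a vertex of one path sees only boundedly many vertices of another path — giving a sparse "conflict structure" on the set of paths; and (ii) a Ramsey-type selection: colour each pair (or each relevant small tuple) of branch vertices by the isomorphism type of the induced subgraph on the union of their paths, intersected down to the conflict pattern, and extract a monochromatic clique among the branch vertices whose colour is the "clean" pattern — the point being that the clean pattern must appear with positive density, which again follows from weak sparseness bounding the total number of conflicts. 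This mirrors the structure of Dvo\v r\'ak's argument (the theorem quoted just above) and of the Kühn--Osthus induced-subdivision result cited in the excerpt, and I would expect the proof in the paper to follow that template, with the weak-sparseness hypothesis doing exactly the work of ruling out the dense "conflict" patterns so that only the induced subdivision survives the Ramsey cleaning.
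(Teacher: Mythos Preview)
Your overall plan is the same as the paper's: the forward direction is immediate, and for the converse you want a Ramsey-type cleaning that turns a $\le p$-subdivision of a large clique appearing as a \emph{subgraph} into an induced $q$-subdivision of a still-large clique, using weak sparseness. The paper packages this as a separate lemma (for all $t,p,n$ there is $N$ so that a $K_{t,t}$-free graph with no induced $q$-subdivision of $K_{4t}$ for any $q\le p$ contains no $\le p$-subdivision of $K_N$ as a subgraph), and the theorem is then a one-line consequence.

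Where your write-up goes astray is in the mechanism by which weak sparseness enters. You say you will ``extract a monochromatic clique \dots\ whose colour is the `clean' pattern --- the point being that the clean pattern must appear with positive density.'' But Ramsey does not let you choose the colour of the homogeneous set, and a density lower bound on the clean pattern is neither what weak sparseness gives you nor what you need. The correct logic (and what the paper does) is the opposite: apply Ramsey first --- colouring $3$-tuples and $4$-tuples of principal vertices by the isomorphism type of the ordered subgraph induced on their branch paths --- to obtain a set $X$ of size $4t$ on which all $3$-tuple types and all $4$-tuple types agree; \emph{then} argue by case analysis that this common type must be the clean one. If the $3$-tuple type is not a cycle, a fixed extra edge pattern repeats across all triples from $X$, and by choosing one index fixed and two indices free one reads off a $K_{t,t}$ subgraph. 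If all $3$-tuple types are cycles but the $4$-tuple type has a stray edge between two disjoint branch paths, the same trick (fix two indices, vary the other two) again produces a $K_{t,t}$. Both contradict weak sparseness, so the homogeneous type is forced to be the induced $q$-subdivision of $K_4$, and hence $X$ carries an induced $q$-subdivision of $K_{4t}$.

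So your ingredient (ii) is right in spirit (Ramsey on small tuples coloured by local isomorphism type), but your ingredient (i) --- bounding how many vertices of one path see another path --- is not how the $K_{t,t}$-freeness is actually used; it is used \emph{after} Ramsey, to eliminate every non-clean homogeneous type. Rewriting your converse with this case analysis in place of the density claim gives exactly the paper's proof.
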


This theorem directly follows from the next lemma.

\begin{lemma}
  For all integers $t,p,n$ there exists an integer $N$ such that if a
  graph $G$ contains no $K_{t,t}$ as a subgraph and no induced
  $q$-subdivision of $K_{4t}$ (for any $q\leq p$), then it contains no
  $\leq p$-subdivision of $K_N$ as a subgraph.
\end{lemma}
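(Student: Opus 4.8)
The plan is to prove the contrapositive: for $N$ large enough in terms of $t$ and $p$ (the parameter $n$ plays no role), if $G$ is $K_{t,t}$-free and contains a $\le p$-subdivision of $K_N$ as a subgraph, then $G$ contains an induced $q$-subdivision of $K_{4t}$ for some $q\le p$. Fix such a subdivision $S\subseteq G$ with branch vertices $b_1,\dots,b_N$ (in a fixed order) and, for $i<j$, a connecting path $P_{ij}$ from $b_i$ to $b_j$ with at most $p$ internal vertices, listed in order from the $b_i$-end to the $b_j$-end.

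First I would apply Ramsey's theorem to the $4$-element subsets of $\{1,\dots,N\}$, colouring $\{i_1<i_2<i_3<i_4\}$ by the isomorphism type -- with all vertices labelled by their role (which branch vertex; or which path and which position along it) -- of the subgraph of $G$ induced on $b_{i_1},\dots,b_{i_4}$ together with the internal vertices of the six paths $P_{i_si_{s'}}$. This uses boundedly many colours (padding path lengths up to $p$), so for $N$ large we obtain a large $J\subseteq\{1,\dots,N\}$ on which this ``$4$-type'' is constant. Constancy forces: all paths $P_{ij}$ with $i,j\in J$ have a common length $q+1$; the branch vertices $\{b_i:i\in J\}$ are either pairwise adjacent or pairwise non-adjacent; and the presence of any edge not belonging to the subdivision and touching at most four of the $b_i$ is the same for every index tuple of the same order type. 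If $q=0$ or the branch vertices are pairwise adjacent, then $\{b_i:i\in J\}$ spans a clique of size $\ge 2t$, hence contains $K_{t,t}$ -- a contradiction; so $q\ge 1$ and the branch vertices are independent.

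Next I would run a rerouting loop. Call an edge of $G$ a \emph{shortening chord} of the current (uniform) configuration if it joins some $b_i$ to a non-endpoint internal vertex of $P_{ij}$, or joins two internal vertices of a single $P_{ij}$ at distance $\ge 2$ along that path. By uniformity such an edge occurs, at the same positions, on every path $P_{ij}$; replacing every path simultaneously by the corresponding shortcut yields again a uniform configuration, now a $q'$-subdivision of $K_{|J|}$ with $q'<q$ and still independent branch vertices. Since a path of length $2$ has no shortening chord, the loop terminates with a uniform $q^\ast$-subdivision $S'$ of $K_{|J|}$, $1\le q^\ast\le p$, with no shortening chord. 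It remains to exclude the other conceivable non-subdivision edges of $S'$: (i) an edge from $b_i$ to an internal vertex of $P_{jk}$ with $i\notin\{j,k\}$; (ii) an edge between an internal vertex of $P_{ij}$ and one of $P_{ik}$ (two paths sharing a branch vertex); (iii) an edge between an internal vertex of $P_{ij}$ and one of $P_{kl}$ with $\{i,j\}\cap\{k,l\}=\emptyset$. In each of the finitely many order types of such an edge, uniformity makes that edge appear for \emph{every} index tuple of that type; choosing an increasing sequence of at most $4t$ indices from $J$ in a suitably interleaved pattern then joins a fixed family of $t$ of the relevant internal (or branch) vertices on one ``side'' completely to a fixed family of $t$ on the other ``side'', producing a $K_{t,t}$ in $G$ -- a contradiction. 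For type (i) one places the ``attacking'' branch index as the smallest, middle, or largest element of the $3$-sets involved; for type (ii) one similarly fixes the shared index as smallest, middle or largest; and for type (iii) one takes four consecutive blocks of $t$ indices and forms the ``left'' and ``right'' pairs from the appropriate blocks so that each relevant $4$-set appears in the required order. With all chords excluded, $G$ restricted to $V(S')$ equals $S'$; hence restricting further to any $4t$ of the branch vertices (and the paths among them) gives the desired induced $q^\ast$-subdivision of $K_{4t}$, contradicting the hypothesis.

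The value $N$ is whatever the single $4$-uniform Ramsey step (with its bounded palette) requires in order that $|J|\ge 4t$. The termination of the rerouting loop and the Ramsey reduction are routine; the main work -- and the only place weak sparsity is used -- is the last step, where one must organise the interleaving of indices so that each order type of a ``bad'' edge is genuinely certified by a $K_{t,t}$. The disjoint-paths case (iii) is the costly one: in its worst shape the ``left'' and ``right'' pairs must be nested or interleaved, which forces four separate blocks of $t$ indices and thereby the constant $4t$.
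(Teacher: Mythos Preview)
Your proposal is correct and follows essentially the same route as the paper: a Ramsey step on tuples of branch vertices to make the local isomorphism type uniform, followed by a case analysis showing that any extra edge (your cases (i), (ii), (iii), which are exactly the paper's $v_1u_{2,3,a}$, $u_{1,2,a}u_{1,3,b}$, and $u_{1,2,a}u_{3,4,b}$) yields a $K_{t,t}$. The only cosmetic difference is that the paper shortcuts each branch to an induced path \emph{before} applying Ramsey, whereas you do it \emph{after} via your rerouting loop while checking that uniformity is preserved; and the paper hides the various order types behind ``without loss of generality'' where you enumerate them more explicitly.
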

\begin{proof}
  Assume that $G$ contains no $K_{t,t}$ as a subgraph but contains a
  $\leq p$-subdivision of a large complete graph $K_N$ as a
  subgraph. We can first assume by Ramsey's theorem that $G$ contains
  an exact $q$-subdivision of $K_N$ (for some $q\leq p$). Of course
  $q>1$, for otherwise the ``subdivision'' is induced. Also we can
  assume that each branch of the subdivision is an induced path (for
  otherwise we consider a shorter path).
	
  Let $v_1,\dots,v_N$ be the principal vertices of the $K_N$, and let
  $u_{i,j,k}$ (for $1\leq i<j\leq N$ and $1\leq k\leq q$) be the $k$th
  vertex on the path of length $q+1$ linking $v_i$ to $v_j$ in the
  considered $q$-subdivision of $K_N$.  To every $3$-tuple $(a,b,c)$
  (resp. every $4$-tuple $(a,b,c,d)$) of distinct integers in $[N]$
  (with $a<b<c$, resp. $a<b<c<d$) we associate its type, which is the
  isomorphism type of the (vertex ordered) graph induced by
  $v_a,v_b,v_c$ (resp.\ $v_a,v_b,v_c,v_d$) and the paths of length
  $q+1$ linking these vertices.  By Ramsey's theorem, assuming $N$ is
  sufficiently large, we can extract a subset $X$ of order $4t$ of
  $[N]$, such that all the types of $3$-tuples of elements in $X$ are
  the same and that all the types of $4$-tuples of elements in $X$ are
  the same.  We partition $X$ into $4$ subsets $A,B,C,D$ of order $t$,
  with elements in~$A$ smaller than those in $B$ smaller than those in
  $C$ smaller than those in $D$.
	
  Assume that the type of $3$-tuples is not a cycle. Without loss of
  generality, the type of $(1,2,3)$ contains an edge $v_1,u_{2,3,a}$
  or an edge $u_{1,2,a},u_{1,3,b}$.
  \begin{center}
    \includegraphics[height=.15\textheight]{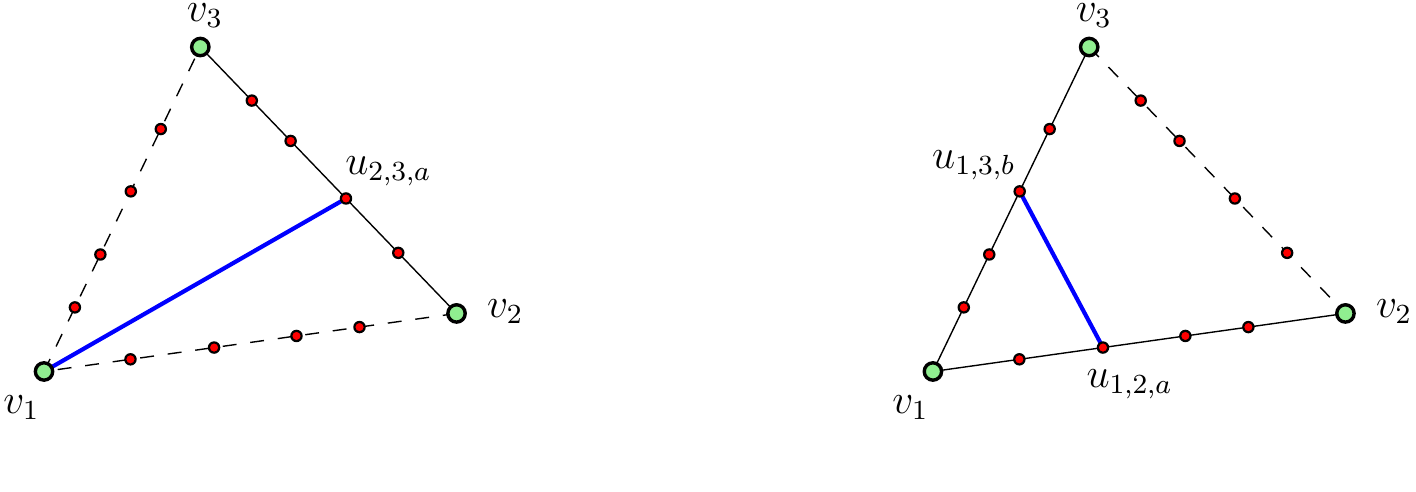}
	\end{center}
	In the first case, choose independently $i\in A$ and $j\in B$
        and fix $k\in C$. Then the vertices $v_i$ and $u_{j,k,a}$
        define a $K_{t,t}$-subgraph.  In the second case, fix $i\in A$
        and choose independently $j\in B$ and $k\in C$. Then the
        vertices~$u_{i,j,a}$ and~$u_{i,k,b}$ define a large complete
        bipartite subgraphs and we conclude as above.
	
        \pagebreak In the case the type of $4$-tuples is not the
        $q$-subdivision of a $K_4$ and that the type of every
        $3$-tuple is a cycle, we can assume without loss of generality
        that the type of $(1,2,3,4)$ contains an edge
        $u_{1,2,a} u_{3,4,b}$.
        \begin{center}
          \includegraphics[height=.15\textheight]{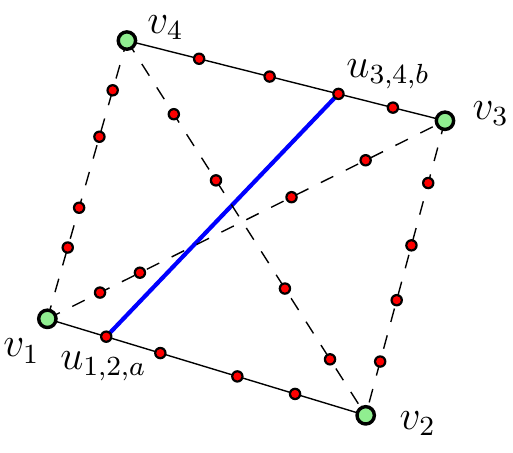}
	\end{center}
	Fix $i\in A$ and $\ell\in D$ and let $j\in B$ and $k\in
        C$.
        Then the vertices $u_{i,j,a}$ and $u_{k,\ell,b}$ define a
        $K_{t,t}$ subgraph.
	
	We deduce that the $q$-subdivision of the clique $K_{4t}$
        defined by $X$ is induced.
\end{proof}

\begin{corollary}
  Let $\Cc$ be a monadically NIP class.  Then $\Cc$ is nowhere dense
  if and only if it is weakly sparse.
\end{corollary}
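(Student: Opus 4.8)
The plan is to prove the two implications separately; the direction ``nowhere dense $\Rightarrow$ weakly sparse'' is immediate, and the other carries all the content. For the easy direction I argue contrapositively: if $\Cc$ is not weakly sparse it contains $K_{t,t}$ as a subgraph for every $t$. The $1$-subdivision of $K_m$ is bipartite with sides of sizes $m$ and $\binom m2$, and in $K_{t,t}$ every vertex of each side is adjacent to every vertex of the other, so $K_{t,t}$ contains the $1$-subdivision of $K_m$ as a subgraph whenever $t\ge\binom m2$. Hence $\Cc$ contains, as subgraphs, $1$-subdivisions --- thus $\le 2$-subdivisions, i.e.\ depth-$1$ topological minors --- of complete graphs of unbounded order, so $\Cc$ is not nowhere dense.

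For the direction ``monadically NIP and weakly sparse $\Rightarrow$ nowhere dense'', assume $\Cc$ has no $K_{t_0,t_0}$ subgraph and, toward a contradiction, that $\Cc$ is not nowhere dense; unwinding the definition, there is an integer $p$ such that for every $N$ some graph of $\Cc$ contains a $\le p$-subdivision of $K_N$ as a subgraph. The first step is to upgrade this to arbitrarily large \emph{induced} subdivided cliques. Given $m$, put $t:=\max(t_0,\lceil m/4\rceil)$; since $K_{t,t}\supseteq K_{t_0,t_0}$ the class $\Cc$ also has no $K_{t,t}$ subgraph, so the preceding lemma, applied with parameters $t$ and $p$, yields an $N$ such that any graph without a $K_{t,t}$ subgraph and without an induced $q$-subdivision of $K_{4t}$ for any $q\le p$ contains no $\le p$-subdivision of $K_N$ as a subgraph. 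Taking $G\in\Cc$ that contains a $\le p$-subdivision of $K_N$, it follows that $G$ contains an induced $q$-subdivision of $K_{4t}$, hence of $K_m$, for some $q\le p$. By the pigeonhole principle a single value $q\in\{0,1,\dots,p\}$ occurs for arbitrarily large $m$, and $q\ge 1$ because an induced $0$-subdivision of $K_m$ is just $K_m$, which for $m\ge 2t_0$ contains $K_{t_0,t_0}$. So, for this fixed $q\ge 1$, every exact $q$-subdivision of a complete graph occurs as an induced subgraph of some member of $\Cc$.

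Let $\Dd$ be the class of all exact $q$-subdivisions of complete graphs. Then $\Dd$ is a transduction of $\Cc$ --- mark by a unary predicate the vertex set of the relevant induced copy and restrict the structure to it --- and, since the composition of transductions is a transduction, it is enough to show that $\Dd$ transduces the class $\Gg$ of all finite graphs, as this would give a transduction $\Cc\to\Gg$, contradicting that $\Cc$ is monadically NIP. Given a graph $H$ on $\{1,\dots,m\}$ with $m\ge 4$, I start from the exact $q$-subdivision of $K_m$, with principal vertices $v_1,\dots,v_m$, and mark, for every edge $ij\in E(H)$, one internal vertex of the length-$(q+1)$ path joining $v_i$ and $v_j$. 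The principal vertices are exactly the vertices of degree at least $3$ (each subdivision vertex has degree $2$), hence first-order definable; and since $q$ is fixed, ``some marked vertex lies on the subdivision path between $v_i$ and $v_j$'' is first-order expressible, for instance as ``some marked vertex is reachable from both $v_i$ and $v_j$ by a path of length at most $q+1$ all of whose internal vertices are non-principal'' --- this equivalence holds because the non-principal vertices of the subdivided $K_m$ induce a disjoint union of paths, one for each pair, and the only path incident to both $v_i$ and $v_j$ is the $v_i$--$v_j$ one. Interpreting on the principal vertices, with this formula (conjoined with $x\neq y$) as the edge relation, reconstructs exactly $H$. Hence $\Dd$ transduces $\Gg$, the desired contradiction, and therefore $\Cc$ is nowhere dense.

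The main obstacle is the middle step, turning a large subdivided clique appearing as a \emph{subgraph} into one appearing as an \emph{induced} subgraph; this is exactly where weak sparseness is used, via the preceding lemma together with the elementary observation that excluding one biclique excludes all larger ones, which lets the fixed constant $4t$ in that lemma grow without bound. The remaining work is routine transduction bookkeeping, the only delicate point being that for a \emph{fixed} subdivision length the ``lies on the path'' relation is first-order --- a property that fails if the subdivision length is allowed to vary, which is why fixing $q$ by pigeonhole is essential.
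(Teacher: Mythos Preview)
Your proof is correct and follows essentially the same route as the paper: contradict nowhere density to get $\le p$-subdivided cliques as subgraphs, apply the preceding lemma to upgrade to \emph{induced} $q$-subdivided cliques of unbounded size for some fixed $q\ge 1$, and then transduce all finite graphs from these, contradicting monadic NIP. The paper's version is terser (it simply says ``it is then easy to interpret (in a monadic lift) arbitrary graphs''), whereas you spell out the transduction explicitly and also supply the easy forward direction, which the paper leaves implicit; but the strategy is the same.
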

\begin{proof}
  Assume towards a contradiction that the class $\Cc$ weakly sparse
  and not not nowhere dense. Then there is an integer $p$ such that we
  can find in graphs in $\Cc$ some $\leq p$-subdivisions of
  arbitrarily large cliques. According to the previous lemma we can
  find arbitrarily large induced $q$-subdivisions of cliques for some
  $1<q\leq p$. It is then easy to interpret (in a monadic lift)
  arbitrary graphs, contradicting the hypothesis that~$\Cc$ is
  monadically~NIP.
	\end{proof}

\begin{corollary}
  Every sparsifiable monadically NIP class of graphs is structurally
  nowhere dense.
\end{corollary}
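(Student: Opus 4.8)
The plan is to bootstrap from the preceding corollary, which already handles the weakly sparse case, so that the work reduces to exhibiting a nowhere dense class that transduces onto $\Cc$. First I would unpack the hypothesis: since $\Cc$ is sparsifiable, fix a weakly sparse class $\Dd$ that is transduction-equivalent to $\Cc$, i.e.\ with transductions witnessing $\Cc\twoheadrightarrow\Dd$ and $\Dd\twoheadrightarrow\Cc$. The conclusion ``$\Cc$ is structurally nowhere dense'' means precisely that $\Cc$ is a transduction of some nowhere dense class, and $\Dd$ is the obvious candidate since $\Dd\twoheadrightarrow\Cc$; thus it suffices to prove that $\Dd$ is nowhere dense.

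The key step is to observe that $\Dd$ inherits monadic dependence from $\Cc$. Suppose $\Dd$ admitted a transduction onto the class $\Gg$ of all finite graphs. Precomposing with the transduction witnessing $\Cc\twoheadrightarrow\Dd$ and using that a composition of two transductions is again a transduction, we would obtain a transduction witnessing $\Cc\twoheadrightarrow\Gg$, contradicting that $\Cc$ is monadically NIP (by the criterion, recalled in the excerpt, that a class is monadically NIP if and only if it does not transduce onto $\Gg$). Hence $\Dd$ does not transduce onto $\Gg$, that is, $\Dd$ is monadically NIP. Since $\Dd$ is also weakly sparse, the preceding corollary yields that $\Dd$ is nowhere dense, and therefore $\Cc$, being a transduction of $\Dd$, is structurally nowhere dense.

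I do not expect a genuine obstacle here: the entire substance sits in the preceding corollary (and, through it, in the Ramsey-type lemma on subdivided cliques). The only points needing a little care are keeping the two directions of the transductions straight --- one direction to push monadic dependence from $\Cc$ down to $\Dd$, the other to pull nowhere density from $\Dd$ back up to $\Cc$ --- and the routine appeal to closure of transductions under composition, which I would spell out on the spot rather than cite, exactly as above.
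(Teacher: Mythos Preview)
Your proposal is correct and matches the intended argument: the paper states this corollary without proof, as it follows immediately from the preceding corollary by exactly the two-step reasoning you describe (push monadic NIP to the weakly sparse class via one transduction, then pull nowhere density back via the other).
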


\begin{figure}
	\begin{center}
	\includegraphics[width=.6\textwidth]{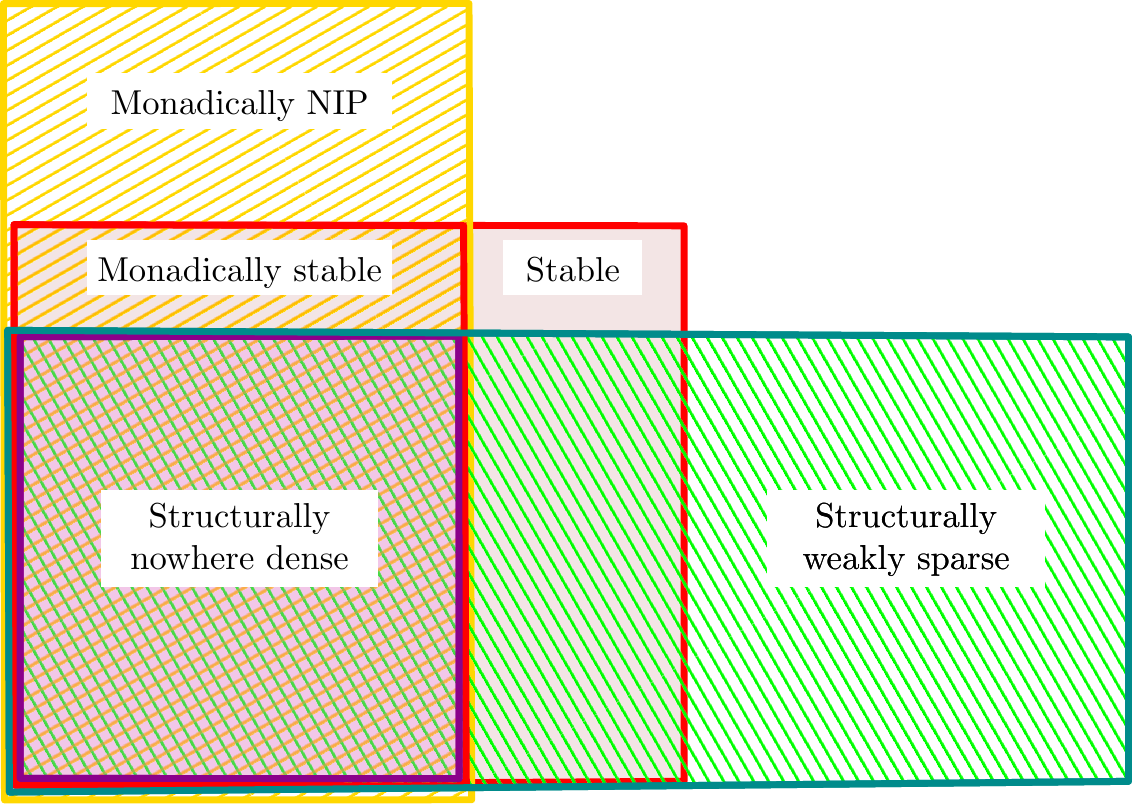}	
        \caption{A class is monadically stable if and only if it is
          both monadically NIP and stable; it is structurally nowhere
          dense if and only if it is both monadically NIP and
          structurally weakly sparse. No class is currently known,
          which is monadically stable but not structurally nowhere
          dense.}
	\label{fig:Univ}
	\end{center}
\end{figure}

\subsection{Decompositions and covers}

\smallskip \noindent For $p \in \mathbb N$, a \emph{$p$-cover} of a
structure $\mathbf A$ is a family $\cal U_{\mathbf A}$ of subsets of
$V(\mathbf A)$ such that every set of at most $p$ elements
of~$\mathbf A$ is contained in some $U \in \cal U_{\mathbf A}$.
If~$\Cc$ is a class of structures, then a {\em $p$-cover} of $\Cc$ is
a family $\cal U=(\cal U_{\mathbf A})_{\mathbf A\in \Cc}$,
where~$\cal U_{\mathbf A}$ is a $p$-cover of $\mathbf A$.  A $1$-cover
is simply called a {\em cover}.  A $p$-cover $\cal U$ is \emph{finite}
if $\sup\{|\cal U_{\mathbf A}|\mid \mathbf A\in \Cc\}$ is finite.
Let~$\Cc[\cal U]$ denote the class structures
$\{\mathbf A[U]\mid \mathbf A\in\Cc, U\in \cal U_{\mathbf A}\}$. For a
class $\Ww$ we say that a cover $\cal U$ is a $\Ww$-cover if
$\Cc[\cal U]\subseteq \Ww$. If $\Ww$ is a class of bounded treedepth,
bounded shrubdepth, etc., we call a $\Ww$-cover a bounded treedepth
cover, bounded shrubdepth cover, etc.  The class $\Cc$ admits
\emph{low treedepth covers}, \emph{low shrubdepth covers}, etc.\ if
and only if for every $p\in \N$ there is a finite $p$-cover
$\mathcal{U}_p$ of $\Cc$ with bounded treedepth, shrubdepth, etc.

\begin{theorem}[\cite{POMNI,SBE_drops}]
  A class of graphs has bounded expansion if and only if it has low
  treedepth covers.
\end{theorem}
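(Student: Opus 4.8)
The plan is to route the equivalence through the notion of a \emph{low treedepth coloring}. Call a vertex coloring of a graph a \emph{$p$-treedepth coloring} if the union of any $i\le p$ color classes induces a subgraph of treedepth at most $i$; I would invoke the classical theorem of Ne\v{s}et\v{r}il and Ossona de Mendez that a class $\Cc$ has bounded expansion if and only if for every $p$ there is an integer $N_p$ bounding the number of colors needed for a $p$-treedepth coloring of any graph in $\Cc$. The passage from colorings to covers is then immediate: given a $p$-treedepth coloring of $G$ with classes $V_1,\dots,V_{N_p}$, the family $\{\bigcup_{i\in I}V_i\mid I\subseteq[N_p],\ |I|=p\}$ is a $p$-cover of $G$ (any $p$ vertices meet at most $p$ classes), has at most $\binom{N_p}{p}$ members, and each member induces a subgraph of treedepth at most $p$; carrying this out uniformly over $\Cc$ produces a finite $p$-cover of $\Cc$ of treedepth at most $p$. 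So it remains to prove (a) that low treedepth covers imply bounded expansion, directly, and (b) the forward implication via the coloring theorem.

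For (a), fix $r$; since the paper defines bounded expansion through the edge density of depth-$r$ topological minors, it suffices to bound $\|H\|/|H|$ over all depth-$r$ topological minors $H$ of graphs $G\in\Cc$. Set $p=2r+2$ and take the finite $p$-cover $\mathcal U$ of $G$ supplied by the hypothesis, with at most $m$ members, each inducing a subgraph of treedepth at most $d$ (here $m,d$ depend only on $p$, hence on $r$). Fix a subgraph $\widetilde H\subseteq G$ isomorphic to a $\le 2r$-subdivision of $H$; each edge $uv\in E(H)$ is represented by a path $P_{uv}$ in $\widetilde H$ from $u$ to $v$ on at most $2r+2=p$ vertices, and these paths are pairwise internally disjoint, so each $V(P_{uv})$ lies inside some member of $\mathcal U$. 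Assigning one such member to each edge and applying the pigeonhole principle, some fixed $U\in\mathcal U$ contains $V(P_{uv})$ for at least $\|H\|/m$ edges $uv$; the union of those paths is a subgraph of $G[U]$ and a $\le 2r$-subdivision of the subgraph $H_U\subseteq H$ spanned by those edges, so $H_U$ is a topological—hence ordinary—minor of $G[U]$. As treedepth is minor-monotone, $\mathrm{td}(H_U)\le\mathrm{td}(G[U])\le d$, and a graph of treedepth at most $d$ has at most $(d-1)$ times as many edges as vertices; hence $\|H\|/m\le\|H_U\|\le(d-1)\,|V(H_U)|\le(d-1)\,|H|$, so $\|H\|/|H|\le m(d-1)$, a bound depending only on $r$. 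Thus $\Cc$ has bounded expansion.

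For (b), assuming $\Cc$ has bounded expansion, I would recall the construction of low treedepth colorings via \emph{transitive fraternal augmentations}. Starting from $G=\vec G_0$, one repeatedly (i) orients the current digraph with bounded maximum in-degree (any graph is $2\nabla_0$-degenerate), and then (ii) adds a ``fraternal'' edge between any two vertices sharing an in-neighbor and a ``transitive'' arc along any directed path of length two, obtaining $\vec G_{i+1}$. The standard lemma is that after a number of rounds depending only on $p$, any proper coloring of the underlying undirected graph of $\vec G_p$ is a $p$-treedepth coloring of $G$: in a connected subgraph of $G$ spanned by at most $p$ colors some color occurs exactly once and can be peeled off as a treedepth center, so induction on the number of colors bounds the treedepth by $p$. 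Provided the grads of the augmentations stay bounded, $\vec G_p$ then has bounded degeneracy, hence bounded chromatic number $N_p$, and composing with the coloring-to-cover step of the first paragraph finishes the proof.

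The main obstacle lies entirely in (b), and within it in the inductive claim that all grads $\nabla_j(\vec G_i)$ of the augmentations remain bounded in terms of the grads of $G$ and of $i$ and $j$ only; this is the genuine combinatorial heart of the Ne\v{s}et\v{r}il--Ossona de Mendez theorem and the precise place where bounded expansion, rather than mere bounded degeneracy, is indispensable. I would treat it as a citable black box (via \cite{POMNI,SBE_drops}); alternatively one can replace the augmentation machinery by the equivalence of bounded expansion with ``$\mathrm{wcol}_t<\infty$ for all $t$'' together with the construction of a $p$-treedepth coloring with boundedly many colors out of a bound on $\mathrm{wcol}_{2^p}$, which merely relocates the same difficulty. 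Direction (a), by contrast, is essentially a one-line pigeonhole argument, using only that treedepth is minor-monotone and that bounded treedepth forces a linear number of edges.
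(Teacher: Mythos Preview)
The paper does not prove this theorem: it is stated with a citation to \cite{POMNI,SBE_drops} and no argument is given. So there is no ``paper's own proof'' to compare against; your proposal is simply a self-contained justification of a result the paper quotes as background.

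On its own merits, your argument is sound. Direction (a) (covers $\Rightarrow$ bounded expansion) is complete and correct: the pigeonhole over at most $m$ cover members, together with minor-monotonicity of treedepth and the degeneracy bound for graphs of treedepth at most $d$, yields $\|H\|/|H|\le m(d-1)$ with $m,d$ depending only on $r$. One small point worth making explicit is that the endpoints $u,v$ of each assigned path $P_{uv}$ lie in $U$ (since $u,v\in V(P_{uv})\subseteq U$), so the branch vertices of $H_U$ are indeed in $U$ and the subdivision really sits inside $G[U]$; you use this implicitly. For direction (b) you correctly identify the substantive content as the Ne\v{s}et\v{r}il--Ossona de Mendez low treedepth coloring theorem and treat it as a black box from the very references the paper cites; that is exactly the right move here, since reproving that theorem is well outside the scope of the present paper. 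The coloring-to-cover conversion you give is the standard one and is fine.
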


The following notion of \emph{shrubdepth} has been proposed
in~\cite{Ganian2012} as a dense analogue of treedepth.  Originally,
shrubdepth was defined using the notion of \emph{tree-models}. We
present an equivalent definition based on the notion of
\emph{connection models}, introduced in~\cite{Ganian2012} under the
name of {\em{$m$-partite cographs}} with bounded depth.

A {\em connection model} with labels from $\Gamma$ is a rooted labeled
tree $T$ where each leaf $u$ is labeled by a label
$\gamma(u)\in \Gamma$, and each non-leaf node $x$ is labeled by a
binary relation $C(x)\subset \Gamma\times \Gamma$.  If $C(x)$ is
symmetric for all non-leaf nodes $x$, then such a model defines a
graph $G$ on the leaves of $T$, in which two distinct leaves~$u$ and
$v$ are connected by an edge if and only if
$(\gamma(v),\gamma(v))\in C(x)$, where~$x$ is the least common
ancestor of $u$ and $v$.  We say that $T$ is a \emph{connection model}
of the resulting graph $G$.  A class of graphs $\Cc$ has {\em bounded
  shrubdepth} if there is a number $h\in \mathbb N$ and a finite set
of labels~$\Gamma$ such that every graph $G\in \Cc$ has a connection
model of depth at most $h$ using labels from $\Gamma$.

A \emph{cograph} is a graph that has a connection model (called a
\emph{cotree}) with a labels set $\Gamma$ containing only a single
label. Cographs are perfect graphs, that is, graphs in which the
chromatic number of every induced subgraph equals the clique number of
that subgraph.

\begin{theorem}[\cite{SBE_drops}]
	  A class of 
   graphs has structurally bounded expansion if and only
   if it has low shrubdepth covers.
\end{theorem}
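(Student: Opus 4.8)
The plan is to prove the two implications separately, each time by transferring covers across a first-order transduction, using the already-stated equivalence \emph{bounded expansion $\Leftrightarrow$ low treedepth covers} at the sparse end. I will use two elementary facts. First, a relational structure of treedepth at most $t$ is a first-order transduction of a rooted forest of depth at most $t$: take an elimination forest of depth $\le t$; every atomic relation of the structure is contained in the ``ancestor-or-equal'' relation, which is first-order definable in a forest of bounded depth, and the finitely many remaining choices (which ancestors carry which atomic facts) are recorded by unary predicates. Composing this with the characterisation of bounded shrubdepth as first-order transductions of bounded-depth trees shows that \emph{every class of bounded treedepth has bounded shrubdepth}; also, bounded shrubdepth is trivially preserved under induced substructures (restrict the connection model to fewer leaves). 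Second, conversely, a rooted forest of depth at most $h$ has treedepth at most $h$.

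\textbf{Low shrubdepth covers $\Rightarrow$ structurally bounded expansion.} Fix the $2$-cover: a finite cover $(\mathcal U_G)_{G\in\Cc}$, $|\mathcal U_G|\le N$, together with, for each $U\in\mathcal U_G$, a connection model $T_U$ of depth $\le h$ and labels from a fixed finite set $\Gamma$ for the graph $G[U]$, whose leaves are the vertices of $U$. Associate to $G$ the \emph{cover structure} $\mathbf M_G$ on the disjoint union $\bigsqcup_{U\in\mathcal U_G}V(T_U)$, carrying the forest structures of the $T_U$ (with root markers), unary predicates that record the leaf labels and the relations $C(x)$ of the internal nodes, and one binary relation $\sim$ joining two leaves precisely when they lie in different trees but represent the same vertex of $G$. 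The Gaifman graph of $\mathbf M_G$ is the union of a depth-$\le h$ forest (a graph of treedepth $\le h$) and the graph of $\sim$ (of maximum degree $<N$), hence lies in a fixed bounded expansion class. Finally $G$ is obtained from $\mathbf M_G$ by a first-order transduction: a monadic lift picks a set $R$ containing exactly one leaf in each ``represents-the-same-vertex'' class; the interpretation has domain the leaves in $R$, and declares $x,y\in R$ adjacent iff there are copies $x'$ of $x$ and $y'$ of $y$ (i.e.\ $x'=x$ or $x'\sim x$, likewise for $y'$) lying in a common tree $T_U$ with $(\gamma(x'),\gamma(y'))\in C(z)$ where $z$ is the least common ancestor of $x'$ and $y'$. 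This is first-order because ``is a copy of'', ``lie in a common tree'' and ``least common ancestor'' are first-order definable in a bounded-depth forest equipped with $\sim$, while the label conditions are unary. Hence $\Cc$ is a transduction of a bounded expansion class.

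\textbf{Structurally bounded expansion $\Rightarrow$ low shrubdepth covers.} Let $\Cc$ be a transduction of a bounded expansion class $\Dd$, realised as a monadic lift followed by an interpretation $\mathsf I$, and fix $p$. By Gaifman's locality theorem we may assume that the formulas defining $\mathsf I$ are $r$-local for a fixed $r$: the finitely many basic local sentences occurring take a fixed truth value on each lift, which splits $\mathsf I$ into finitely many interpretations that use only $r$-local formulas, and a finite union of classes with low shrubdepth covers has low shrubdepth covers. Now take $G=\mathsf I(\mathbf A^+)$ with $\mathbf A\in\Dd$, and let $\mathcal W$ be a finite $p$-cover of $\mathbf A$; for $W\in\mathcal W$ set $Z_W:=\nu(\mathbf A^+)\cap W$. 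By $r$-locality, for $z,z'\in Z_W$ the truth in $\mathbf A^+$ of $\nu(z)$ and of $\varphi_E(z,z')$ is already decided in $\mathbf A^+[N_r[W]]$; consequently $Z_W\subseteq\nu(\mathbf A^+[N_r[W]])$ and $G[Z_W]$ is an induced substructure of $\mathsf I(\mathbf A^+[N_r[W]])$. If in addition $\mathbf A[N_r[W]]$ has bounded treedepth, then by the first fact above $\mathsf I(\mathbf A^+[N_r[W]])$, and therefore $G[Z_W]$, has bounded shrubdepth. As $\mathcal W$ covers all $p$-subsets of $\mathbf A$, the family $\{Z_W:W\in\mathcal W\}$ covers all $p$-subsets of $V(G)=\nu(\mathbf A^+)$, so it is the required cover --- \emph{provided} $\mathcal W$ can be chosen so that $\mathbf A[N_r[W]]$ has bounded treedepth for every $W$.

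\textbf{Where the work is.} Producing such a cover is the technical heart, and it is strictly more than a plain low treedepth cover: already for $\Dd$ the class of paths, the pieces of a textbook low treedepth cover are unions of colour classes of a periodic colouring whose $1$-neighborhoods exhaust (almost) the whole path and so have unbounded treedepth. One must instead work at a coarser granularity --- group vertices into bounded-radius clusters, take a low treedepth cover of the induced quotient, and keep the clusters $r$-scattered so that the $r$-neighborhoods of the resulting pieces break into bounded-treedepth chunks. That every bounded expansion class admits, for all $p$ and $r$, a finite $p$-cover whose pieces have bounded-treedepth $r$-neighborhoods --- equivalently, that the distance-$r$ analogues of the treedepth-cover (generalized colouring) parameters remain bounded --- is exactly the cover machinery of \cite{SBE_drops}. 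A cleaner route, also in \cite{SBE_drops}, is its \emph{simplification} theorem: after absorbing a suitable low treedepth colouring of $\Dd$ into a new bounded expansion class $\Dd'$, the interpretation $\mathsf I$ may be taken quantifier-free; quantifier-free interpretations commute with taking induced substructures, so each piece of an ordinary low treedepth $p$-cover of $\Dd'$ is sent verbatim to a bounded-shrubdepth piece of a $p$-cover of $\Cc$, and we are done.
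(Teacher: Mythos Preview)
The paper does not prove this theorem; it is quoted verbatim from \cite{SBE_drops} and no argument is given, so there is nothing in the paper to compare your attempt against. Your sketch of the direction ``structurally bounded expansion $\Rightarrow$ low shrubdepth covers'' is sound and is indeed the strategy of \cite{SBE_drops}: you correctly isolate the only nontrivial point (one needs $p$-covers whose $r$-neighbourhoods still have bounded treedepth, or equivalently one first passes to a quantifier-free interpretation via the quantifier-elimination machinery of that paper) and you defer exactly that point to the reference, which is appropriate.

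The reverse direction, however, has a genuine gap. Your assertion that the Gaifman graph of $\mathbf M_G$ ``is the union of a depth-$\le h$ forest and the graph of $\sim$ (of maximum degree $<N$), hence lies in a fixed bounded expansion class'' is false: the edge-union of a bounded-treedepth graph and a bounded-degree graph need \emph{not} have bounded expansion. Take $G$ to be the edgeless graph on $[m]\times[m]$, the redundant $2$-cover $\{V(G),V(G)\}$, and two depth-$3$ connection models of $G$: in $T_1$ the leaves $(i,j)$ hang under middle nodes $a_1,\dots,a_m$ according to the first coordinate, in $T_2$ they hang under $b_1,\dots,b_m$ according to the second coordinate. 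In the Gaifman graph of your $\mathbf M_G$, contracting each radius-$1$ star $\{a_i\}\cup\{(i,j)_{T_1}:j\}$ and each $\{b_j\}\cup\{(i,j)_{T_2}:i\}$ produces a depth-$1$ minor containing $K_{m,m}$ (via the $\sim$-edges), so $\nabla_1(\mathbf M_G)\ge m/2$ and the class $\{\mathbf M_G\}_m$ is not of bounded expansion. The point is that one cannot pick the connection models of the different cover pieces independently: two legitimate models of the \emph{same} induced subgraph can have incompatible internal tree structures, and gluing them along $\sim$ creates arbitrarily dense shallow minors. The argument in \cite{SBE_drops} avoids this by controlling how the bounded-treedepth preimages of overlapping pieces fit together (roughly, the sparse witnesses are chosen coherently, not piece-by-piece); your sketch would need an analogous coherence condition, and the one-line justification you give does not supply it.
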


The {\em c-chromatic number} of a graph $G$ is the minimum size of a partition $V_1,\dots,V_k$ of the vertex set of $G$ such that~$G[V_i]$ is a cograph for each $i\in\{1,\dots,k\}$. We denote by $\chi_c(G)$ the c-chromatic number of $G$.

\begin{lemma}
  Every class with bounded shrubdepth has bounded c-chromatic number.
\end{lemma}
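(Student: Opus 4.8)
The plan is to read a bounded-size cograph partition directly off a bounded-depth connection model. Fix $h\in\mathbb N$ and a finite label set $\Gamma$ witnessing that the class has bounded shrubdepth, let $G$ be any graph in the class, and let $T$ be a connection model of $G$ of depth at most $h$ with leaf labels $\gamma(\cdot)$ taken from $\Gamma$ and internal-node labels $C(\cdot)$ taken from $2^{\Gamma\times\Gamma}$. For a leaf $u$ of $T$, at depth $d\le h$, let $x_1,\dots,x_{d-1}$ be the internal ancestors of $u$ listed from the root downwards, and I would colour $u$ by the tuple $c(u)=\bigl(\gamma(u),\,C(x_1),\dots,C(x_{d-1})\bigr)$. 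Since $d\le h$ and each $C(x_i)$ lies among the $2^{|\Gamma|^2}$ subsets of $\Gamma\times\Gamma$, this uses at most $|\Gamma|\cdot\sum_{i=0}^{h-1}2^{i|\Gamma|^2}\le |\Gamma|\cdot 2^{h|\Gamma|^2}$ colours, a bound depending only on $h$ and $|\Gamma|$.

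The crux is to show that every colour class $S$ induces a cograph. First note that all leaves in $S$ share the same depth (the tuples $c(\cdot)$ then have the same length), the same leaf label $\gamma_0$ (the first coordinate), and the same sequence $(C_1,\dots,C_{d-1})$ of ancestor labels. The key observation I would isolate as a sub-claim is this: if $u,v\in S$ are distinct and $w=\mathrm{lca}_T(u,v)$ sits at depth $\ell$, then $w$ is the unique depth-$\ell$ ancestor of $u$, so $C(w)=C_\ell$, and hence $uv\in E(G)$ if and only if $(\gamma_0,\gamma_0)\in C_\ell$; in other words, adjacency within $S$ depends only on the depth of the least common ancestor. Then I would prove ``$G[S]$ is a cograph'' by induction on $|S|$. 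If $|S|\le 1$ this is trivial; otherwise let $w$ be the least common ancestor in $T$ of all of $S$, at depth $\ell_0$ (so $w$ is a non-leaf node), let $c_1,\dots,c_m$ be the children of $w$ lying above some leaf of $S$ (so $m\ge2$, as $w$ is the least common ancestor of $S$), and let $S_i$ be the leaves of $S$ below $c_i$, so $\emptyset\ne S_i$ and $|S_i|<|S|$. For a cross pair $u\in S_i$, $v\in S_j$ with $i\ne j$ we have $\mathrm{lca}_T(u,v)=w$, so by the sub-claim $uv\in E(G)\iff(\gamma_0,\gamma_0)\in C_{\ell_0}$, uniformly over the pair; thus $G[S]$ is the join of $G[S_1],\dots,G[S_m]$ when $(\gamma_0,\gamma_0)\in C_{\ell_0}$ and their disjoint union otherwise. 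Each $G[S_i]$ is a cograph by the inductive hypothesis, and cographs are closed under join and disjoint union, so $G[S]$ is a cograph.

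Putting this together gives $\chi_c(G)\le|\Gamma|\cdot 2^{h|\Gamma|^2}$ for every $G$ in the class, with a bound independent of $G$, which is exactly the assertion. The argument is largely routine; the only step needing genuine care is the sub-claim that within a single colour class the adjacency of a pair of vertices is governed solely by the depth of their least common ancestor — this is what forces each colour class to decompose recursively into joins and disjoint unions, and it is the reason the tuple $c(u)$ must record the labels of all ancestors of $u$ rather than, say, just $\gamma(u)$. (If one additionally wants control of the cotree height, as in the stronger statements elsewhere in the paper, the same recursion produces a cotree of height at most $h$.)
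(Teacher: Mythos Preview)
Your argument is correct, but it takes a longer road than the paper's. The paper simply partitions by the leaf label $\gamma(u)$ alone: once every leaf in a class carries the same label $\alpha$, the connection model restricted to that class is already a connection model with a single label, i.e.\ a cotree, so the induced subgraph is a cograph. This gives the bound $\chi_c(G)\le|\Gamma|$, independent of $h$.

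Your colouring by $\bigl(\gamma(u),C(x_1),\dots,C(x_{d-1})\bigr)$ also works, of course, but your closing remark that the tuple ``must record the labels of all ancestors of $u$ rather than, say, just $\gamma(u)$'' is mistaken. Look again at your own induction step: at the top split you only use that all cross pairs $u\in S_i$, $v\in S_j$ share the \emph{same node} $w$ as least common ancestor, hence the same $C(w)$; this uniformity is automatic and does not need your sub-claim about depth. The sub-claim (adjacency depends only on the depth of the lca) is a stronger uniformity than the recursion requires. So the extra coordinates buy you nothing for this lemma, and they inflate the bound from $|\Gamma|$ to $|\Gamma|\cdot 2^{h|\Gamma|^2}$. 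What your finer colouring \emph{does} buy is that each colour class lives in a class of bounded treedepth (same depth, same ancestor types), which is relevant for the sparsification arguments elsewhere in the paper but not needed here.
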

\begin{proof}
  Let $h\in\mathbb N$ and let $\Gamma$ be a finite set such that every
  graph $G\in\Cc$ has a connection model of depth at most $h$ using
  labels from $\Gamma$, and let $\alpha\in\Gamma$. It is easily
  checked that the subgraph of $G$ induced by the vertices with label
  $\alpha$ has a connection model using only the label $\alpha$. It
  follows that this induced subgraph is a cograph, hence the
  c-chromatic number of $G$ is at most $|\Gamma|$.
\end{proof}

\begin{corollary}
  Every class $\Cc$ that admits $1$-covers of bounded shrubdepth has
  bounded c-chromatic number, and hence is linearly $\chi$-bounded.
\end{corollary}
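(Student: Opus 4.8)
The plan is to reduce to the preceding lemma and then use two elementary facts: the class of cographs is hereditary, and cographs are perfect. By hypothesis there is a finite cover $\cal U=(\cal U_G)_{G\in\Cc}$ of $\Cc$ such that $\Cc[\cal U]$ has bounded shrubdepth; let $m$ be an upper bound for the sizes $|\cal U_G|$. By the preceding lemma, $\Cc[\cal U]$ has bounded c-chromatic number, say $\chi_c(H)\le c$ for every $H\in\Cc[\cal U]$.

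Now fix $G\in\Cc$ and write $\cal U_G=\{U_1,\dots,U_{m'}\}$ with $m'\le m$; since $\cal U_G$ is a cover of $G$, we have $U_1\cup\dots\cup U_{m'}=V(G)$. For each $i$ the induced subgraph $G[U_i]$ belongs to $\Cc[\cal U]$, hence $U_i$ can be partitioned into at most $c$ parts each inducing a cograph in $G[U_i]$. An induced subgraph of a cograph is again a cograph, and an induced subgraph of $G[U_i]$ is an induced subgraph of $G$, so each of these parts also induces a cograph in $G$. Collecting all such parts over $i=1,\dots,m'$ gives a family $W_1,\dots,W_t$ with $t\le mc$ of subsets of $V(G)$, each inducing a cograph in $G$ and together covering $V(G)$. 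To obtain a genuine partition, set $V_j=W_j\setminus(W_1\cup\dots\cup W_{j-1})$ for $j=1,\dots,t$; the $V_j$ partition $V(G)$, and $V_j\subseteq W_j$ still induces a cograph in $G$ by heredity. Hence $\chi_c(G)\le mc$, and since $G\in\Cc$ was arbitrary, $\Cc$ has bounded c-chromatic number.

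For the final assertion, take any $G\in\Cc$ together with a partition $V_1,\dots,V_t$ of $V(G)$, $t\le mc$, into parts each inducing a cograph. Cographs are perfect, so $\chi(G[V_i])=\omega(G[V_i])\le\omega(G)$ for every $i$; colouring the parts with pairwise disjoint sets of $\omega(G)$ colours yields a proper colouring of $G$ with at most $mc\cdot\omega(G)$ colours. Thus $\chi(G)\le mc\cdot\omega(G)$, which shows that $\Cc$ is linearly $\chi$-bounded.

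There is no real obstacle: the argument is essentially bookkeeping. The only step that needs a little care is the passage from a cover of $V(G)$ by cograph-inducing sets to an honest partition, and this works precisely because the class of cographs is closed under taking induced subgraphs.
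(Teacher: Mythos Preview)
Your proof is correct and follows exactly the route the paper intends: the corollary is stated without proof, and your argument---applying the preceding lemma to each piece of a finite $1$-cover, passing to a partition by heredity of cographs, and then invoking perfectness of cographs (recalled a few lines earlier in the paper) for linear $\chi$-boundedness---is precisely the natural justification the authors leave implicit.
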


\begin{lemma}[\cite{SBE_drops}]
  Every class that admits $2$-covers of bounded shrubdepth is
  sparsifiable.
\end{lemma}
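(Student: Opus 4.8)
The plan is to build, for each $G\in\Cc$, a weakly sparse coloured
  graph $S_G$ such that the class $\Cc'\coloneqq\{S_G:G\in\Cc\}$ is
  weakly sparse and $\Cc$ and $\Cc'$ are mutually transducible. Fix
  $G\in\Cc$ together with a member $\{U_1,\dots,U_N\}$ of its
  $2$-cover, where $N$ and the shrubdepth data---a height $h$ and a
  finite label set $\Gamma$---are chosen uniformly over $\Cc$. For each
  $i$, since $G[U_i]$ has bounded shrubdepth it has a connection model
  $T_i$: a rooted tree of depth at most $h$ whose leaves are exactly
  the vertices of $U_i$, each leaf labelled from $\Gamma$ and each
  internal node labelled by a symmetric binary relation on $\Gamma$
  (one of finitely many). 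Contracting degree-one internal nodes, we may
  assume every internal node has at least two children, so
  $|V(T_i)|<2|U_i|$.

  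The first, substantive step is the choice of $S_G$: take the
  disjoint union of $T_1,\dots,T_N$, and then, for each $v\in V(G)$,
  turn the set of (at most $N$) leaves across these trees that
  represent $v$ into a clique; colour every node of $S_G$ by the
  index~$i$ of its tree, its depth, whether it is a leaf or an internal
  node, its label, and whether it is a fixed chosen representative of
  its vertex. As $S_G$ is the union of a forest and a graph of maximum
  degree less than $N$, it has degeneracy at most $N$ and hence
  contains no $K_{N+1,N+1}$ as a subgraph; since $N$ is uniform over
  $\Cc$, this shows $\Cc'$ is weakly sparse. A transduction from $\Cc'$
  to $\Cc$ is then obtained by restricting the domain to the chosen
  representative leaves and declaring the representatives of $u$ and
  $v$ adjacent exactly when some $T_i$ contains leaves $p,q$
  representing $u$ and $v$ with $pq\in E(G[U_i])$; the last condition
  is read off from the labels of $p$, $q$ and of their least common
  ancestor in $T_i$, which is first-order because $h$ is bounded. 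Since
  $\{U_1,\dots,U_N\}$ is a $2$-cover, \emph{every} pair $\{u,v\}$ lies
  in some $U_i$, so some $T_i$ records $uv$ correctly and the
  transduction outputs~$G$.

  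For the converse direction the idea is to run this construction
  inside a transduction from $\Cc$: a copying operation creates the
  nodes of the trees $T_i$ (there are fewer than $2N|V(G)|$ of them), a
  monadic lift guesses the cover $\{U_1,\dots,U_N\}$ and the models
  $T_1,\dots,T_N$ by colours, and a fixed first-order interpretation
  reassembles $S_G$. I expect the main obstacle to lie exactly here:
  turning the tree structure of the $T_i$, and in particular their
  parent relations, into something first-order definable from purely
  unary colour data. The intended way around this is to let each
  internal node be played by a copy of a suitably chosen leaf below it
  and to record depths, so that the parent relation and the
  least-common-ancestor relation become first-order (again using that
  $h$ is bounded and that there are finitely many node labels), and to
  use that a bounded-depth connection model of a bounded-shrubdepth
  graph can itself be produced by a transduction. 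The remaining
  checks---that the guessed data can be realised by a monadic lift of
  boundedly many copies, and that the assembled $S_G$ indeed lands in
  the fixed weakly sparse class above---are routine. If a copy-free
  transduction is required, one can instead collapse each $T_i$ onto
  $V(U_i)$, encoding its laminar family by $O(h)$ star-forests and
  overlaying these over $i$: the resulting union of $O(Nh)$
  star-forests again has bounded degeneracy, hence is weakly sparse,
  and slightly more careful bookkeeping still recovers~$G$.
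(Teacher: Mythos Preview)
The paper does not give its own proof of this lemma: it is quoted verbatim from \cite{SBE_drops} and used as a black box. So there is nothing in the present paper to compare your argument against, and the relevant question is simply whether your plan stands on its own.

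Your direction $\Cc'\to\Cc$ is fine: with bounded $N$ and bounded height $h$, navigating from a representative leaf through the clique-edges to the other copies of the same vertex, then up and down inside each $T_i$ to read off the connection-model label at the least common ancestor, is indeed first-order, and the $2$-cover hypothesis guarantees correctness for every pair. The weak-sparsity of $\Cc'$ is also clear from the degeneracy bound you give.

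The gap is in the converse direction. In this paper a transduction is defined as a monadic lift followed by an interpretation, with no copying step, so your primary construction of $S_G$ (which has strictly more vertices than $G$) cannot be the output of a transduction of $G$ in this sense. You anticipate this and sketch a copy-free variant, but that sketch hides the real work: you need, for each $i$, a first-order definable sparse graph on $U_i$ (with a bounded palette of colours) from which the connection model $T_i$---and hence $G[U_i]$---can be recovered. This amounts to the non-trivial fact that every class of bounded shrubdepth is transduction-equivalent to a class of bounded treedepth on the \emph{same} vertex set (equivalently, that the connection model can be transduced back from the graph without introducing new elements). That fact is true and is what \cite{SBE_drops} establishes, but it is not ``routine bookkeeping'': it is precisely the content of the cited lemma. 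Your star-forest idea is on the right track---one encodes the laminar family of the connection model by $O(h)$ many rooted forests on $U_i$, using one distinguished leaf per internal node---but you should state explicitly which leaf plays each internal node and why the ancestor relation (hence the least common ancestor and its label) is then first-order recoverable from the colours alone; as written, ``I expect the main obstacle to lie exactly here'' followed by ``the remaining checks are routine'' is not yet a proof.
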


\section{Rankwidth and linear rankwidth}
\smallskip\noindent We now turn to the study of classes of bounded
rankwidth and linear rankwidth. After recalling several equivalent
definitions of these width measures, we prove for every proper
hereditary family $\Ff$ of graphs (like cographs) that there is
a class with bounded rankwidth that does not have the property that
graphs in it can be colored by a bounded number of colors, each
inducing a subgraph in $\Ff$.

\subsection{Definitions}
\smallskip\noindent Classes with bounded rankwidth and classes with
bounded linear rankwidth enjoy several characterizations.  In
par\-ticular, for a class $\Cc$ the following are equivalent:
\begin{enumerate}
	\item $\Cc$ has bounded rankwidth,
	\item $\Cc$ has bounded cliquewidth,
	\item $\Cc$ has bounded NLC-width,
	\item $\xymatrix{\mathscr{Y}^{\leq}\ar@{->>}[r]&\Cc}$,
\end{enumerate}
as well as the following:
\begin{enumerate}
	\item $\Cc$ has bounded linear rankwidth,
	\item $\Cc$ has bounded linear cliquewidth,
	\item $\Cc$ has bounded linear NLC-width,
	\item $\Cc$ has bounded neighborhood-width,
	\item $\xymatrix{\mathscr{L}^{\leq}\ar@{->>}[r]&\Cc}$.
\end{enumerate}

\paragraph*{Cliquewidth and linear cliquewidth.}
Graphs of bounded treewidth have bounded average degree and therefore
the application of treewidth is (mostly) limited to sparse graph
classes. \emph{Cliquewidth} was introduced in
\cite{courcelle1993handle} with the aim to extend hierarchical
decompositions also to dense graphs. However, there is no known
polynomial-time algorithm to determine whether the cliquewidth of an
input graph is at most $k$ for fixed $k\geq 4$.  A notable application
of cliquewidth is the extension of Courcelle's Theorem for testing MSO
properties in cubic time (or linear time if a clique decomposition is
given) on graph classes of bounded
cliquewidth~\cite{CourcelleMakRot00}.  The notion of linear
cliquewidth has been introduced in \cite{gurski2005relationship}. We
denote by $\mathrm{cw}(G)$ the cliquewidth of a graph $G$ and by
$\mathrm{lcw}(G)$ the linear cliquewidth of $G$.

\paragraph*{NLC-width and linear NCL-width.}
The notions of NLC-width and linear NLC-width were introduced
in~\cite{wanke1994k} and~\cite{gurski2005relationship}.  Let $k$ be
some positive integer. We are going to work with the following
definition of linear NLC-width.
\begin{definition}\label{def:Sigmak}
  For $k\in\N$, let $V$ be a finite set, and let $\Omega_k(V)$ be the
  alphabet whose letters are quadruples~$(v, c,e,r)$, where
  \begin{itemize}
  \item $v\in V$,
  \item $c\in [k]$,
  \item $e\subseteq [k]$, and
  \item $r\colon [k]\rightarrow[k]$.
  \end{itemize}
  
  For a letter $a=(v,c,e,r)\in \Omega_k(V)$ we write $v_a, c_a, e_a$
  and $r_a$ for $v,c,e$ and~$r$, respectively.
\end{definition}

We say that a word $\alpha\in\Omega_k(V)^+$ is {\em admissible} if no
two letters $a$ and $b$ of $\alpha$ have the same $v$-value. We denote
by~$\mathfrak L_k(V)$ the set of all admissible words in~$\Omega_k^+$.

\begin{definition} \label{def:NLC-width} 
  A \emph{linear NLC-expression
    of width $k$ over $V$} is a word in $\mathfrak L_k(V)$.  With
  linear NLC-expressions $\alpha$ of width $k$ over $V$ we recursively
  associate a colored graph $\Xi(\alpha)$ whose vertices are the
  $v$-values of the letters of $\alpha$, colored by colors from $[k]$
  as follows.
  \begin{itemize}
  \item If $|\alpha|=1$, then $\Xi(\alpha)$ is the single vertex
    graph, with vertex $v_\alpha$ colored $c_\alpha$.
  \item If $\alpha=\alpha'a$, where $|a|=1$, then $\Xi(\alpha)$ is the
    graph obtained from $\Xi(\alpha')$ by adding the vertex $v_a$ with
    color $c_a$, connecting $v_a$ to all vertices $w\in \Xi(\alpha')$
    that have a color in $e_a$, and finally, changing the color of
    each vertex with color $i$ to color $r_a(i)$.
  \end{itemize}
  The linear NLC-width of a graph $G$ is the minimum integer $k$ such
  that $G$ is identical to the graph $\Xi(\alpha)$ for some
  $\alpha\in\mathfrak L_k(V(G))$.
\end{definition}

It is clear that the vertex set of $\Xi(\alpha)$ can be identified
with the letters of $\alpha$.  and that for every subword $\beta$ of
$\alpha$ the graph $\Xi(\beta)$ is the subgraph of $\Xi(\alpha)$
induced by the $v$-values of the letters of $\beta$.
%
%
%
We have \cite{gurski2005relationship}:
\begin{equation}
  \text{linear NLC-width}(G)\leq \mathrm{lcw}(G)\leq \text{linear NLC-width}(G)+1.
\end{equation}

\paragraph*{Neighborhood-width.}
The {\em neighborhood-width} of a graph is the smallest integer $k$,
such that there is a linear order $v_1,\ldots,v_n$ on the vertex set
of $G$ such that for every vertex $v_j$ the vertices $v_i$ with
$i\leq j$ can be divided into at most $k$ subsets, each members having
the same neighborhood with respect to the vertices $v_k$ with $k>j$.
The neighbourhood-width of a graph differs from its linear
clique-width or linear NLC-width at most by one \cite{GURSKI20061637}.

\paragraph*{Rankwidth and linear rankwidth.}
The notion of~\emph{rankwidth} was introduced
in~\cite{oum2006approximating} as an efficient approximation to
cliquewidth.  For a graph $G$ and a subset $X\subseteq V(G)$ we define
the \emph{cut-rank} of $X$ in $G$, denoted~$\rho_G(X)$, as the rank of
the $|X|\times |V(G)\setminus X|$ $0$-$1$ matrix $A_X$ over the binary
field~$\mathbb{F}_2$, where the entry of $A_X$ on the $i$-th row and
$j$-th column is~$1$ if and only if the $i$-th vertex in $X$ is
adjacent to the $j$-th vertex in $V(G)\setminus X$. If $X=\emptyset$
or $X=V(G)$, then we define $\rho_G(X)$ to be zero.

A \emph{subcubic} tree is a tree where every node has degree $1$ or
$3$. A \emph{rank decomposition} of a graph~$G$ is a pair $(T,L)$,
where $T$ is a subcubic tree with at least two nodes and $L$ is a
bijection from $V(G)$ to the set of leaves of $T$.  For an edge
$e\in E(T)$, the connected components of $T-e$ induce a partition
$(X,Y)$ of the set of leaves of $T$. The \emph{width} of an edge~$e$
of $(T,L)$ is $\rho_G(L^{-1}(X))$. The width of $(T,L)$ is the maximum
width over all edges of $T$. The \emph{rankwidth} $\rw(G)$ of $G$ is
the minimum width over all rank decompositions of $G$.

Cliquewidth and rankwidth are functionally
related~\cite{oum2006approximating}: For every graph $G$ we have
\begin{equation}
  \rw(G)\leq\mathrm{cw}(G)\leq 2^{\rw(G)+1}-1.
\end{equation}
Hence, a class $\Cc$ of graphs has bounded cliquewidth if and only if
$\Cc$ has bounded rankwidth.


\smallskip The \emph{linear rankwidth} of a graph is a linearized
variant of rankwidth, similarly as pathwidth is a linearized variant
of treewidth.  Let $G$ be an $n$-vertex graph and let
$v_1,\ldots, v_n$ be an order of $V(G)$. The \emph{width} of this
order is $\max_{1\leq i\leq n-1}\rho_G(\{v_1,\ldots, v_i\})$.  The
\emph{linear rankwidth} of $G$, denoted $\lrw(G)$, is the minimum
width over all linear orders of $G$. If $G$ has less than $2$ vertices
we define the linear rankwidth of $G$ to be zero.  An alternative way
to define the linear rankwidth is to define a linear rank
decom\-posi\-tion $(T,L)$ to be a rank decomposition such that $T$ is
a caterpillar and then define linear rankwidth as the minimum width
over all linear rank decompositions. Recall that a caterpillar is a
tree in which all the vertices are within distance $1$ of a central
path.

It was proved in \cite{GURSKI20061637} that the linear cliquewidth and
the linear rankwidth of a graph are bound to each other: Precisely,
for every graph $G$ we have
\begin{equation}
  \lrw(G)\leq \text{linear NLC-width}(G)\leq\mathrm{lcw}(G)\leq 2^{\lrw(G)}.
\end{equation}

A linear ordering witnessing $\lrw(G)\leq k$ (or deciding
$\lrw(G)>k$) for fixed $k$ can be computed in time
$O(n^3)$~\cite{jeong2017art}.

\subsection{Substitution and lexicographic product}
\smallskip\noindent We denote by $G \bullet H$ the lexicographic
product of $G$ and $H$. Note that this operation, though
non-commutative, is associative. By $G \oplus H$ we denote the
operation of forming the disjoint union of $G$ and $H$ and connecting
all vertices of the copy of $G$ to all vertices of the copy of $H$.

\begin{lemma}
For all graphs $G,H$ we have
\[
\mathrm{rw}((G\bullet H)\oplus K_1)=\max(\mathrm{rw}(G\oplus
K_1),\mathrm{rw}(H \oplus K_1)).
\]	
\end{lemma}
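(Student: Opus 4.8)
The plan is to prove both inequalities in
\[
\mathrm{rw}((G\bullet H)\oplus K_1)=\max\bigl(\mathrm{rw}(G\oplus K_1),\ \mathrm{rw}(H\oplus K_1)\bigr)
\]
by exploiting the structure of $\oplus$ and $\bullet$ under rank decompositions. Write $w$ for the left-hand side and $w'$ for the right-hand side. For the lower bound $w\ge w'$, I would observe that $G\oplus K_1$ and $H\oplus K_1$ are both \emph{induced subgraphs} of $(G\bullet H)\oplus K_1$: the copy of $G$ sitting over a fixed vertex of $H$, together with the apex $K_1$, gives an induced $G\oplus K_1$ (since the apex is adjacent to everything in $G\bullet H$); and picking one representative vertex from each $H$-fiber gives an induced copy of $G$, so together with the apex we get $G\oplus K_1$ — and dually, a single fiber $\{x\}\bullet H$ together with the apex is an induced $H\oplus K_1$. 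Since rankwidth is monotone under taking induced subgraphs (the cut-rank of a set only shrinks when ambient vertices are deleted, and one restricts a rank decomposition by contracting), we get $\mathrm{rw}(G\oplus K_1)\le w$ and $\mathrm{rw}(H\oplus K_1)\le w$, hence $w'\le w$.

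**The upper bound.** For $w\le w'$ I would build a rank decomposition of $(G\bullet H)\oplus K_1$ of width at most $w'$ by composing a rank decomposition $(T_G,L_G)$ of $G\oplus K_1$ of width $\mathrm{rw}(G\oplus K_1)$ with rank decompositions of copies of $H\oplus K_1$. The key idea: the apex vertex of $G\oplus K_1$ plays the role of a ``universal marker''; in $(G\bullet H)\oplus K_1$ each vertex $g$ of $G$ is blown up to a copy $\{g\}\bullet H$, and I want to attach, at the leaf of $T_G$ labeled $g$, a rank decomposition of the graph $(\{g\}\bullet H)\oplus K_1'$ where $K_1'$ is a fresh apex — but then identify all these fresh apices with the single global apex and with the leaf that $L_G$ assigned to the apex of $G\oplus K_1$. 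Concretely: take $T_G$, subdivide-and-graft onto the leaf for each $g\in V(G)$ a subcubic tree $T_H^g$ which is a rank decomposition of $\{g\}\bullet H\cong H$, and route the global apex through the leaf-edge that $T_G$ used for its apex. The crucial computation is that every edge-cut of the resulting tree has cut-rank bounded by $\max$ of the corresponding cut-ranks in $T_G$ and in some $T_H^g$. This uses the defining feature of $\bullet$: two vertices in different fibers $\{g\}\bullet H$, $\{g'\}\bullet H$ are adjacent iff $g\sim g'$ in $G$, \emph{independent of which vertices within the fibers}; so the $0$-$1$ matrix across a cut that separates whole fibers is a ``blow-up'' of the corresponding matrix for $G$, padded with the apex row/column, and a blow-up (replacing rows/columns by identical copies) does not increase $\mathbb F_2$-rank. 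For a cut internal to a single fiber $\{g\}\bullet H$, everything outside that fiber behaves (from the fiber's perspective) like a single neighbor class together with the apex — i.e.\ like the apex vertex in $H\oplus K_1$ — so the cut-rank equals that of the corresponding cut in the chosen decomposition of $H\oplus K_1$.

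**Where the work concentrates.** The main obstacle is the bookkeeping in the upper bound: one must check that after grafting the $T_H^g$ onto $T_G$ and rerouting the apex, every edge of the combined tree is either (i) an original edge of $T_G$, whose cut now separates a union of full fibers from another union of full fibers plus possibly the apex, or (ii) an edge of some $T_H^g$, whose cut separates part of one fiber from the rest of that fiber plus ``the outside world collapsed to $\le 1$ class $+$ apex.'' In case (i) I must verify the matrix-blow-up argument precisely, being careful about where the apex (adjacent to all of $G\bullet H$) sits: its row is all-ones across any such cut, so it adds at most $1$ to the rank, which is exactly absorbed because the cut-rank in $T_G$ already accounts for the apex of $G\oplus K_1$. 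In case (ii) I must verify that the external neighborhood of the fiber-piece, as seen through the cut, is realized by the single apex vertex of $H\oplus K_1$ (since all of $G\bullet H$ outside the fiber is adjacent to a vertex $v\in\{g\}\bullet H$ in exactly one of two ways — via $g$'s $G$-neighbors, which hit $v$ or not uniformly, plus the global apex which hits $v$ always — and the ``uniform'' part collapses). Once these two cut-rank bounds are established, the width of the combined decomposition is $\max(\mathrm{rw}(G\oplus K_1),\mathrm{rw}(H\oplus K_1))=w'$, giving $w\le w'$, and together with the lower bound the lemma follows. A minor degenerate case to dispatch separately is when $G$ or $H$ has fewer than two vertices, where one reads off both sides directly.
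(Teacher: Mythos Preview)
Your approach is essentially the same as the paper's: build a composite rank decomposition by grafting copies of an $H\oplus K_1$ decomposition onto the $G\oplus K_1$ decomposition, then verify that cuts of type (i) are blow-ups of $G\oplus K_1$ cuts (same $\mathbb F_2$-rank) and cuts of type (ii) reduce to $H\oplus K_1$ cuts because the outside world collapses to a single all-ones row; the lower bound via induced subgraphs is identical.

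One point to tighten: your \emph{concrete} construction grafts a decomposition $T_H^g$ of $H$ (not $H\oplus K_1$) at each $g$-leaf, but then your case-(ii) bound invokes ``the corresponding cut in the chosen decomposition of $H\oplus K_1$.'' An arbitrary optimal decomposition of $H$ need not have all its cuts bounded by $\mathrm{rw}(H\oplus K_1)$ once the all-ones apex row is added, so you must take $T_H^g$ from an optimal decomposition of $H\oplus K_1$ with its apex leaf removed --- which your earlier, non-concrete description does say. The paper handles this more cleanly: it takes an optimal decomposition $(Y_H,L_H)$ of $H\oplus K_1$ with apex leaf $\beta$, and simply \emph{identifies} $\beta$ with the $g$-leaf of $Y_G$ (so $\beta$ becomes an internal node). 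Then every edge inside a copy of $Y_H$ is literally an edge of an optimal $H\oplus K_1$ decomposition, and case (ii) is immediate without the ``identify all fresh apices'' manoeuvre.
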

\begin{proof}
  Let $(Y_G, L_G)$ and $(Y_H,L_H)$ be rank decompositions of
  $G\oplus K_1$ and $H\oplus K_1$, respectively, of minimum
  width. Assume the leaves of $Y_G$ are $V(G)\cup\{\alpha\}$ and the
  leaves of $Y_H$ are $V(H)\cup\{\beta\}$.  Consider $|G|$ copies of
  $Y_H$ and glue these copies on $Y_G$ by identifying each leaf of
  $Y_G$ that is a vertex of $G$ with the vertex $\beta$ of the
  associated copy.  The obtained tree $Y$ together with the naturally
  inherited mapping $L$ from the vertices of $(G\bullet H)\oplus K_1$
  to the leaves of $Y$ is a branch-decomposition of
  $(G\bullet H)\oplus K_1$ (see \Cref{fig:rwlex}).
	
  Now consider any edge of this branch-decomposition of
  $(G\bullet H)\oplus K_1$. There are two cases:
  \begin{itemize}
  \item Assume the edge is within the branch-decomposition $Y_G$ of
    $G\oplus K_1$. Let $A,B$ be the induced partition of the vertices
    of $(G\bullet H)\oplus K_1$. This partition corresponds to a
    partition $A',B'$ of $G\oplus K_1$. Let $p:A\rightarrow A'$ be the
    natural projection.  We may assume that the vertex $\alpha$
    belongs to $B$ in $(G\bullet H)\oplus K_1$ (hence to $B'$ in
    $G\oplus K_1$). For every vertex $v\in B$ we have
    $N_{(G\bullet H)\oplus K_1}(v)\cap A=(N_{G\oplus K_1}(p(v))\cap
    A')\times V(H)$.
    Hence the cut-rank of $(A,B)$ in $(G\bullet H)\oplus K_1$ equals
    the cut-rank of $(A',B')$ in $G\oplus K_1$.
  \item Otherwise, the edge is within the branch-decomposition of a
    copy of $H\oplus K_1$. Let $A,B$ be the induced partition of the
    vertices of $(G\bullet H)\oplus K_1$, where
    $B\subseteq \{v_0\}\times B'$ for some $v_0\in V(G)$ and some
    $B'\subseteq V(H)$. Then all vertices
    $v\in (\{v_0\}\times V(H))\setminus B$ have the neighborhood
    $(\{v_0\}\times N_H(v))\mathbin{\cap} B$ on $B$, while the
    vertices $v\in A\setminus(\{v_0\}\times V(H))$ have the same
    neighborhood in $B$, which is
    $\{v_0\}\times N_{H\oplus K_1}(\beta)$. It follows that the
    cut-rank of $(A,B)$ in $(G\bullet H)\oplus K_1$ equals the
    cut-rank of $V(H\oplus K_1)\setminus B',B')$ in $V(H\oplus K_1)$.
  \end{itemize}
  It follows that
  $\mathrm{rw}((G\bullet H)\oplus K_1)\leq \max(\mathrm{rw}(G\oplus
  K_1),\mathrm{rw}(H \oplus K_1))$.
  The reverse inequality follows from the fact that $G\oplus K_1$ and
  $H\oplus K_1$ are both induced subgraphs of
  $(G\bullet H)\oplus K_1$.
\end{proof}

\begin{figure}
  \begin{center}
    \includegraphics[width=.8\textwidth]{}
  \end{center}
  \caption{Branch decomposition of $(G\bullet H)\oplus K_1$ from the
    branch decompositions of $G\oplus K_1$ and $H\oplus K_1$.}
  \label{fig:rwlex}
\end{figure}

Actually the proof of the previous lemma shows that if $G'$ is
obtained from~$G$ by substituting $H$ at some vertex of~$G$, then
\mbox{$\rw(G'\oplus K_1)=\max(\rw(G\oplus K_1)$}, $\rw(H\oplus K_1)$.
(The graph $G\bullet H$ is the substitution of $H$ at every vertex of
$G$).

\begin{corollary}
  Closing a class by substitution increases the rankwidth by at most
  one.
\end{corollary}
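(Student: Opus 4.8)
The plan is to bootstrap the single-vertex version of the preceding lemma — the identity, recorded just after its proof, that $\rw(G'\oplus K_1)=\max\bigl(\rw(G\oplus K_1),\rw(H\oplus K_1)\bigr)$ whenever $G'$ is obtained from $G$ by substituting $H$ at one vertex — by a straightforward induction. Write $\widehat{\Cc}$ for the \emph{substitution closure} of $\Cc$, i.e.\ the smallest class containing $\Cc$ that is closed under substituting a member of it at a vertex of a member of it, and set $k=\sup_{G\in\Cc}\rw(G)$. What I want to show is that $\rw(G)\leq k+1$ for every $G\in\widehat{\Cc}$.

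I would prove the slightly stronger statement that $\rw(G\oplus K_1)\leq k+1$ for every $G\in\widehat{\Cc}$, by induction on the number of substitution steps used to build $G$. For the base case $G\in\Cc$: since $G$ is obtained from $G\oplus K_1$ by deleting a vertex and deleting a single vertex changes the rankwidth by at most one, we get $\rw(G\oplus K_1)\leq\rw(G)+1\leq k+1$. For the inductive step, $G$ is obtained from some $G_0\in\widehat{\Cc}$ by substituting some $H\in\widehat{\Cc}$ at a single vertex, where $G_0$ and $H$ are each built with strictly fewer substitutions; the induction hypothesis gives $\rw(G_0\oplus K_1)\leq k+1$ and $\rw(H\oplus K_1)\leq k+1$, so the identity above yields $\rw(G\oplus K_1)\leq k+1$. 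The corollary follows at once: since $G$ is an induced subgraph of $G\oplus K_1$, we have $\rw(G)\leq\rw(G\oplus K_1)\leq k+1$ for all $G\in\widehat{\Cc}$.

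There is essentially no hard step here; the two points needing a word of care are bookkeeping ones. First, the induction must be organized so that in the inductive step both constituents $G_0$ and $H$ genuinely lie in the closure and are reachable in fewer steps — this is automatic from the usual inductive definition of $\widehat{\Cc}$. Second, one invokes the elementary fact that deleting a single vertex (equivalently here, adding the universal vertex of $K_1$) changes rankwidth by at most one. Finally, it is worth noting that restricting to one-vertex substitutions loses nothing: substitution at several vertices — in particular the lexicographic product $G\bullet H$ — is obtained by iterating single-vertex substitutions, so it is already covered.
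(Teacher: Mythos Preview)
Your argument is correct and is exactly the expansion of the paper's implicit reasoning: the paper states the corollary without proof, immediately after recording the single-vertex identity $\rw(G'\oplus K_1)=\max(\rw(G\oplus K_1),\rw(H\oplus K_1))$, and your induction on the number of substitution steps, together with the standard fact that adjoining one vertex raises rankwidth by at most one, is precisely the intended derivation.
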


For a class $\Cc$, let $\Cc\oplus K_1$ denote the class
$\{G\oplus K_1\mid G\in\Cc\}$, and let $\Cc^{\bullet}$ denote the
closure of $\Cc$ under lexicographic product.  As a direct consequence
of the previous lemma we have
\begin{corollary}
\label{cor:lex}
For every class of graphs $\Cc$ with bounded rankwidth we have
\begin{equation}
  \mathrm{rw}(\Cc)\leq \mathrm{rw}(\Cc^{\bullet})=\mathrm{rw}(\Cc\oplus K_1)\leq \mathrm{rw}(\Cc)+1.
\end{equation}
\end{corollary}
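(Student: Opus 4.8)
The plan is to prove the chain by establishing its three relations separately, each time reducing to the preceding lemma and its substitution corollary. The leftmost inequality $\mathrm{rw}(\Cc)\le\mathrm{rw}(\Cc^{\bullet})$ is immediate because $\Cc\subseteq\Cc^{\bullet}$, so the supremum of rankwidth over $\Cc^{\bullet}$ is at least that over $\Cc$. For the rightmost inequality I would observe that a lexicographic product $G_1\bullet\cdots\bullet G_m$ is obtained from $G_1$ by successively substituting, so $\Cc^{\bullet}$ is contained in the closure of $\Cc$ under substitution; the earlier corollary (closing under substitution raises the rankwidth by at most one) then yields $\mathrm{rw}(\Cc^{\bullet})\le\mathrm{rw}(\Cc)+1$, and hence also $\mathrm{rw}(\Cc\oplus K_1)\le\mathrm{rw}(\Cc)+1$ once the middle equality is known.

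For the middle equality $\mathrm{rw}(\Cc^{\bullet})=\mathrm{rw}(\Cc\oplus K_1)$, the inequality ``$\le$'' comes from iterating the lemma: by induction on $m$, writing $G=(G_1\bullet\cdots\bullet G_{m-1})\bullet G_m$ and applying the lemma gives $\mathrm{rw}(G\oplus K_1)=\max_{1\le i\le m}\mathrm{rw}(G_i\oplus K_1)\le\mathrm{rw}(\Cc\oplus K_1)$ for every $G=G_1\bullet\cdots\bullet G_m\in\Cc^{\bullet}$ with all $G_i\in\Cc$; since $G$ is an induced subgraph of $G\oplus K_1$, also $\mathrm{rw}(G)\le\mathrm{rw}(\Cc\oplus K_1)$, and taking the supremum over $\Cc^{\bullet}$ settles this direction. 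For ``$\ge$'' I would exhibit, for each $G\in\Cc$ with at least one edge $uv$, a copy of $G\oplus K_1$ inside $G\bullet G\in\Cc^{\bullet}$: in $G\bullet G$ the set $\{(u,w):w\in V(G)\}\cup\{(v,w_0)\}$ (for any fixed $w_0\in V(G)$) induces the fiber over $u$, which is isomorphic to $G$, together with the single vertex $(v,w_0)$, which is adjacent to every vertex of that fiber because $uv\in E(G)$ --- that is exactly $G\oplus K_1$. As rankwidth is monotone under induced subgraphs, $\mathrm{rw}(G\oplus K_1)\le\mathrm{rw}(G\bullet G)\le\mathrm{rw}(\Cc^{\bullet})$, and taking the supremum over $G\in\Cc$ gives the reverse inequality. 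Combined with the two outer inequalities this yields the full chain.

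The only delicate point is the degenerate case of an edgeless $G\in\Cc$, for which the ``$\ge$'' argument above does not apply directly: then $G\oplus K_1$ is a star, of rankwidth at most $1$, which is already dominated by $\mathrm{rw}(\Cc^{\bullet})$ as soon as $\Cc$ contains some graph with an edge (and if $\Cc$ contains only edgeless graphs the statement is trivial). I expect this bookkeeping to be the sole mild obstacle; the remainder is a direct consequence of the preceding lemma together with the substitution corollary.
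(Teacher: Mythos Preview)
Your argument is correct and essentially coincides with the paper's. The paper treats the corollary as immediate from the preceding lemma and records only the one nontrivial point, namely that $G\oplus K_1$ is an induced subgraph of $G\bullet H$ whenever $H$ contains an edge; this is exactly your ``$\ge$'' direction for the middle equality (you take $H=G$, the paper allows any $H\in\Cc$ with an edge, which slightly streamlines the edgeless-$G$ bookkeeping you do at the end). Everything else---the trivial inclusion $\Cc\subseteq\Cc^{\bullet}$, the inductive application of the lemma for ``$\le$'', and the $+1$ bound---matches.
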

(Indeed, $G\oplus K_1\subseteq_i G\bullet H$ if $H$ contains at least
one edge.)  \medskip

By substituting each vertex of $V(G)$ in the linear order witnessing
\mbox{$\lrw(G\oplus K_1)$} by the linear order of $V(H)$ witnessing
$\lrw(H\oplus K_1$) we similarly obtain the following results.

\begin{lemma}
\label{lem:lrwlex}
For all graphs $G,H$ we have
\[
\lrw(G\bullet H)\leq\lrw(G)+\lrw(H).
\]	
\end{lemma}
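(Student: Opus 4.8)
The plan is to exhibit a linear order of $V(G\bullet H)$ of width at most $\lrw(G)+\lrw(H)$ by \emph{substitution}. Fix linear orders $u_1<\dots<u_m$ of $V(G)$ and $w_1<\dots<w_n$ of $V(H)$ witnessing $\lrw(G)$ and $\lrw(H)$, and order $V(G\bullet H)=V(G)\times V(H)$ lexicographically: replace each $u_i$ by the consecutive block $B_i=\{u_i\}\times V(H)$, ordered internally by $w_1<\dots<w_n$ (equivalently, in a caterpillar rank-decomposition of $G$ replace each leaf by a copy of one of $H$; the result is again a caterpillar). It then suffices to bound $\rho_{G\bullet H}(P)$ for every prefix $P$ of this order.

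There are two kinds of prefix. If $P=\{u_1,\dots,u_k\}\times V(H)$ is a union of whole blocks, then two vertices lying in different blocks are adjacent in $G\bullet H$ exactly when their $G$-coordinates are adjacent in $G$, so the cut matrix of $P$ is obtained from the cut matrix of $\{u_1,\dots,u_k\}$ in $G$ by duplicating rows and columns; duplication preserves $\mathbb{F}_2$-rank, hence $\rho_{G\bullet H}(P)=\rho_G(\{u_1,\dots,u_k\})\le\lrw(G)$. If instead $P$ splits one block, $P=(\{u_1,\dots,u_{a-1}\}\times V(H))\cup(\{u_a\}\times S)$ with $\emptyset\neq S\subsetneq V(H)$, I would split the cut matrix over $\mathbb{F}_2$ into a \emph{$G$-part} — the entries recording adjacencies between distinct $G$-coordinates, which again depends only on the $G$-coordinates, hence is a row/column-duplication of a $0/1$ matrix on $V(G)$ — plus an \emph{$H$-part} recording the edges internal to the split block $B_a$, which is exactly the cut matrix of $S$ in $H$ and so has rank at most $\lrw(H)$. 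Everything then hinges on the rank of the $G$-part.

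The main obstacle is that this $G$-part is \emph{not} literally a cut of the order of $G$: its rows are indexed by $u_1,\dots,u_a$ and its columns by $u_a,\dots,u_m$, so the split vertex $u_a$ sits on both sides, and a priori this only bounds its rank by $\lrw(G)+1$, giving the weak estimate $\lrw(G)+\lrw(H)+1$. Recovering the sharp bound is exactly what the hint's device is for: instead of arbitrary optimal orders one takes orders witnessing $\lrw(G\oplus K_1)$ and $\lrw(H\oplus K_1)$ and uses their restrictions to $V(G)$ and $V(H)$ in the substitution. With the added apex of $H\oplus K_1$ placed last, the induced order of $V(H)$ acquires the feature that for every prefix $S$ the cut of $S$ in $H$ \emph{augmented by an all-ones column} still has small rank — and that all-ones column is precisely the contribution of the straddling vertex $u_a$ to the $G$-part — while working inside $G\oplus K_1$ makes the set $\{u_1,\dots,u_a\}$ behave like a genuine prefix of the restricted order. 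Since $G\bullet H$ is an induced subgraph of this enlarged substitution graph and linear rankwidth is monotone under induced subgraphs, the lemma follows. The hard part of carrying this out is the bookkeeping showing that the straddling term is \emph{exactly} absorbed and that no further slack is lost in the $\mathbb{F}_2$-rank estimates; this parallels, and reuses, the computation already done for $\rw((G\bullet H)\oplus K_1)$, and the same construction simultaneously gives the companion consequence that bounded linear rankwidth is preserved when a class is closed under lexicographic products.
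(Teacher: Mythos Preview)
Your starting point is exactly the paper's: take optimal orders of $G$ and $H$, form the lexicographic order on $V(G)\times V(H)$, and split each cut into a $G$-contribution and an $H$-contribution. The paper then writes down a neighbourhood identity and reads off $\lrw(G)+\lrw(H)$ directly. You, by contrast, stop and worry that the straddling vertex $u_a$ forces a $+1$.

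Your worry is justified, and in fact the paper's argument breaks at precisely this point. The displayed identity
\[
N_{G\bullet H}((u,v))\cap V^{>t}=\bigl((N_G(u)\cap V(G)^{>u_t})\times V(H)\bigr)\cup\bigl(\{u_t\}\times(N_H(v)\cap V(H)^{>v_t})\bigr)
\]
is false when $u<u_t$ and $u\sim_G u_t$ (the left side then contains all of $\{u_t\}\times V(H)^{>v_t}$, the right side only the $H$-neighbours of $v$). Worse, the lexicographic order genuinely fails to achieve width $\lrw(G)+\lrw(H)$: take $G=P_3$ ordered $1<2<3$ (so $\lrw(G)=1$) and $H=\overline{K_2}$ (so $\lrw(H)=0$); then $G\bullet H\cong K_{4,2}$ has $\lrw=1$, yet in the lexicographic order $1a,1b,2a,2b,3a,3b$ the cut after $2a$ has rows $[1,0,0],[1,0,0],[0,1,1]$ and hence rank $2$. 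So the paper's proof, and your honest analysis of it, only give $\lrw(G)+\lrw(H)+1$.

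Your proposed rescue via $G\oplus K_1$ and $H\oplus K_1$ is not a proof as written. Yes, placing the apex of $H\oplus K_1$ last makes the all-ones column part of the $H$-cut, but $\lrw(H\oplus K_1)$ can be $\lrw(H)+1$ (already for $H=\overline{K_2}$ one gets $H\oplus K_1=P_3$ with $\lrw=1$), so you have only relocated the extra $1$, not eliminated it; the same applies on the $G$ side. The sentence ``$G\bullet H$ is an induced subgraph of this enlarged substitution graph and linear rankwidth is monotone'' is not attached to any concrete bound on the enlarged graph. To recover the sharp inequality you would need a different order (note that in the example above the order $1a,1b,3a,3b,2a,2b$ works) or a sharper rank estimate; neither your sketch nor the paper's argument supplies one.
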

\begin{proof}
  Let $<_1$ be a linear order of $V(G)$ witnessing $\lrw(G)$ and let
  $<_2$ be a linear order of $V(H)$ witnessing $\lrw(H)$.  Let $<$ be
  the lexicographic order on $V=V(G)\times V(H)$ defined by $<_1,<_2$,
  i.e., $(u,v)<(u',v')$ if $u<u'$ or ($u=u'$ and $v<v'$).  Let
  $t=(u_t,v_t)\in V$ and let $(u,v)\leq t$. We have
  \[
  \begin{split}
    N_{G\bullet H}((u,v))\cap V^{>t}=\big((N_G(u)\cap
    V(G)^{>u_t})\times V(H)\big) \cup \big(\{u_t\}\times(N_{H}(v)\cap
    V(H)^{>v_t})\big).
  \end{split}
  \]
  It follows that the vector space spanned by the sets
  $N_{G\bullet H}((u,v))\cap V^{>t}$ is in the sum of the vector space
  spanned by the sets $(N_G(u)\cap V(G)^{>u_t})\times V(H)$ (which has
  dimension at most $\lrw(G)$) and of the vector space spanned by the
  sets $\{u_t\}\times(N_{H}(v)\cap V(H)^{>v_t})$ (which has dimension
  at most $\lrw(H)$). Hence the claim follows.
\end{proof}

\subsection{Ramsey properties of rankwidth}

\smallskip\noindent In this section we prove that the class of all
graphs with rankwidth at most $r+1$ is ``Ramsey'' for the class of all
graphs with rankwidth at most $r$, in the following sense.

\begin{theorem}
  For all integers $r,m$ and every graph $G$ with rankwidth at most
  $r$ there exists a graph $G'=G^{\bullet m}$ with rankwidth $r+1$ and
  with the property that every $m$-coloring of $G'$ contains an
  induced monochromatic copy of $G$.
\end{theorem}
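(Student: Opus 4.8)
The plan is to take $G':=G^{\bullet m}$, the $m$-fold lexicographic power of $G$, and to verify the two required properties separately. That $\mathrm{rw}(G')\le r+1$ is immediate from \Cref{cor:lex} applied to the one-element class $\{G\}$: since $\mathrm{rw}(G)\le r$ we get $\mathrm{rw}(\{G\}^{\bullet})\le r+1$, and $G^{\bullet m}\in\{G\}^{\bullet}$.

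For the colouring property I would first record the combinatorics of the lexicographic power. Identify $V(G^{\bullet m})$ with $V(G)^m$; by associativity of $\bullet$, two distinct tuples $\bar x,\bar y$ are adjacent in $G^{\bullet m}$ if and only if $x_jy_j\in E(G)$, where $j$ is the least coordinate on which they differ. I will use two families of induced copies of $G$. First, the \emph{blobs}: for $p\in V(G)$ the set $P_p:=\{p\}\times V(G)^{m-1}$ induces a copy of $G^{\bullet(m-1)}$ (inside $P_p$ the first coordinate is constant, so adjacency is decided by the last $m-1$ coordinates), and the sets $P_p$ partition $V(G)^m$. Second, the \emph{function graphs}: for any map $t\colon V(G)\to V(G)^{m-1}$ the set $\{(v,t(v))\mid v\in V(G)\}$ induces a copy of $G$, since any two of its vertices already differ in the first coordinate, so $v\mapsto(v,t(v))$ is an isomorphism onto the induced subgraph.

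The heart of the argument is an induction on $m$: every $m$-colouring $c$ of $G^{\bullet m}$ has a monochromatic induced subgraph isomorphic to $G$. For $m=1$ the colouring is constant and $G^{\bullet 1}=G$ itself works. For $m\ge 2$, consider the blobs $P_p$, $p\in V(G)$. If some $P_p$ receives at most $m-1$ colours, then, reading $c|_{P_p}$ as an $(m-1)$-colouring of $P_p\cong G^{\bullet(m-1)}$, the induction hypothesis yields a monochromatic induced copy of $G$ inside $P_p$, which is also an induced subgraph of $G^{\bullet m}$. Otherwise every $P_p$ uses all $m$ colours; fix a colour $i$ and, for each $v\in V(G)$, choose $t(v)\in V(G)^{m-1}$ with $c(v,t(v))=i$ (possible since $i$ occurs in $P_v$). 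Then the function graph $\{(v,t(v))\mid v\in V(G)\}$ is an induced copy of $G$ all of whose vertices have colour $i$. Together with the rankwidth bound this gives the theorem.

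There is no serious obstacle once these ingredients are in place; the step that must be gotten right is the choice of the ``easy'' induced copies: distinguishing the \emph{first} coordinate turns an arbitrary function $t$ into an induced copy of $G$, and this is exactly what resolves the ``every blob is rainbow'' case. A naive induction that only recurses into a single sub-$G^{\bullet(m-1)}$ breaks down precisely there, so the point of using the full $m$-fold power with $m$ colours is that it simultaneously supplies the blob decomposition and leaves enough room to assemble a monochromatic function-graph copy.
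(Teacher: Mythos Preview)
Your proof is correct and is essentially the paper's argument. The only cosmetic difference is the order of the case split: the paper fixes one colour class $V_m$ in advance and asks whether every blob meets it (if yes, assemble a function-graph copy in colour $m$; if no, recurse into a blob that avoids $V_m$), whereas you ask whether some blob misses \emph{any} colour and otherwise pick an arbitrary colour for the function graph---the two dichotomies are equivalent.
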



\begin{proof}
  We define inductively graph $G^{\bullet i}$ for $i\geq 1$:
  $G^{\bullet 1}=G$ and, for $i\geq 1$ we let
  $G^{\bullet (i+1)}=G^{\bullet i}\bullet G=G\bullet G^{\bullet i}$.
  According to \Cref{cor:lex} we have
  $\mathrm{rw}(\{G^{\bullet i}\mid i\in\mathbb N\})\leq r+1$.
	
  We prove by induction on $m$ that in every $m$-partition of
  $G'=G^{\bullet m}$ one class induces a subgraph with a copy of
  $G$. If $m=1$ the result is straightforward.  Let $m>1$.  Consider a
  partition $V_1,\dots,V_m$ of the vertex set of $G^{\bullet m}$. If
  all the copies of $G^{\bullet (m-1)}$ forming $G^{\bullet m}$
  contain a vertex in $V_m$, then $G^{\bullet m}[V_m]$ contains an
  induced copy of $G$. Otherwise, there is a copy of
  $G^{\bullet (m-1)}$ in $G^{\bullet m}$ whose vertex set is covered
  by $V_1,\dots,V_{m-1}$. By induction hypothesis
  $G^{\bullet (m-1)}[V_i]$ contains an induced copy of $G$.
\end{proof}

\begin{corollary}
\label{cor:herw}
Let $\Ff$ be a proper hereditary class of graphs. Then there exists a
class $\Cc$ with bounded rankwidth such that for every integer $m$
there is $G\in\Cc$ with the property that for every partition of
$V(G)$ into $m$ classes, one class induces a graph not in $\Ff$.
\end{corollary}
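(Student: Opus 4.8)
The plan is to read this off directly from the Ramsey-type theorem just proved. Since $\Ff$ is a \emph{proper} hereditary class, there is some graph $H\notin\Ff$; I would fix such an $H$, set $r=\mathrm{rw}(H)$, and take
\[
\Cc=\{H^{\bullet m}\mid m\in\mathbb N\},
\]
the closure of $\{H\}$ under iterated lexicographic powers.

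First I would verify that $\Cc$ has bounded rankwidth. This is immediate from the inductive construction used in the preceding theorem together with \Cref{cor:lex}: every $H^{\bullet m}$ has rankwidth at most $r+1$, so all graphs in $\Cc$ have rankwidth at most $r+1$.

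Next, given an integer $m$, I would take $G=H^{\bullet m}\in\Cc$ and check the required property. Let $V_1,\dots,V_m$ be any partition of $V(G)$; regarding it as an $m$-coloring of $G$, the preceding theorem guarantees that some part $V_i$ induces a subgraph of $G$ containing an induced copy of $H$. Since $\Ff$ is hereditary and $H\notin\Ff$, no graph that contains $H$ as an induced subgraph can belong to $\Ff$; hence $G[V_i]\notin\Ff$, which is exactly what is claimed.

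There is no real obstacle here: all the work sits in the preceding theorem and in \Cref{cor:lex}. The only point that requires a moment's care is the direction in which heredity is used — from $H\notin\Ff$ and closure of $\Ff$ under induced subgraphs one concludes that $\Ff$ avoids \emph{every} graph having $H$ as an induced subgraph, so exhibiting a monochromatic induced copy of $H$ inside $G[V_i]$ already certifies $G[V_i]\notin\Ff$.
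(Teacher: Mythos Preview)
Your proof is correct and is exactly the intended derivation: the paper leaves \Cref{cor:herw} without an explicit proof, as it follows directly from the preceding Ramsey-type theorem and \Cref{cor:lex} in precisely the way you describe. The only step requiring any thought---that heredity of $\Ff$ together with $H\notin\Ff$ forces $G[V_i]\notin\Ff$ once $G[V_i]$ contains an induced copy of $H$---you have handled correctly.
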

%
\begin{corollary}
\label{cor:cog}
The class of graphs with rankwidth at most $2$ does not have the
property that its graphs can be vertex partitioned into a bounded
number of cographs, or circle graphs, etc.
\end{corollary}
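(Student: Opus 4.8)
The plan is to derive this directly from \Cref{cor:herw} and the Ramsey-type theorem preceding it, paying attention to the rankwidth bound. That theorem, applied to a graph $G$ of rankwidth at most $r$, produces the iterated lexicographic powers $G^{\bullet m}$; these have rankwidth at most $r+1$ by \Cref{cor:lex}, and every partition of $V(G^{\bullet m})$ into $m$ parts has a part inducing a graph with an induced copy of $G$. So for a proper hereditary family $\Ff$ it is enough to exhibit a single graph $G\notin\Ff$ of rankwidth at most $1$: then $\{G^{\bullet m}\mid m\in\N\}$ is a family of graphs of rankwidth at most $2$, and for each $m$ the graph $G^{\bullet m}$ cannot be partitioned into $m$ parts each inducing a member of $\Ff$, since $\Ff$ is hereditary and the part containing an induced copy of $G$ then induces a graph outside $\Ff$.

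For the cograph case, which is what makes the bound $2$, recall that the cographs are exactly the $P_4$-free graphs, so this family is hereditary and proper (it misses $P_4$). Since $P_4$ is a tree, it is distance-hereditary and hence has rankwidth at most $1$. Taking $G=P_4$ and $r=1$ in the theorem, every $m$-coloring of $P_4^{\bullet m}$ has a color class inducing a graph containing an induced $P_4$, hence a non-cograph; so $\chi_c(P_4^{\bullet m})>m$ for all $m$, and the class of graphs of rankwidth at most $2$ has unbounded c-chromatic number. As cographs are perfect, this in particular exhibits a linearly $\chi$-bounded class of rankwidth at most $2$ with unbounded c-chromatic number.

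For circle graphs, and for any other proper hereditary family $\Ff$, one runs the identical argument with a minimal forbidden induced subgraph $H$ of $\Ff$ in place of $P_4$: the resulting family $\{H^{\bullet m}\}$ lies in the class of graphs of rankwidth at most $\rw(H)+1$ and has unbounded ``$\Ff$-partition number''. The only real ingredients are the Ramsey-type theorem and \Cref{cor:lex}, both already established, so there is no substantial obstacle; the single case-by-case check is whether a forbidden graph of rankwidth at most $1$ is available, which is the point that pins the bound at $2$ and which holds for cographs via $P_4$.
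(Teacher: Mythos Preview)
Your argument is correct and is precisely the approach the paper has in mind: \Cref{cor:cog} is stated without proof, as a direct consequence of the preceding Ramsey theorem together with \Cref{cor:lex}, and your instantiation with $G=P_4$ (a tree, hence distance-hereditary, hence of rankwidth~$1$) is the intended one for the cograph case, yielding the bound~$2$.

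Your caution about the circle-graph case is also well placed, and in fact sharper than the paper. The specific bound~$2$ cannot be obtained for circle graphs by this route: graphs of rankwidth at most~$1$ are exactly the distance-hereditary graphs, and every distance-hereditary graph \emph{is} a circle graph, so there is no non-circle graph of rankwidth~$1$ to feed into the Ramsey theorem. What one does obtain, exactly as you say, is a class of rankwidth at most $\rw(H)+1$ (for any fixed non-circle graph~$H$, e.g.\ the wheel~$W_5$) with unbounded circle-graph partition number---which is the content of \Cref{cor:herw}. The paper's phrasing ``rankwidth at most~$2$ \dots\ or circle graphs, etc.''\ should be read informally in this sense; your explicit qualification is the accurate statement.
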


\subsection{Lower bounds for $\chi$-boundedness}

\smallskip\noindent Bonamy and Pilipczuk \cite{rw_polychi} announced
independently that classes with bounded rankwidth are polynomially
$\chi$-bounded. We give here a lower bound on the degrees of the
involved polynomials. We write $\chi_f(G)$ for the \emph{fractional
  chromatic number} of a graph $G$, which is defined as
$\chi_f(G)=\inf\bigl\{\frac{\chi(G\bullet K_n)}{n} \mid n\in \N\bigr\}$.

\begin{theorem}
\label{thm:degrw}
For $r\in\mathbb N$, let $P_r$ be a polynomial such that for every
graph $G$ with rankwidth at most $r$ we have
$\chi(G)\leq P_{r}(\omega(G))$.  Then $\deg P_r\in\Omega(\log r)$.
\end{theorem}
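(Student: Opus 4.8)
The plan is to transfer the degree bound onto iterated lexicographic powers, using that lexicographic products cost at most one unit of rankwidth (\Cref{cor:lex}) together with the multiplicativity of the fractional chromatic number; this reduces everything to producing, for each $r$, a single graph of rankwidth at most $r$ with bounded clique number but large fractional chromatic number.

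Concretely, suppose $\chi(G)\le P_r(\omega(G))$ for every $G$ with $\rw(G)\le r$, fix a graph $H$ with $\rw(H)\le r-1$, and consider its lexicographic powers $H^{\bullet 1}=H$ and $H^{\bullet(k+1)}=H\bullet H^{\bullet k}$. By \Cref{cor:lex} we have $\rw(H^{\bullet k})\le\rw(H)+1\le r$ for every $k$, so the hypothesis applies to each $H^{\bullet k}$. Since both $\omega$ and $\chi_f$ are multiplicative over lexicographic products, $\omega(H^{\bullet k})=\omega(H)^k$ and $\chi_f(H^{\bullet k})=\chi_f(H)^k$, and using $\chi_f\le\chi$ we obtain
\[
\chi_f(H)^k=\chi_f(H^{\bullet k})\le\chi(H^{\bullet k})\le P_r\bigl(\omega(H)^k\bigr).
\]
Fixing $C$ with $P_r(x)\le Cx^{\deg P_r}$ for all $x\ge 1$ (possible since the leading coefficient of $P_r$ is positive, because already $\chi(K_n)=n$ forces $\deg P_r\ge 1$), this gives $\chi_f(H)^k\le C\,\omega(H)^{k\deg P_r}$; extracting $k$-th roots and letting $k\to\infty$ yields $\chi_f(H)\le\omega(H)^{\deg P_r}$, that is,
\[
\deg P_r\ \ge\ \log_{\omega(H)}\chi_f(H).
\]
Hence it suffices to exhibit, for every $r$, a graph $H_r$ with $\rw(H_r)\le r-1$ and $\log_{\omega(H_r)}\chi_f(H_r)=\Omega(\log r)$; since $\Omega(\log r)$ is insensitive to a constant rescaling of $r$, it is enough to have $\rw(H_r)=O(r)$, $\omega(H_r)=O(1)$, and $\chi_f(H_r)\ge r^{\varepsilon}$ for some fixed $\varepsilon>0$ — for instance $H_r$ triangle-free with $\chi_f(H_r)\ge r^{\varepsilon}$, which then gives $\deg P_r\ge\varepsilon\log_2 r$.

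Constructing this building block is the technical heart and the step I expect to be the main obstacle. Note first that $\rw(H_r)$ must genuinely grow with $r$: by the Bonamy–Pilipczuk theorem a class of fixed rankwidth is polynomially $\chi$-bounded by a fixed polynomial, so a triangle-free graph of bounded rankwidth has bounded chromatic number, and what one needs is that triangle-free graphs of rankwidth $r$ can have $\chi_f$ at least polynomial in $r$. A naive Mycielski-type iteration starting from $C_5$ keeps the graph triangle-free and raises $\chi_f$, but each Mycielskian roughly doubles every cut-rank (its cut matrices have the block shape $\left(\begin{smallmatrix}A&A\\ A&0\end{smallmatrix}\right)$, of rank $2\,\mathrm{rank}(A)$), so after $k$ steps the rankwidth is of order $2^k$ while $\chi_f$ is only of order $k^{1/2}$; this route yields merely $\deg P_r=\Omega(\log\log r)$. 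To reach $\Omega(\log r)$ one needs a more economical amplification — informally, an operation that keeps $\omega$ bounded, multiplies $\chi_f$ by a constant factor, and increases every cut-rank by only a bounded additive amount (so that $k$ iterations produce rankwidth $O(k)$ together with $\chi_f$ growing geometrically in $k$, hence $\chi_f$ polynomial in the rankwidth) — or, equivalently, a direct construction of triangle-free graphs of rankwidth $r$ with $\chi=r^{\Omega(1)}$. The crux is to design such a graph and to control its rankwidth: one has to express the cut-ranks of the amplified graph in terms of those of its components and of the combinatorial gadget used to glue them, and verify that this gluing, unlike vertex-cloning à la Mycielski, does not blow the rankwidth up. Once the building block $H_r$ is in hand, substituting $H=H_r$ into the displayed inequality gives $\deg P_r\ge\log_{\omega(H_r)}\chi_f(H_r)=\Omega(\log r)$, completing the proof.
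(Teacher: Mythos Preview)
Your reduction via lexicographic powers and \Cref{cor:lex} is exactly the paper's argument, and your conclusion
\[
\deg P_r\ \ge\ \frac{\log\chi_f(H)}{\log\omega(H)}\qquad\text{for every }H\text{ with }\rw(H\oplus K_1)\le r
\]
is precisely what the paper derives. The gap is that you never supply the building block $H_r$, and you dramatically overestimate how hard this is.

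The step you label ``the technical heart'' is in fact immediate, because you have forgotten the trivial bound: \emph{every} graph on $n$ vertices has rankwidth less than $n$ (each side of any cut has size at most $n-1$, so every cut-rank is at most $n-1$). Thus no clever rankwidth-preserving amplification is needed at all; you may simply take for $H_r$ any triangle-free graph on roughly $r$ vertices with large fractional chromatic number. Such graphs are classical: by Kim's theorem on $R(3,t)$ (the reference \cite{R3t} in the paper) there exist, for all large $n$, triangle-free graphs $G_n$ on $n$ vertices with independence number $O(\sqrt{n\log n})$, hence
\[
\chi_f(G_n)\ \ge\ \frac{n}{\alpha(G_n)}\ \ge\ \frac{1}{9}\sqrt{\frac{n}{\log n}}.
\]
Since $\rw(G_n\oplus K_1)<n$, plugging $H=G_n$ with $n\approx r$ into your displayed inequality gives
\[
\deg P_r\ \ge\ \log_2\chi_f(G_n)\ \ge\ \Bigl(\tfrac{1}{2}-o(1)\Bigr)\log_2 r,
\]
which is the desired $\Omega(\log r)$. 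Your Mycielski analysis and the search for an ``economical amplification'' are therefore unnecessary detours: the point is not to build graphs of \emph{small} rankwidth with large $\chi_f$, but to realise that an $n$-vertex graph automatically has rankwidth at most $n$, so the well-known dense triangle-free Ramsey graphs already do the job.
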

\begin{proof}
  As shown in~\cite{geller1975chromatic} for all graphs $G$ and $H$ we
  have $\chi(G\bullet H)= \chi(G\bullet K_{\chi(H)})$. Furthermore we
  have $\chi(G\bullet K_{\chi(H)})\geq \chi(H)\chi_f(G)$.  We deduce
  that $\chi(G\bullet H)\geq \chi_f(G)\chi(H)$.  Hence for every
  integer $n$ we have $\chi(G^{\bullet n})\geq \chi_f(G)^n$. As
  $\omega(G^{\bullet n})=\omega(G)^n$ we have
  $\chi(G^{\bullet n})\geq \omega(G^{\bullet
    n})^{\frac{\log\chi_f(G)}{\log\omega(G)}}$ and hence
\[
\deg P_r
\geq \sup_{\mathrm{rw}(G\oplus K_1)\leq r}\frac{\log \chi_f(G)}{\log
  \omega(G)}.
\]

For sufficiently large integers $n$ there exists a triangle-free graph
$G_n$ with $\chi_f(G_n)\geq\frac{1}{9}\sqrt{\frac{n}{\log n}}$ (see
\cite{R3t}).  As $n>\mathrm{rw}(G_n\oplus K_1)$ we deduce that for
sufficiently large integers $r$ we have
\[
\deg P_r\geq \biggl(\frac{1}{2\log 2}-o(1)\biggr)\log r.
\]
\end{proof}

\paragraph*{Linear rankwidth.}
We give a short proof in \Cref{sec:NLC}
(\Cref{cor:chiLNLC}) that classes with bounded linear
rankwidth are linearly $\chi$-bounded using the equivalence between
classes with bounded linear rankwidth and classes with bounded linear
NLC-width. We improve the obtained upper bound of the $\chi/\omega$
ratio in \Cref{sec:lrw} using a more technical analysis of
linear rank-width (\Cref{thm:cog}), leading to an order of
magnitude of $2^{O(r^2)}$.  We now prove that the ratio $\chi/\omega$
can be as large as $\alpha^r$ for some constant $\alpha>1$ and for
graphs with arbitrarily large linear rankwidth $r$ and clique number
$\omega$.
%
%
%

From \Cref{lem:lrwlex} we deduce $\lrw(C_5^{\bullet n})\leq 2n$.
As $\omega(C_5^{\bullet n})=2^n$ and as
$\chi(C_5^{\bullet n})\geq \chi(C_5)\chi_f(C_5)^{n-1}=3(5/2)^{n-1}$ we
deduce

\[
\frac{\chi(C_5^{\bullet n})}{\omega(C_5^{\bullet n})}\geq
(6/5)(5/4)^{n}\geq (6/5)(5/4)^{\lrw(C_5^{\bullet n})/2}.
\]

As $6/5>\sqrt{5}/2$, for every integer $r$ we have:
\begin{equation}
\label{eq:lrwlb}
\lim_{t\rightarrow\infty}\ \sup_{\substack{\lrw(G)\leq r\\ \omega(G)\geq t}}\ \frac{\chi(G)}{\omega(G)}\geq \biggl(\frac{\sqrt{5}}{2}\biggr)^r.
\end{equation}



\section{Linear NLC-width}
\label{sec:NLC}
\smallskip\noindent In this section we prove that classes with bounded
linear NLC-width (and hence classes of bounded linear rankwdith) are
linearly $\chi$-bounded, and if they are stable, then they are
transduction equivalent to classes of bounded pathwidth. We prove the
result using Simon's factorization forest theorem.

%

\subsection{Simon's factorization forest theorem}
\smallskip\noindent A \emph{semigroup} is an algebra with one
associative binary operation, usually denoted as multiplication.  An
\emph{idempotent} in a semigroup is an element $e$ with
\mbox{$ee=e$}. Given an alphabet $\Omega$ we denote by $\Omega^+$ the
semigroup of all non-empty finite words over $\Omega$, with
concatenation as product.

Fix an alphabet $\Omega$ and a semigroup morphism
$h\colon\Omega^+\rightarrow T$, where $T$ is a finite semigroup.  A
{\em factorization tree} is an ordered rooted tree in which each node
is either a leaf labeled by a letter, or an internal node. The value
of a node is the word obtained by reading the descendant leaves below
from left to right. The value of a factorization tree is the value of
the root of the tree. A {\em factorization tree} of a word
$\alpha\in\Omega^+$ is a factorization tree of value $w$. The {\em
  depth} of the tree is defined as usual, with the convention that the
depth of a single leaf is $1$. A factorization tree is {\em Ramseyan}
(for~$h$) if every node 1) is a leaf, or 2) has two children, or, 3)
the values of its children are all mapped by~$h$ to the same
idempotent of $T$.

\begin{theorem}[Simon's Factorization Forest Theorem
  \cite{kufleitner2008height,simon1990factorization}]
\label{thm:simon}
For every alphabet $\Omega$, every finite semigroup $T$, every
semigroup morphism $h\colon\Omega^+\rightarrow T$, and every word
$\alpha\in\Omega^+$, the word $\alpha$ has a Ramseyan factorization
tree of depth at most $3|T|$.
\end{theorem}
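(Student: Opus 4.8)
The plan is to prove Simon's theorem by induction on $|T|$, after the harmless reduction to the case where $h$ is surjective: if it is not, replace $T$ by the subsemigroup $h(\Omega^+)$; idempotents and the property of being Ramseyan are unaffected, and the target bound only shrinks. For the base case $|T|=1$, the sole element of $T$ is idempotent, so a one‑letter word is a single leaf of depth $1$, and any longer word $\alpha$ is handled by the flat tree whose root has the letters of $\alpha$ as its children; this tree is Ramseyan (condition~3 holds, since all children have the same idempotent value) and has depth $2\le 3|T|$.

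For the inductive step I would organize the argument around the two‑sided Green preorder $\le_{\mathcal{J}}$ of $T$, descending one $\mathcal{J}$‑class at a time. Write $\alpha=a_1\cdots a_n$. Extending a prefix can only move its $h$‑value down in $\le_{\mathcal{J}}$, so the prefix values pass through a decreasing chain of $\mathcal{J}$‑classes; split $\alpha$ accordingly, and symmetrically using suffixes. The informative case is the one in which the top $\mathcal{J}$‑class that must be dealt with is a \emph{regular} $\mathcal{D}$‑class $D$ and all infix values lie in $D$ together with the ideal strictly below it. There one exploits the standard fact that inside a single $\mathcal{J}$‑class $\le_{\mathcal{R}}$‑comparable elements are automatically $\mathcal{R}$‑equivalent (dually for $\le_{\mathcal{L}}$), which lets one coarsen the factorization in a bounded number of tree levels until the infixes of the coarsening all lie in one $\mathcal{H}$‑class of $D$; since $D$ is regular this $\mathcal{H}$‑class may be taken to be a group $H$ (Green's theorem). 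A word over the finite group $H$ is then handled by a direct pigeonhole argument: repeatedly peeling from the left the shortest segment whose product is the identity (of length at most $|H|$, by pigeonhole on partial products) cuts it into short pieces each mapping to the idempotent of $H$, plus one final short piece, and these are assembled with a further $O(\log|H|)$ levels. Everything that escapes $D$ — the segments where the prefix or suffix value has already dropped below $D$, and in the non‑regular configurations the isolated letters whose value lies in a transient top class — maps into a semigroup of size $<|T|$, so the induction hypothesis disposes of it, and all the resulting subtrees are grafted together with bounded overhead.

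The substance of the proof, and the step I expect to be the real obstacle, is the depth bookkeeping needed to reach the precise constant in the bound $3|T|$: a naive execution of the refine‑and‑recurse scheme above spends several tree levels per $\mathcal{J}$‑class — one for the $\mathcal{R}$‑coarsening, one for the $\mathcal{L}$‑coarsening, one or more for the group collapse, one for the recursion into the ideal below — and so yields only $c\,|T|$ for some $c>3$. Driving the constant down to $3$ requires interleaving these operations so that descending one level of $\le_{\mathcal{J}}$ is charged at most three tree levels, which is exactly the optimized accounting of Kufleitner \cite{kufleitner2008height}; Simon's original argument \cite{simon1990factorization} yields a larger but still finite constant. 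For the uses made of the theorem in \Cref{sec:NLC}, any bound that is finite and depends only on $|T|$ would in fact suffice.
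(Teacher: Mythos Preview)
The paper does not prove this theorem. \Cref{thm:simon} is stated as a black-box result from the literature, immediately followed by the remark that the existence of a bound depending only on $|T|$ is due to Simon~\cite{simon1990factorization} and the improved constant $3|T|$ to Kufleitner~\cite{kufleitner2008height}; no argument is given. So there is no ``paper's own proof'' to compare your proposal against.

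That said, your outline is a fair high-level description of the standard proof via Green's relations, and you are right that the delicate part is the depth accounting, and that the precise constant $3$ is irrelevant for the applications in \Cref{sec:NLC}, which only need a finite bound in terms of $|T|$. If you intend to actually include a proof, be aware that what you wrote is a plan rather than a proof: the passages ``coarsen the factorization in a bounded number of tree levels until the infixes \dots\ lie in one $\mathcal{H}$-class'' and ``assembled with a further $O(\log|H|)$ levels'' and ``grafted together with bounded overhead'' are exactly the places where all the work lies, and as you yourself note, a naive execution does not give $3|T|$. For the purposes of this paper the appropriate move is simply to cite the result, as the authors do.
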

%

The existence of an upper bound expressed only in terms of $|T|$ was
first proved by Simon~\cite{simon1990factorization}.  The improved
upper bound of $3|T|$ is due to
Kufleitner~\cite{kufleitner2008height}.

\subsection{Application to classes with bounded linear NLC-width}

\smallskip\noindent In the following we consider the semigroup
$\Gamma_k$ on functions $r\colon [k] \rightarrow [k]$. Obviously,
\mbox{$h\colon \Omega_k(V)^+\rightarrow \Gamma_k$} induced by
$h(a)=r_a$ for $a\in \Omega_k(V)$ is a semigroup homomorphism (recall
\Cref{def:Sigmak}).  An idempotent of $\Gamma_k$ is a
function $r$ that satisfies that if \mbox{$r(i)=j$}, then $r(j)=j$.
We call $\alpha\in \Omega_k(V)^+$ an idempotent if $h(\alpha)$ is an
idempotent in $\Gamma_k$.

For $\alpha\in \mathfrak L_k(V)$ (recall
\Cref{def:NLC-width}) and for a letter $a$ of $\alpha$ and
$v=v_a$ define $\mathrm{col}_\alpha(v)$ as the color of the vertex $v$
in $\Xi(\alpha)$. Note that if $\alpha\beta\in \mathfrak L_k(V)$ then
$\mathrm{col}_{\alpha\beta}(v)=h(\beta)(\mathrm{col}_\alpha(v))$.

Fix $\alpha\in\mathfrak{L}_k(V)$. According to
\Cref{thm:simon}, there exists a rooted tree $Y$ that is a
Ramseyan factorization tree of $\alpha$ for $h$ with depth at most
$3|T|$.
%
We identify the vertices of $\Xi(\alpha)$ with the leaves of $Y$.  Let
$z$ be a letter of $\Xi(\alpha)$ and let $\beta$ be an ancestor of
$z$. Let $\beta=b_1\dots b_n$ (where the $b_i$ are letters) and let
$p\leq n$ be such that $b_p=z$.  We define
 \begin{align*}
   \mathrm{recol}_\beta(z)&=r_{b_{p-1}}\circ\dots\circ r_{b_1},\\
   \mathrm{eset}_\beta(z)&=\mathrm{recol}_\beta(z)^{-1}(e_z).
 \end{align*}
 \begin{lemma}
   \label{lem:scode}
   Let $z_1,z_2$ be two letters of $\alpha$ appearing in this order in
   $\alpha$, let $\beta$ be their least common ancestor, and
   let~$\delta_1$ (resp.\ $\delta_2$) be the children of $\beta$
   containing the letter $z_1$ (resp.\ $z_2$).  Then $v_{z_1}$ and
   $v_{z_2}$ are adjacent in $\Xi(\alpha)$ if
   \begin{itemize}
   \item $\delta_1$ is not immediately to the left of $\delta_2$ in
     $\alpha$ and
     $\mathrm{col}_\beta(z_1)\in\mathrm{eset}_\beta(z_2)$, or
   \item $\delta_1$ is immediately to the left of $\delta_2$ in
     $\alpha$ and
     $\mathrm{col}_{\delta_1}(z_1)\in\mathrm{eset}_\beta(z_2)$.
   \end{itemize}
 \end{lemma}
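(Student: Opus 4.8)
The plan is to reduce the statement to tracking the colour of $v_{z_1}$ while the word $\alpha$ is read off. Since $z_1$ precedes $z_2$ in $\alpha$, the pair $v_{z_1}v_{z_2}$ can become an edge of $\Xi(\alpha)$ only at the step appending $z_2$, and it does so precisely when the colour of $v_{z_1}$ in $\Xi(\alpha_{<z_2})$ — where $\alpha_{<z_2}$ denotes the prefix of $\alpha$ ending just before $z_2$ — lies in $e_{z_2}$; no later letter alters adjacency, only colours. Write the value of $\beta$ as the concatenation of its children $\dots\,\delta_1\,\mu_1\cdots\mu_s\,\delta_2\,\dots$, with $z_1\in\delta_1$, $z_2\in\delta_2$ and $\mu_1,\dots,\mu_s$ the children strictly between them, so that $s=0$ is precisely the ``immediately to the left'' case. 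Since $\beta$ is a contiguous factor of $\alpha$ and $z_2\in\delta_2$, the portion of $\alpha$ strictly between $z_1$ and $z_2$ is (suffix of $\delta_1$ after $z_1$)$\cdot\,\mu_1\cdots\mu_s\,\cdot$(prefix of $\delta_2$ before $z_2$); write $t:=h\bigl((\delta_2)_{<z_2}\bigr)$ for the $h$-image of this last prefix, so that $\mathrm{eset}_\beta(z_2)=t^{-1}(e_{z_2})$ once $\mathrm{recol}_\beta(z_2)$ is unwound. Applying $\mathrm{col}_{\alpha'\gamma}(v)=h(\gamma)(\mathrm{col}_{\alpha'}(v))$ along this factor — and using that the suffix of $\delta_1$ after $z_1$ carries the colour of $v_{z_1}$ at its creation to $\mathrm{col}_{\delta_1}(z_1)$ (the colour of $v_{z_1}$ in $\Xi(\delta_1)$) — yields
\[
\mathrm{col}_{\alpha_{<z_2}}(v_{z_1})=t\Bigl(\bigl(h(\mu_s)\circ\cdots\circ h(\mu_1)\bigr)\bigl(\mathrm{col}_{\delta_1}(z_1)\bigr)\Bigr),
\]
and $v_{z_1}v_{z_2}\in E(\Xi(\alpha))$ if and only if this colour lies in $e_{z_2}$.

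I would first handle the case $s\ge 1$. Then $\beta$ has at least three children, so by the Ramseyan property of $Y$ all of them are sent by $h$ to one common idempotent $\epsilon$ of $\Gamma_k$; in particular $h(\mu_s)\circ\cdots\circ h(\mu_1)=\epsilon^{\,s}=\epsilon$, hence $\mathrm{col}_{\alpha_{<z_2}}(v_{z_1})=t\bigl(\epsilon(\mathrm{col}_{\delta_1}(z_1))\bigr)$. Moreover $\mathrm{col}_\beta(z_1)$ is obtained from $\mathrm{col}_{\delta_1}(z_1)$ by applying the recolourings of all children of $\beta$ that follow $\delta_1$; there are at least two such children ($\mu_1$ and $\delta_2$), each mapping to $\epsilon$, so $\mathrm{col}_\beta(z_1)=\epsilon(\mathrm{col}_{\delta_1}(z_1))$. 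Thus $\mathrm{col}_\beta(z_1)$ lies in $\mathrm{Im}(\epsilon)$, is therefore fixed by $\epsilon$, and $\mathrm{col}_{\alpha_{<z_2}}(v_{z_1})=t(\mathrm{col}_\beta(z_1))$. Consequently $\mathrm{col}_\beta(z_1)\in\mathrm{eset}_\beta(z_2)$, i.e.\ $t(\mathrm{col}_\beta(z_1))\in e_{z_2}$, is equivalent to $v_{z_1}v_{z_2}\in E(\Xi(\alpha))$, which proves the first bullet (in fact as an equivalence).

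The case $s=0$ is direct: no sibling subtree intervenes, so $\mathrm{col}_{\alpha_{<z_2}}(v_{z_1})=t(\mathrm{col}_{\delta_1}(z_1))$, and $v_{z_1}v_{z_2}\in E(\Xi(\alpha))$ iff $t(\mathrm{col}_{\delta_1}(z_1))\in e_{z_2}$ iff $\mathrm{col}_{\delta_1}(z_1)\in\mathrm{eset}_\beta(z_2)$; this is the second bullet. I expect the only real obstacle to be bookkeeping, at two points. First, the boundary convention of $\Xi$ — the recolouring of the leftmost letter of a word is not applied — must be threaded carefully through the identification of ``the colour of $v_{z_1}$ right after its creation, recoloured by the rest of $\delta_1$'' with $\mathrm{col}_{\delta_1}(z_1)$, and the degenerate case in which $z_1$ is the leftmost letter of $\alpha$ must be checked separately. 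Second, in the case $s\ge 1$ idempotency has to be used at full strength: the arbitrarily many intermediate children collapse to a single $\epsilon$, one must verify that $\mathrm{col}_\beta(z_1)$ genuinely lands in $\mathrm{Im}(\epsilon)$, and $\mathrm{recol}_\beta(z_2)$ must be untangled so that testing $\mathrm{col}_\beta(z_1)\in\mathrm{eset}_\beta(z_2)$ amounts to testing $t(\mathrm{col}_\beta(z_1))\in e_{z_2}$. The case split on whether $\delta_1$ and $\delta_2$ are consecutive children of $\beta$ is exactly what lets one use $\mathrm{col}_\beta$ in the first case and $\mathrm{col}_{\delta_1}$ in the second: in the consecutive case no idempotent is available to bridge the passage from $\mathrm{col}_{\delta_1}(z_1)$ to $\mathrm{col}_\beta(z_1)$.
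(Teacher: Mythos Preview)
Your argument is essentially the paper's own proof: the same case split on whether $\delta_1$ and $\delta_2$ are consecutive children of $\beta$, the same reduction of adjacency to the colour of $v_{z_1}$ just before $z_2$ is read, and the same use of idempotency in the non-consecutive case to collapse $h(\mu_s)\circ\cdots\circ h(\mu_1)$ to $\epsilon$ and to identify $\epsilon(\mathrm{col}_{\delta_1}(z_1))$ with $\mathrm{col}_\beta(z_1)$. The only cosmetic difference is that the paper phrases the non-consecutive chain as $r^{j-i-1}=r^{n-i-1}$ on $\mathrm{col}_{\beta_i}(z_1)$, whereas you phrase it as $\epsilon^s=\epsilon$ and $\mathrm{col}_\beta(z_1)\in\mathrm{Im}(\epsilon)$; both are the same idempotent computation.
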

 \begin{proof}
   When $\delta_1$ and $\delta_2$ are consecutive, let
   $\delta_2=b_1\dots b_p$ with $b_p=z_2$. Then $v_{z_1}$ and
   $v_{z_2}$ are adjacent if
   \begin{align*}
     &\mathrm{col}_{\delta_1 b_{1}\dots b_{p-1}}(z_1)\in e_{z_2}\\
     \iff\quad&\mathrm{recol}_\beta(\mathrm{col}_{\delta_1}(z_1))\in e_{z_2}\\
     \iff\quad&\mathrm{col}_{\delta_1}(z_1)\in \mathrm{eset}_\beta(z_2).
   \end{align*}
   (Note that in this case we do not make any assumption on
   $h(\delta_1)$ and $h(\delta_2)$.)

   Now assume that $\delta_1$ and $\delta_2$ are non-consecutive. Let
   $\beta_1,\dots,\beta_n$ be the children of $\beta$, and let
   $j\geq i+2$ be such that $\delta_1=\beta_i$ and $\delta_2=\beta_j$.
   As $\beta$ has more than two children, the corresponding
   factorization is a factorization into idempotents.  Let
   $r=h(\beta_1)=\dots=h(\beta_n)$.  Let $\delta_2=b_1\dots b_p$ with
   $b_p=z_2$.

   Then $v_{z_1}$ and $v_{z_2}$ are adjacent if
   \begin{align*}
     &\mathrm{col}_{\beta_i\dots\beta_{j-1}b_{1}\dots b_{p-1}}(z_1)\in e_{z_2}\\
     \iff\quad&\mathrm{recol}_\beta(\mathrm{col}_{\beta_i\dots\beta_{j-1}}(z_1)))\in  e_{z_2}\\
     \iff\quad&\mathrm{col}_{\beta_i\dots\beta_{j-1}}(v_{z_1})\in \mathrm{eset}_\beta(z_2)\\
     \iff\quad&r^{j-i-1}(\mathrm{col}_{\beta_i}(z_1))\in \mathrm{eset}_\beta(z_2)\\
     \iff\quad&r^{n-i-1}(\mathrm{col}_{\beta_i}(z_1))\in \mathrm{eset}_\beta(z_2)\\
     \iff\quad&\mathrm{col}_{\beta_i\dots\beta_{n}}(z_1)\in \mathrm{eset}_\beta(z_2)\\
     \iff\quad&\mathrm{col}_{\beta}(z_1)\in \mathrm{eset}_\beta(z_2)\\\end{align*}
 \end{proof}

\begin{theorem}
  \label{thm:cog0}
  Let $f(k)=(k2^{k+1})^{3k^k}$ and $g(k)=3k^k$.  Every graph with
  linear NLC-width at most $k$ can be vertex partitioned into $f(k)$
  cographs with a cotree of depth at most $g(k)$.
\end{theorem}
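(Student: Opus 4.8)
The plan is to exploit the Ramseyan factorization tree $Y$ of the linear NLC-expression $\alpha$ together with \Cref{lem:scode}, which tells us that whether two vertices $v_{z_1},v_{z_2}$ are adjacent in $\Xi(\alpha)$ is controlled by (i) the least common ancestor $\beta$ of $z_1,z_2$ in $Y$, (ii) whether the two relevant children of $\beta$ are consecutive, and (iii) a membership condition $\mathrm{col}(z_1)\in\mathrm{eset}_\beta(z_2)$ phrased in terms of the quantities $\mathrm{col}_\beta(z)$, $\mathrm{col}_{\delta}(z)$, $\mathrm{eset}_\beta(z)$. First I would assign to each leaf $z$ a colour consisting of the bounded-size tuple of all the ``local'' data that \Cref{lem:scode} needs in order to decide adjacency of $z$ with any other leaf: namely the sequence, along the at most $g(k)=3k^k$ ancestors $\beta$ of $z$, of the value $\mathrm{col}_\beta(z)\in[k]$, the value $\mathrm{col}_\delta(z)\in[k]$ where $\delta$ is the child of $\beta$ above $z$, and the set $\mathrm{eset}_\beta(z)\subseteq[k]$. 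Since an ancestor contributes one element of $[k]$, one more element of $[k]$, and one subset of $[k]$, the number of possibilities per ancestor is at most $k\cdot k\cdot 2^k\le k^2 2^k\le k\,2^{k+1}$ (after a crude bound), and with $g(k)$ ancestors the total number of colour classes is at most $(k2^{k+1})^{3k^k}=f(k)$; this will be the partition we claim.

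Next I would fix one colour class $S$ and argue that $\Xi(\alpha)[S]$ is a cograph with a cotree of depth at most $g(k)$. The cotree is just the subtree of $Y$ spanned by $S$ (contracting degree-two nodes), whose depth is bounded by the depth of $Y$, i.e.\ at most $3|T|=3k^k=g(k)$ by \Cref{thm:simon} applied to the semigroup $\Gamma_k$ (which has $|\Gamma_k|=k^k$). It remains to check this subtree really is a valid cotree: at each internal node $\beta$, and for each pair of children-subtrees, all cross-pairs $(z_1,z_2)$ with $z_1$ in the left child's class, $z_2$ in the right child's class, must be uniformly adjacent or uniformly non-adjacent. This is exactly where the colour being constant on $S$ pays off: for two leaves of $S$ in distinct children $\delta_1,\delta_2$ of $\beta$, the adjacency criterion of \Cref{lem:scode} in the non-consecutive case depends only on $\mathrm{col}_\beta(z_1)$ and $\mathrm{eset}_\beta(z_2)$, both of which are components of the (constant) colour read off at depth equal to that of $\beta$; and in the consecutive case it depends only on $\mathrm{col}_{\delta_1}(z_1)$ and $\mathrm{eset}_\beta(z_2)$, again constant components of the colour. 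Hence adjacency between a fixed pair of children-classes of $\beta$ within $S$ is a fixed Boolean value, so the node $\beta$ is a legitimate ``union'' or ``join'' node; iterating over all internal nodes gives the cotree structure.

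The main obstacle I anticipate is bookkeeping precision rather than conceptual difficulty: one must be careful that the quantities $\mathrm{col}_\beta(z)$, $\mathrm{col}_\delta(z)$ and $\mathrm{eset}_\beta(z)$ are genuinely determined by the depth of $\beta$ once the colour of $z$ is fixed (so that two leaves of the same colour see comparable data at comparable depths), and that the consecutive-versus-non-consecutive distinction at $\beta$ is itself something the cotree is allowed to encode --- it is, since we may leave the children of $\beta$ in their left-to-right order in the cotree and a ``path'' of iterated union/join nodes realizes any pattern in which each child is uniformly joined to the part of $S$ lying in the children strictly to its left. A secondary point is to double-check the semigroup used: \Cref{lem:scode} only uses that non-consecutive children of a node with more than two children map to idempotents, so it suffices to run \Cref{thm:simon} on $h\colon\Omega_k(V)^+\to\Gamma_k$, giving depth $3k^k$; the leaf-colouring and the subtree-of-$Y$ construction then deliver both bounds $f(k)$ and $g(k)$ simultaneously. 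The argument also shows that $S$ is built from singletons by the two operations $\oplus$ and disjoint union along a tree of depth $\le g(k)$, which is precisely the required cograph-with-bounded-cotree-height conclusion.
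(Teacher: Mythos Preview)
Your overall plan is the same as the paper's, but the execution has a genuine gap in the cograph step, and a secondary arithmetic slip.

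The arithmetic first: you bound the per-ancestor contribution by $k\cdot k\cdot 2^k=k^22^k$ and then claim $k^22^k\le k\,2^{k+1}$. That inequality says $k\le 2$; it is false for $k\ge 3$, so your colour count does not hit $f(k)$ as stated. (The data $\mathrm{col}_\delta(z)$ you add is in fact redundant, since $\delta$ is itself an ancestor one level deeper and its $\mathrm{col}$ value is already recorded there; dropping it would repair the count, but then you lose the piece you rely on in the consecutive case.)

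The real problem is your handling of the consecutive case in \Cref{lem:scode}. Fix an internal node $\beta$ with children $\delta_1,\dots,\delta_n$ (so $n\ge 3$ and all $h(\delta_i)$ equal the same idempotent $r$). Within a colour class $S$, all leaves $z$ share the same values $c=\mathrm{col}_{\delta}(z)$, $c'=\mathrm{col}_\beta(z)$ and $e=\mathrm{eset}_\beta(z)$, and by \Cref{lem:scode} a pair in consecutive children is adjacent iff $c\in e$, while a pair in non-consecutive children is adjacent iff $c'\in e$. Nothing in your colour forces $c=c'$: if no leaf of $S$ lies in $\delta_n$ then $c'=r(c)$, which may differ from $c$, and one can have $c\in e$ while $c'\notin e$ (or vice versa). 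In that situation the ``super-graph'' on the children of $\beta$ meeting $S$ is a path (or the complement of a path), so on four or more children it contains an induced $P_4$, and $G[S]$ is \emph{not} a cograph. In particular, your ``path of iterated union/join nodes'' cannot realize this pattern: when you attach $\delta_i\cap S$ to the previously built piece $(\delta_1\cup\cdots\cup\delta_{i-1})\cap S$, the required adjacency toward $\delta_{i-1}$ is $B$ and toward $\delta_1,\dots,\delta_{i-2}$ is $A\neq B$, so no single join/union step works. This also breaks your depth bound $g(k)$, since unrolling $\beta$ into a caterpillar, even if it were possible, would add unbounded depth.

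The paper sidesteps exactly this issue with one extra bit per ancestor: it fixes a $2$-colouring $\kappa$ of the nodes of $Y$ so that any two consecutive children of a node receive different $\kappa$-values, and includes $\kappa(\beta)$ in the leaf colour. Two leaves of the same colour then lie in children of their least common ancestor with equal $\kappa$-value, hence in \emph{non}-consecutive children, and only the non-consecutive clause of \Cref{lem:scode} is ever invoked. Adjacency at $\beta$ is then uniformly ``$c'\in e$'', each $\beta$ is a single join or union node, and the cotree is just (the relevant part of) $Y$ itself, of depth at most $3|\Gamma_k|=3k^k=g(k)$. The per-ancestor data becomes $(\kappa(\beta),\mathrm{col}_\beta(z),\mathrm{eset}_\beta(z))\in[2]\times[k]\times 2^{[k]}$, giving exactly $2\cdot k\cdot 2^k=k\,2^{k+1}$ possibilities and the stated bound $f(k)$.
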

\begin{proof}
  Let $\kappa$ be a coloring of the nodes $\beta$ with color in $[2]$
  such that two consecutive children of a node have a different color.
  For a letter $z$ of $\alpha$, color $v_z$ by the vector of values
  $(\kappa(\beta),\mathrm{col}_\beta(z),\mathrm{eset}_\beta(z))$ for
  $\beta$ ancestor of $z$. (This gives a vector of at most $3|T|$
  triples). Consider a monochromatic subset of vertices. It is easily
  checked that this set induces a cograph with cotree height at most
  $3|T|$.
\end{proof}
\begin{corollary}
  \label{cor:chiLNLC}
  Classes with bounded linear NLC-width are linearly $\chi$-bounded.
\end{corollary}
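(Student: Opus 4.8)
The plan is to derive \Cref{cor:chiLNLC} directly from \Cref{thm:cog0}, so the work is essentially bookkeeping: translate the statement ``bounded linear NLC-width'' into ``bounded linear rankwidth'' via the inequality $\lrw(G)\le\text{linear NLC-width}(G)\le \mathrm{lcw}(G)\le 2^{\lrw(G)}$ recalled earlier, and then observe that a vertex partition into finitely many cographs yields a linear $\chi$-bound because cographs are perfect. First I would fix $k$ and let $\Cc$ be a class of linear NLC-width at most $k$; by \Cref{thm:cog0} every $G\in\Cc$ admits a vertex partition $V(G)=V_1\cup\dots\cup V_m$ with $m\le f(k)=(k2^{k+1})^{3k^k}$ such that each $G[V_i]$ is a cograph.

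Next I would use that cographs are perfect to bound $\chi(G[V_i])=\omega(G[V_i])\le\omega(G)$ for each $i$. Summing over the parts and using subadditivity of the chromatic number under vertex partitions gives
\[
\chi(G)\;\le\;\sum_{i=1}^{m}\chi(G[V_i])\;\le\;m\,\omega(G)\;\le\;f(k)\,\omega(G).
\]
This shows $\Cc$ is linearly $\chi$-bounded with the explicit linear factor $f(k)$. Finally, to cover the phrasing ``classes with bounded linear NLC-width'' as opposed to a single value $k$: if $\Cc$ has linear NLC-width bounded by some $k$, the argument above already applies with that $k$, and in general one just takes the supremum of the linear NLC-widths over $\Cc$ (finite by hypothesis) as the value of $k$ to feed into \Cref{thm:cog0}.

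There is essentially no obstacle here; the only point requiring a word of care is the legitimacy of the bound $\chi(G)\le\sum_i\chi(G[V_i])$, which follows by taking disjoint palettes for the parts, and the fact that the union of perfect (here cograph) induced subgraphs need not be perfect — but that is irrelevant, since we only need the crude sum bound, not perfection of $G$ itself. One could, if desired, also phrase the conclusion in terms of linear rankwidth: since $\lrw(G)\le\text{linear NLC-width}(G)$ fails to give containment in the right direction, one instead notes that bounded linear rankwidth implies bounded linear cliquewidth implies bounded linear NLC-width (again by the recalled inequalities), so the same statement holds for classes of bounded linear rankwidth, recovering the claim advertised in the paragraph preceding \Cref{thm:cog}.
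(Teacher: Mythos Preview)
Your argument is correct and is exactly the intended one: the paper states \Cref{cor:chiLNLC} as an immediate corollary of \Cref{thm:cog0}, relying (as you do) on the perfection of cographs to get $\chi(G)\le f(k)\,\omega(G)$. The opening remark about translating to linear rankwidth is unnecessary since \Cref{thm:cog0} is already phrased for linear NLC-width, but the core proof is the same.
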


\begin{lemma}
  Assume there exists $\beta$ and letters
  $x_1,y_1,x_2,y_2,\dots,x_\ell,y_\ell$ of $\beta$ (in this order)
  such that $\beta$ is the least common ancestor of each pair of these
  letters, and that there exist $c_x,c_y\in [k]$ and
  $e_x,e_y\subseteq [k]$ with $c_x\in e_y, c_y\notin e_x$, and, for
  each $1\leq i\leq \ell$, $\mathrm{col}_\beta(x_i)=c_x$,
  $\mathrm{eset}_\beta(x_i)=e_x$, $\mathrm{col}_\beta(y_i)=c_y$, and
  $\mathrm{eset}_\beta(x_y)=e_y$.
  Then $\Xi(\alpha)$ contains a semi-induced half-graph of order at
  least $\lfloor\ell/3\rfloor$.
\end{lemma}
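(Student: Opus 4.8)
The plan is to read the half-graph directly off the adjacency description of \Cref{lem:scode}, applied to a suitably spread-out sub-family of the marked letters. Write $w_1,\dots,w_{2\ell}$ for $x_1,y_1,\dots,x_\ell,y_\ell$ listed in their order of occurrence in $\alpha$, so that $w_{2i-1}=x_i$ and $w_{2i}=y_i$, and for a marked letter $w$ let $\delta(w)$ be the child of $\beta$ containing $w$. Since $\beta$ is the least common ancestor of every pair of these letters, no two of them lie in the same child, and the children $\delta(w_1),\dots,\delta(w_{2\ell})$ occur in this relative order among the children of $\beta$ (possibly with further children of $\beta$ interspersed). Call a pair $\{w_p,w_q\}$ (with $p<q$) \emph{far} if $q\ge p+2$. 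For a far pair, $w_{p+1}$ lies strictly between $w_p$ and $w_q$ in $\alpha$, and (again by the least-common-ancestor hypothesis) $\delta(w_{p+1})$ is a child of $\beta$ distinct from both $\delta(w_p)$ and $\delta(w_q)$ and weakly between them in the child order; hence $\delta(w_p)$ is \emph{not} immediately to the left of $\delta(w_q)$. Consequently, by \Cref{lem:scode} (more precisely, by the equivalences established in its proof), the adjacency of $v_{w_p}$ and $v_{w_q}$ in $\Xi(\alpha)$ is governed solely by whether $\mathrm{col}_\beta(w_p)\in\mathrm{eset}_\beta(w_q)$.

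Now plug in the values prescribed by the hypothesis: $\mathrm{col}_\beta(x_i)=c_x$, $\mathrm{eset}_\beta(x_i)=e_x$, $\mathrm{col}_\beta(y_i)=c_y$, $\mathrm{eset}_\beta(y_i)=e_y$, with $c_x\in e_y$ and $c_y\notin e_x$. For a far pair of the form $\{x_i,y_j\}$ with $x_i$ preceding $y_j$ in $\alpha$ (i.e.\ with $i\le j$, since $x_i=w_{2i-1}$ and $y_j=w_{2j}$), adjacency is equivalent to $c_x\in e_y$, which holds; for a far pair $\{x_i,y_j\}$ with $y_j$ preceding $x_i$ (i.e.\ $i>j$), adjacency is equivalent to $c_y\in e_x$, which fails. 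Hence, for every far pair of this type,
\[
v_{x_i}v_{y_j}\in E(\Xi(\alpha))\qquad\Longleftrightarrow\qquad i\le j .
\]

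It remains to select indices so that the relevant pairs are all far. Put $m:=\lfloor\ell/3\rfloor$ and set $i_a:=3a-2$, $j_a:=3a-1$ for $1\le a\le m$; since $3m\le\ell$ we have $j_m=3m-1\le\ell-1$, so all these letters occur among $w_1,\dots,w_{2\ell}$. For all $a,b\in[m]$ the letters $x_{i_a}=w_{6a-5}$ and $y_{j_b}=w_{6b-2}$ have list-indices differing by $|6(a-b)-3|\ge 3$, so $\{x_{i_a},y_{j_b}\}$ is far; moreover $i_a\le j_b$ iff $3a-2\le 3b-1$ iff $a\le b$. By the displayed equivalence, $v_{x_{i_a}}v_{y_{j_b}}\in E(\Xi(\alpha))$ iff $a\le b$. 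Thus $\{v_{x_{i_1}},\dots,v_{x_{i_m}}\}$ on one side and $\{v_{y_{j_1}},\dots,v_{y_{j_m}}\}$ on the other side form a semi-induced half-graph of order $m=\lfloor\ell/3\rfloor$ in $\Xi(\alpha)$, as claimed.

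The delicate point, and the reason for the factor $3$, is that \Cref{lem:scode} gives no usable control over the adjacency of $v_{z_1},v_{z_2}$ when $\delta(z_1)$ is immediately to the left of $\delta(z_2)$: there the intra-subtree color $\mathrm{col}_{\delta(z_1)}(z_1)$ intervenes, which need not equal $c_x$ (or $c_y$). This can only happen for the consecutive pairs $(x_i,y_i)$ and $(y_i,x_{i+1})$ — but precisely those are the pairs closest to the ``diagonal'' of the half-graph, so they cannot simply be discarded. Choosing the indices with gaps of size $3$ is exactly what makes every pair we actually use far while still retaining $\Theta(\ell)$ of them and the correct order type of a half-graph.
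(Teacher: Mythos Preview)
Your proof is correct and follows essentially the same approach as the paper's: thin out the $2\ell$ marked letters so that no two of those you keep lie in consecutive children of $\beta$, and then read off the half-graph adjacencies from the non-consecutive case of \Cref{lem:scode}. The paper compresses this into a single sentence (``by taking at least a third of the indices we can assume that no two letters appear in consecutive children of $\beta$''), whereas you make the selection $i_a=3a-2$, $j_a=3a-1$ explicit and verify the distance and order conditions by hand; your observation that one needs the equivalences in the \emph{proof} of \Cref{lem:scode} (the lemma as stated only gives one direction) is also a useful clarification.
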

\begin{proof}
  By taking at least a third of the indices we can assume that no two
  letters appear in consecutive children of $\beta$. Then it follows
  directly from \Cref{lem:scode} that these vertices semi-induce
  a half-graph.
\end{proof}

\begin{theorem}
  Let $\Cc$ be a class with bounded linear NLC-width. If the graphs in
  $\Cc$ exclude some semi-induced half-graph, then $\Cc$ is a
  transduction of a class with bounded pathwidth.
\end{theorem}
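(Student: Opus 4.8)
The plan is to exhibit, uniformly for $G\in\Cc$, a structure of bounded pathwidth from which $G$ is obtained by one fixed first-order interpretation after a monadic lift. Let $k$ bound the linear NLC-width in $\Cc$, let $N$ be such that $\Cc$ excludes the semi-induced half-graph of order $N$, and put $d:=3k^k$. Given $G$, fix $\alpha\in\mathfrak L_k(V(G))$ with $\Xi(\alpha)=G$ and, by \Cref{thm:simon} applied to $h\colon\Omega_k(V(G))^+\to\Gamma_k$, a Ramseyan factorization tree $Y$ of $\alpha$ of depth at most $d$. Regard $Y$ as a relational structure with the child relation, a root predicate, and the immediate-left-sibling relation, and identify the leaves of $Y$ with $V(G)$. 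Since $Y$ has depth at most $d$, this structure has pathwidth bounded by a function of $d$: a path decomposition is obtained by recursing on the depth, keeping the root of each subtree in every bag of that subtree's part and inserting a size-$3$ bag at each junction between consecutive siblings; hence the class of all such structures has bounded pathwidth. By \Cref{lem:scode}, whether two leaves $z_1,z_2$ are adjacent in $\Xi(\alpha)$ is determined by their least common ancestor $\beta$ in $Y$, by whether the children $\delta_1,\delta_2$ of $\beta$ above $z_1,z_2$ are consecutive siblings, and by the finitely many values $\mathrm{col}_\beta(z_i)$, $\mathrm{col}_{\delta_i}(z_i)$, $\mathrm{eset}_\beta(z_i)$; taken over all $\le d$ ancestors of a leaf these form a bounded amount of data, which I attach to the leaves via unary predicates. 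The least common ancestor and the consecutive-sibling test are first-order expressible over $Y$ because the depth is bounded.

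The sole ingredient of \Cref{lem:scode} that this structure does not yet determine is, for non-consecutive $\delta_1,\delta_2$, which of the two lies to the left of the other in $\alpha$; and this genuinely cannot be recovered in general, since half-graphs have bounded linear NLC-width but, being unstable, are not structurally of bounded pathwidth. So the hypothesis on semi-induced half-graphs must be used exactly here, and it enters through the lemma proved just above: as $\Cc$ excludes the semi-induced half-graph of order $N$, no node $\beta$ of $Y$ admits $3N$ leaves $x_1,y_1,\dots,x_{3N},y_{3N}$ in pairwise-distinct children of $\beta$, in this left-to-right order, with all $x_i$ of a single $(\mathrm{col}_\beta,\mathrm{eset}_\beta)$-type $\tau$ and all $y_i$ of a single type $\tau'$, whenever $\tau$ and $\tau'$ \emph{interact asymmetrically} (exactly one of the two membership conditions of \Cref{lem:scode} relating $\tau$ and $\tau'$ holds). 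As there are only boundedly many types, a pigeonhole argument applied to each asymmetric pair of types, amalgamated over the boundedly many such pairs, shows that the children of $\beta$, read left to right, split into at most $B=B(k,N)$ consecutive ``zones'' such that any two leaves lying in non-consecutive children inside the same zone have types that interact symmetrically --- in which case their adjacency is independent of the left-to-right order of their children. For each leaf $z$ and each of its $\le d$ ancestors $\beta$ I record, by a further bounded family of unary predicates, the zone of $\beta$ that contains the child of $\beta$ above $z$.

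Let $Y^+$ be the resulting monadic lift. A fixed formula $\phi_E(x,y)$ then decides adjacency of two leaves: it locates their least common ancestor $\beta$ and the children $\delta_x,\delta_y$ above them; if $\delta_x,\delta_y$ are consecutive, it orders them via the immediate-sibling relation and applies the ``consecutive'' clause of \Cref{lem:scode} to the recorded $\mathrm{col}_{\delta}$ and $\mathrm{eset}_\beta$ labels; if they are non-consecutive and lie in distinct zones of $\beta$, the zone indices dictate their left-to-right order and the ``non-consecutive'' clause applies verbatim; and if they lie in the same zone of $\beta$, symmetry of the interaction makes the condition $\mathrm{col}_\beta(x)\in\mathrm{eset}_\beta(y)$ symmetric and hence order-free. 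Restricting to the leaves, a first-order definable domain, yields $G=\mathsf I(Y^+)$, uniformly over $\Cc$, with $Y$ ranging over a class of bounded pathwidth; this is precisely the statement that $\Cc$ is a transduction of a class with bounded pathwidth. The part I expect to be the real work is the combinatorial core of the second paragraph: the preceding lemma speaks of leaves in \emph{pairwise-distinct} children and already contains a reduction discarding consecutive children, so one must carefully separate ``$z_1,z_2$ in the same child of $\beta$'' (their least common ancestor is strictly below $\beta$, so this is handled by induction on the depth), ``in consecutive children'', and ``in non-consecutive children'', and then verify that amalgamating the partitions over all asymmetric pairs of types keeps the number of zones bounded in terms of $k$ and $N$ alone.
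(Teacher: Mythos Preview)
Your proposal is correct and follows essentially the same approach as the paper: both use the Ramseyan factorization tree, observe via \Cref{lem:scode} that the only datum not recoverable from bounded local information is the left--right order among non-consecutive siblings, and invoke the half-graph hypothesis (through the lemma immediately preceding the theorem) to bound the number of alternations of each asymmetric $(\mathrm{col}_\beta,\mathrm{eset}_\beta)$-type pair among the children of every node, encoding the resulting boundedly many ``zones'' so that the relevant order becomes first-order readable. The only cosmetic difference is the host structure: the paper realises it as an interval graph (one interval $I_\delta$ per node of $Y$, refined by the zone subintervals for each asymmetric $4$-tuple), while you keep $Y$ itself as a relational structure with child and immediate-sibling relations and place the zone indices on the leaves---both encodings have pathwidth bounded in terms of the depth of $Y$ and the zone bound, so this is a matter of presentation rather than method.
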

\begin{proof}
  We first construct the interval graph $H$, where each node $\delta$
  of $Y$ corresponds to an interval $I_\delta$. The descendent
  relation of $Y$ is then the containment relation in the set of
  intervals.
	
  Now consider an internal node $\delta$ of $Y$ and a $4$-tuple
  $(c_1,e_1,c_2,e_2)\in [k]\times 2^{[k]}\times [k]\times 2^{[k]}$
  with $c_1\in e_2$ and $c_2\notin e_1$, such that at least one
  descendent~$z_1$ of~$\delta$ is such that
  $\mathrm{col}_\delta(z_1)=c_1$ and $\mathrm{col}_\delta(z_1)=e_1$
  and at least one descendent $z_2$ of $\delta$ is such that
  $\mathrm{col}_\delta(z_1)=c_2$ and $\mathrm{col}_\delta(z_1)=e_2$.
  We consider new intervals coming from the split of the $I_\delta$
  into subintervals: These subintervals are obtained by considering
  the children of $\delta$ in order. The subintervals are of three
  types:
  \begin{itemize}
  \item the type $(1)$ contain consecutive children with at least one
    children with $\mathrm{col}_\delta(z)=c_1$ and
    $\mathrm{col}_\delta(z)=e_1$, but no descendant $z$ with
    $\mathrm{col}_\delta(z)=c_2$ and $\mathrm{col}_\delta(z)=e_2$;
  \item the type $(2)$ contain consecutive children with at least one
    children with $\mathrm{col}_\delta(z)=c_2$ and
    $\mathrm{col}_\delta(z)=e_2$, but no descendant $z$ with
    $\mathrm{col}_\delta(z)=c_1$ and $\mathrm{col}_\delta(z)=e_1$;
  \item the type $(1+2)$ contains a single children with both a
    descendent $z_1$ with $\mathrm{col}_\delta(z_1)=c_1$ and
    $\mathrm{col}_\delta(z_1)=e_1$ and a descendent $z_2$ with
    $\mathrm{col}_\delta(z_1)=c_2$ and $\mathrm{col}_\delta(z_1)=e_2$.
  \end{itemize}
  The division of $I_\delta$ into subintervals is done in such a way
  that no two consecutive subintervals are both of type $(1)$ or both
  of type $(2)$. Note that such a division into subintervals, though
  not uniquely defined, always exists.
	
  Assume that the number of subintervals into which we divided
  $I_\delta$ is $N$. Then we can select, among the descendants of the
  distinct children of $\delta$ some vertices
  $\alpha_1,\beta_1,\dots,\alpha_n,\beta_n$ (with $n\geq N/4$) such
  that \mbox{$\mathrm{col}_\delta(\alpha_i)=c_1$},
  \mbox{$\mathrm{col}_\delta(\alpha_i)=e_1$},
  $\mathrm{col}_\delta(\beta_i)=c_2$, and
  $\mathrm{col}_\delta(\beta_i)=e_2$. It is easily checked that the
  vertices $\alpha_1,\beta_1,\dots$, $\alpha_n,\beta_n$ semi-induce a
  half-graph of order $n$. As $\Cc$ excludes some semi-induced
  half-graph we deduce that $I_\delta$ is divided into a bounded
  number of subintervals, which can be numbered using a bounded number
  of unary predicates.

  Let $u,v$ be vertices, and let $\delta$ be their least common
  ancestor in $Y$. The values of $\mathrm{col}_\delta$ and
  $\mathrm{eset}_\delta$ for $u$ and $v$ are known from the predicates
  at these vertices.  Let $c_1=\mathrm{col}_\delta(u)$,
  $e_1=\mathrm{eset}_\delta(u)$, $c_2=\mathrm{col}_\delta(v)$, and
  $e_2=\mathrm{eset}_\delta(v)$. If $c_1\in e_2$ and $c_2\in e_1$ then
  $u$ and $v$ are adjacent. If $c_1\notin e_2$ and $c_2\notin e_1$
  then $u$ and $v$ are non-adjacent. In the last case, without loss of
  generality, we can assume $c_1\in e_2$ and $c_2\notin e_1$.
	
  The two vertices $u$ and $v$ cannot belong to a same subinterval of
  $I_\delta$. From the numbering marks associated to the subintervals
  that contain $u$ and $v$ we deduce which of $u$ and $v$ is smaller
  than the other and hence the adjacency between~$u$ and $v$.
\end{proof}

From this we deduce.
\begin{theorem}
  \label{thm:stlrw}
  Let $\Cc$ be a class of graphs with linear rankwidth at most
  $r$. Then the following are equivalent:
  \begin{enumerate}
  \item $\Cc$ is stable,
  \item $\Cc$ is monadically stable,
  \item $\Cc$ is sparsifiable,
  \item $\Cc$ has $2$-covers with bounded shrubdepth,
  \item $\Cc$ has structurally bounded expansion,
  \item $\Cc$ is a transduction of a class with bounded pathwidth,
  \item $\Cc$ excludes some semi-induced half-graph.
  \end{enumerate}
\end{theorem}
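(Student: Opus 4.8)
The plan is to establish $(1)\iff(2)$ once and for all, and then to close the remaining conditions into a single chain of implications $(2)\Rightarrow(7)\Rightarrow(6)\Rightarrow(5)\Rightarrow(4)\Rightarrow(3)\Rightarrow(2)$. Before starting I would record two facts about $\Cc$: first, that it has linear NLC-width at most $2^r$ (via the inequality $\lrw(G)\leq\text{linear NLC-width}(G)\leq\mathrm{lcw}(G)\leq 2^{\lrw(G)}$), so that all the results proved above for classes of bounded linear NLC-width apply; and second, that $\Cc$ has bounded rankwidth and is therefore monadically NIP by the Remark above. The equivalence $(1)\iff(2)$ then follows immediately: by \Cref{crl:mon-stable} a monadically NIP class is monadically stable if and only if it is stable.

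For $(2)\Rightarrow(7)$ I would argue by contraposition: if $\Cc$ semi-induces half-graphs of unbounded order, then the edge formula $E(x,y)$ already witnesses the order property in $\Cc$, so $\Cc$ is unstable and hence not monadically stable. The implication $(7)\Rightarrow(6)$ is exactly the theorem stated just above, that every class of bounded linear NLC-width excluding some semi-induced half-graph is a transduction of a class of bounded pathwidth. For $(6)\Rightarrow(5)$ I would use that a class of bounded pathwidth has bounded treewidth, hence bounded expansion, so a transduction of such a class has structurally bounded expansion by definition. For $(5)\Rightarrow(4)$ I would invoke the theorem of \cite{SBE_drops} quoted above, which identifies structurally bounded expansion with admitting low shrubdepth covers, and specialise the cover parameter to $p=2$ to obtain a finite $2$-cover of $\Cc$ of bounded shrubdepth. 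Finally $(4)\Rightarrow(3)$ is the lemma of \cite{SBE_drops} that every class admitting $2$-covers of bounded shrubdepth is sparsifiable.

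The step $(3)\Rightarrow(2)$ is the only one requiring more than bookkeeping. Since $\Cc$ is monadically NIP and sparsifiable, the Corollary above applies and $\Cc$ is structurally nowhere dense; write $\Cc$ as a transduction of a nowhere dense class $\Dd$. By \Cref{thm:Adler} the class $\Dd$ is monadically stable, i.e.\ $\Dd$ does not transduce the class $\mathscr H$ of half-graphs; since transductions compose, if $\Cc$ transduced $\mathscr H$ then so would $\Dd$, a contradiction. Hence $\Cc$ does not transduce $\mathscr H$, which is precisely monadic stability of $\Cc$, and the cycle is closed.

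I expect the substance of the theorem to be already discharged in the $(7)\Rightarrow(6)$ step, namely the factorisation-forest analysis of the preceding lemmas, so that the proof here is essentially an assembly of established facts; the only points to handle carefully are the composition of transductions in $(3)\Rightarrow(2)$, together with the implicit use that ``structurally nowhere dense'' entails monadic stability.
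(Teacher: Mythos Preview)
Your proposal is correct and is exactly the assembly the paper intends: the paper gives no explicit proof beyond ``From this we deduce,'' and your chain $(1)\Leftrightarrow(2)$ via monadic NIP of bounded rankwidth together with \Cref{crl:mon-stable}, then $(2)\Rightarrow(7)\Rightarrow(6)\Rightarrow(5)\Rightarrow(4)\Rightarrow(3)\Rightarrow(2)$ using the preceding theorem, the characterisation of structurally bounded expansion by low shrubdepth covers, the sparsifiability lemma, and the corollary on sparsifiable monadically NIP classes, is precisely the intended route. The care you take in $(3)\Rightarrow(2)$ (structurally nowhere dense $\Rightarrow$ monadically stable via \Cref{thm:Adler} and composition of transductions) is appropriate, since this is indeed the only step not literally stated earlier.
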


\section{Linear rankwidth}
\label{sec:lrw}
\smallskip\noindent In this section we present a second
proof for the result that classes with bounded linear rankwidth
are linearly $\chi$-bounded and thereby provide improved
constants. 

\subsection{Notation}

\smallskip\noindent
For sets $M,N\subseteq V(G)$ we define $M\oplus N$ as 
the symmetric difference of $M$ and $N$, that is, 
$v\in M\oplus N$ if and only if $v\in M\cup N$ but 
$v\notin M\cap N$. 
%
%
For $t\in V$, we define $V^{>t}\coloneqq 
\{v : v>t\}$, $V^{<t}\coloneqq 
\{v : v<t\}$ and $V^{\leq t}\coloneqq \{v : v\leq t\}$. 
For $v\in V$ we denote by~$N(v)$ the neighborhood 
of $v\in G$ (where $v$ not included). We let $N^{<t}(v)\coloneqq N(v)\cap V^{<t}$ and define similarly $N^{>t}$ and $N^{\leq t}$.
For $M\subseteq V(G)$ we define $N_\oplus(M)\coloneqq
\bigoplus_{v\in M} N(v)$ and $N_\oplus^{>t}(M)\coloneqq
N_\oplus(M)\cap V^{>t}$. 

\begin{remark}\label{rem:larger-neighborhoods}
If $t<t'$, then $N_\oplus^{>t}(M)=N_\oplus^{>t}(N)$ implies 
$N_\oplus^{>t'}(M)=N_\oplus^{>t'}(N)$. 
\end{remark}

For $t\in V$ the closure of $\{N^{>t}(v) : v\leq t\}$ under
$\oplus$ is a vector space over $\oplus$ and scalar multiplication
with $0$ and~$1$, where $0\cdot M=\emptyset$ and 
$1\cdot M=M$. 

For $t\in V$, we call an inclusion-minimal subset $B\subseteq V_{\leq t}$ a \emph{neighbor basis 
for $V^{>t}$} if for every
$v\leq t$ there exists $B'\subseteq B$ such that 
$N^{>t}(v)=N_\oplus^{>t}(B')$. 
In other words, $B$ is a neighbor basis for $V^{>t}$ if 
$\{N^{>t}(v)\mid v\in B\}$ forms a basis for the space spanned by $\{N^{>t}(v)\mid v\leq t\}$. 

\smallskip
The following is 
immediate by the definition of linear rankwidth. 

\begin{remark}
As $G$ has linear rankwidth at most $r$, for every $t\in V$ 
every neighbor basis for $V^{>t}$ of order at most $r$. 
\end{remark}

\subsection{Activity intervals and active basis}

\smallskip\noindent
For
$t\in V$ we define the \emph{active basis $B_t$ at~$t$} as the set
\begin{equation}
B_t=\{v\leq t\mid (\nexists B\subseteq V^{<v})\ N^{>t}(v)=N_\oplus^{>t}(B)\}.	
\end{equation}

Note that this is  the
lexicographically least neighborhood basis of $V^{>t}$. 
\begin{remark}
	If the linear order of $V(G)$ is given, the set of all neighborhood basis $B_t$ for $t\in V(G)$ can be computed in linear time.
\end{remark}
%

To each $v\in V$ we associate its {\em activity
  interval} $I_{v}$ defined as the interval $[v,\tau(v)] $
starting at $v$ and ending at the minimum vertex $\tau(v)\geq v$
such that $v\notin B_{\tau(v)}$. 
%
Note that $\tau(v)$
is well defined as we have $B_{\max V}=\emptyset$.



We extend the definitions of the activity intervals and of the $\tau$ function to all subsets $M$ of $V(G)$ by
\begin{equation}
	I_M\coloneqq \bigcap_{v\in M}I_v\quad \text{ and } \quad \tau(M)=\min_{v\in M}\tau(v).
\end{equation}

Note that either $I_M=\emptyset$ or $I_M=[\max M,\tau(M)]$.
We call a set $M$ {\em active} if $|I_M|>1$, that is, if
$\max M<\tau(M)$. 
We call a vertex $v$ \emph{active} if the singleton set $\{v\}$
is active. 


\smallskip
For every $v\in V$, as $v\notin B_{\tau(v)}$, there 
exists a unique $F_0(v)\subseteq B_{\tau(v)}$ with 
\begin{equation}
\label{eq:F0}
N^{>\tau(v)}(v) = N_\oplus^{>\tau(v)}(F_0(v)).
\end{equation}	

Note that if $F_0(v)\neq \emptyset$, then we have 
\begin{equation}
\label{eq:interval}
\max F_0(v)<v\leq \tau(v)<\tau(F_0(v)).
\end{equation}

Hence, in this case, the set $F_0(v)$ is active.

\begin{remark}\label{rem:active-F}
Assume that $M$ is an active set and let $v\in M$. 
\begin{enumerate}
\item If 
$\tau(v)> \tau(M)$, then $v\in B_{\tau(M)}$. 
\item If $\tau(v)= \tau(M)$, then $F_0(v)
\subseteq B_{\tau(M)}$. 
\end{enumerate}
\end{remark}



\subsection{The F-tree}

\smallskip\noindent
We define a mapping $F$ extending $F_0$, that will define a rooted tree on the set  $Z$ consisting of all active sets, all singleton sets $\{v\}$
for $v\in V(G)$, and $\emptyset$ (which will be the root of the tree and the unique fixed point of $F$). Before we define $F$ we make 
one more observation. 

\begin{lemma}
\label{lem:tau}
	Let $u,v\in V(G)$ be active. If $\tau(u)=\tau(v)$, then $u=v$.
\end{lemma}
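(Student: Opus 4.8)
The plan is to argue by contradiction. Suppose $u\neq v$ are active with $\tau(u)=\tau(v)=t$, say $u<v$. Since $u$ and $v$ are active they lie strictly below $t$, so $t$ is not the least vertex and has an immediate predecessor $t^-$, with $u\le t^-<t$ and $v\le t^-<t$. From the definition of $\tau$ we get $u,v\in B_{t^-}$ (both are still in the active basis just before $t$), while $u\notin B_{\tau(u)}=B_t$ and $v\notin B_{\tau(v)}=B_t$ (both leave it at $t$). So the statement is really an instance of the claim that, on passing from $t^-$ to $t$, at most one vertex can drop out of the active basis.

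To prove this, first I would unwind $u\notin B_t$: since $u<t$, it yields a set $B\subseteq V^{<u}$ with $N^{>t}(u)=N_\oplus^{>t}(B)$, so the set $S_u\coloneqq\{u\}\cup B$ (with $u$ as its maximum, as $u\notin B$) satisfies $N_\oplus^{>t}(S_u)=N^{>t}(u)\oplus N_\oplus^{>t}(B)=\emptyset$; similarly $v\notin B_t$ produces $S_v\coloneqq\{v\}\cup B'$ with $B'\subseteq V^{<v}$, maximum $v$, and $N_\oplus^{>t}(S_v)=\emptyset$. The key input is that the step from $V^{>t^-}$ to $V^{>t}$ forgets exactly one coordinate, namely $t$: for any $X\subseteq V^{<t}$ one has $N_\oplus^{>t^-}(X)\subseteq N_\oplus^{>t}(X)\cup\{t\}$. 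Applied to $S_u$ this gives $N_\oplus^{>t^-}(S_u)\in\{\emptyset,\{t\}\}$, and the value $\emptyset$ is excluded because $N_\oplus^{>t^-}(S_u)=\emptyset$ would exhibit $N^{>t^-}(u)=N_\oplus^{>t^-}(S_u\setminus\{u\})$ with $S_u\setminus\{u\}=B\subseteq V^{<u}$, contradicting $u\in B_{t^-}$. Hence $N_\oplus^{>t^-}(S_u)=\{t\}$, and by the same argument $N_\oplus^{>t^-}(S_v)=\{t\}$, so $N_\oplus^{>t^-}(S_u\oplus S_v)=N_\oplus^{>t^-}(S_u)\oplus N_\oplus^{>t^-}(S_v)=\emptyset$. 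Now $v\in S_v$ but $v\notin S_u\subseteq V^{\le u}\subseteq V^{<v}$, so $v\in S_u\oplus S_v$, and since $S_u\oplus S_v\subseteq V^{\le v}$ this forces $v=\max(S_u\oplus S_v)$. Setting $B''\coloneqq(S_u\oplus S_v)\setminus\{v\}\subseteq V^{<v}$ we obtain $N^{>t^-}(v)=N_\oplus^{>t^-}(B'')$, contradicting $v\in B_{t^-}$. This contradiction gives $u=v$.

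The argument uses nothing beyond the definitions of $B_t$, $\tau$, and activity, together with the basic $\mathbb{F}_2$-algebra of the sets $N_\oplus^{>t}(\cdot)$; in particular the linear rankwidth bound $r$ plays no role here. I expect the only delicate point to be the bookkeeping with symmetric differences and with the maxima of $S_u$, $S_v$, and $S_u\oplus S_v$ — making sure every auxiliary set produced really sits inside the appropriate initial segment $V^{<u}$ or $V^{<v}$, so that one can invoke the defining property of $B_{t^-}$. Everything else is forced once the ``one forgotten coordinate'' observation is recorded; alternatively one could isolate that observation as a lemma stating $|B_{t^-}\setminus B_t|\le 1$ for every $t$ with a predecessor, after which the statement is immediate.
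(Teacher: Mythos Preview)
Your proof is correct and follows essentially the same approach as the paper's: both arguments observe that passing from $V^{>t^-}$ to $V^{>t}$ can change $N_\oplus$ only in the coordinate $t$, so the witnesses for $u\notin B_t$ and $v\notin B_t$ must each pick up exactly $\{t\}$ at level $t^-$, and combining them cancels $t$ and produces a forbidden representation of $N^{>t^-}(v)$ from $V^{<v}$. The paper just names the witnessing sets $F_0(u)$ and $F_0(v)$ (which satisfy $\max F_0(u)<u$, so they play the same role as your $B$ and $B'$) and writes out the combination as $\{u\}\oplus F_0(u)\oplus F_0(v)$ rather than $(S_u\oplus S_v)\setminus\{v\}$; these are the same set.
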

\begin{proof}
  Let $t=\tau(u)=\tau(v)$ and let $t'$ be the predecessor of $t$ in the linear order. Assume for contradiction that $u\neq v$.   By definition of $F_0$ we have
	$N^{>t}(u)=N^{>t}(F_0(u))$ and $N^{>t}(v)=N^{>t}(F_0(v))$.
	We have $N^{>t'}(u)\neq N^{>t'}(F_0(u))$ as otherwise
	$\tau(u)\leq t'$. As $N^{>t'}(u)\oplus N^{t}(u)\subseteq \{t\}$ and $N^{>t'}(F_0(u))\oplus N^{t}(F_0(u))\subseteq \{t\}$, we have
	$N^{>t'}(F_0(u))=N^{>t'}(u)\oplus\{t\}$.
	Similarly, we have $N^{>t'}(F_0(v))=N^{>t'}(v)\oplus\{t\}$. Assume without loss of generality that $u<v$. Then
	$N^{>t'}(v)=N^{>t'}(\{u\})\oplus N^{>t'}(F_0(u))\oplus N^{>t'}(F_0(v))$. 
	As $\max (\{u\}\cup F_0(u)\cup F_0(v))<v$ we deduce that $\tau(v)\leq t'$, contradicting $\tau(v)=t$.
\end{proof}

\begin{corollary}
For each active set $M\subseteq V(G)$ there exists exactly one $v\in M$
with $\tau(v)=\tau(M)$. 
\end{corollary}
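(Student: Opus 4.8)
The plan is to reduce the claim to \Cref{lem:tau}. First note that since $\tau(M)=\min_{v\in M}\tau(v)$, the minimum is attained, so there is at least one $v\in M$ with $\tau(v)=\tau(M)$; the only content of the statement is therefore uniqueness. So suppose $u,v\in M$ both satisfy $\tau(u)=\tau(v)=\tau(M)$, and aim to show $u=v$.

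The key observation is that, because $M$ is active, every element of $M$ is in fact an active vertex. Indeed, for any $w\in M$ we have the chain
\[
w\ \le\ \max M\ <\ \tau(M)\ \le\ \tau(w),
\]
where the strict inequality is exactly the hypothesis that $M$ is active ($\max M<\tau(M)$) and the last inequality holds because $\tau(M)$ is a minimum over $M$. Hence $w<\tau(w)$, which by definition means the singleton $\{w\}$ is active, i.e.\ $w$ is an active vertex. Applying this to $w=u$ and $w=v$, we conclude that $u$ and $v$ are active vertices with $\tau(u)=\tau(v)$, so \Cref{lem:tau} forces $u=v$.

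There is essentially no obstacle in this argument; the whole proof is bookkeeping with the definitions, the one point to get right being to recognize that activity of the \emph{set} $M$ propagates to activity of each of its \emph{elements} via the inequality chain above, after which \Cref{lem:tau} closes the case.
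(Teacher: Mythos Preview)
Your proof is correct and is exactly the intended argument: the paper states this corollary without proof, as an immediate consequence of \Cref{lem:tau}, and your derivation spells out precisely that deduction. The one nontrivial observation you highlight---that activity of the set $M$ forces each of its elements to be an active vertex via $w\le\max M<\tau(M)\le\tau(w)$---is the point that makes \Cref{lem:tau} applicable, and you handle it correctly.
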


The mapping $F\colon Z\rightarrow Z$ is defined as 
\begin{equation}
	F(M)=\begin{cases}
		\emptyset&\text{if }M=\emptyset,\\
M\oplus \{v\}\oplus F_0(v) &\text{for the 
unique $v\in M$}\\
& \text{with $\tau(v)=\tau(M)$, otherwise}.
	\end{cases}
\end{equation}
\begin{remark}
	If the linear order on $V(G)$ is given then $F$-mapping on $Z$ can be computed in linear time. (Note that $|Z|\leq 2^r|V(G)|$.)
\end{remark}
\medskip

The following lemma shows for every active set $M$, either 
$F(M)=\emptyset$ or $F(M)$
is active, and thus $F(M)\in Z$ and~$F$ is well defined. 
Furthermore, the lemma shows that 
$I_{F(M)}\supset I_M$. 

\begin{lemma}
\label{claim:tauinc}
Let $M\in Z$. Then $F(M)\subseteq B_{\tau(M)}$ and furthermore, 
either $F(M)=\emptyset$, or 
$\max F(M)\leq \max M<\tau(M)<\tau(F(M))$ and 
hence $F(M)$ is active. 
\end{lemma}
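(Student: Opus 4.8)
The statement concerns the mapping $F$ on the set $Z$ of active sets, singletons, and $\emptyset$: for $M\in Z$ we must show $F(M)\subseteq B_{\tau(M)}$, and that either $F(M)=\emptyset$ or $\max F(M)\le\max M<\tau(M)<\tau(F(M))$. The plan is to dispose of the trivial cases first and then analyze the single nontrivial case, namely $M$ active with $M\ne\emptyset$. If $M=\emptyset$, then $F(M)=\emptyset$ by definition and there is nothing to prove. If $M$ is a singleton $\{v\}$ with $v$ active, then $F(M)=F_0(v)$ (since $v$ is the unique element with $\tau(v)=\tau(\{v\})$), and the conclusion is exactly what \eqref{eq:interval} gives, together with the defining property \eqref{eq:F0} of $F_0(v)$ which places $F_0(v)\subseteq B_{\tau(v)}=B_{\tau(M)}$.

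**The main case.** Now suppose $M$ is active with $|M|\ge 2$ (or more uniformly, $M$ active and not handled above). By the corollary following \Cref{lem:tau}, there is a unique $v\in M$ with $\tau(v)=\tau(M)$; write $t=\tau(M)$, so $F(M)=M\oplus\{v\}\oplus F_0(v)$. First I would verify $F(M)\subseteq B_t$. Every $w\in M\setminus\{v\}$ has $\tau(w)>\tau(M)=t$ (strict, by uniqueness and \Cref{lem:tau}), so by \Cref{rem:active-F}(1), $w\in B_t$. Also $F_0(v)\subseteq B_{\tau(v)}=B_t$ by the definition of $F_0$ in \eqref{eq:F0}, equivalently by \Cref{rem:active-F}(2). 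Since $v\notin B_t$ (that is the defining property of $\tau(v)$), the symmetric difference $M\oplus\{v\}\oplus F_0(v)$ removes $v$ and combines subsets of $B_t$, so it lies in $B_t$. This gives the first assertion and also, if $F(M)\ne\emptyset$, that $\max F(M)\le\max B_t$, but we need the sharper bound $\max F(M)\le\max M$: every element of $M\setminus\{v\}$ is $\le\max M$, and every element of $F_0(v)$ is $<v\le\max M$ by \eqref{eq:interval} (or trivially if $F_0(v)=\emptyset$), so indeed $\max F(M)\le\max M$.

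**Lower bound on $\tau(F(M))$.** The remaining inequality $\tau(M)<\tau(F(M))$ when $F(M)\ne\emptyset$ is the crux. The idea is to show that every vertex $u$ contributing to $F(M)$ stays in the active basis past $t=\tau(M)$, i.e. $\tau(u)>t$. For $u\in M\setminus\{v\}$ this was already noted. For $u\in F_0(v)$ we use \eqref{eq:interval}: $\tau(F_0(v))>\tau(v)=t$, so $\tau(u)\ge\tau(F_0(v))>t$ for each $u\in F_0(v)$. Since $F(M)\subseteq (M\setminus\{v\})\cup F_0(v)$ (as a set, after cancellation), every element $u$ of $F(M)$ satisfies $\tau(u)>t$, hence $\tau(F(M))=\min_{u\in F(M)}\tau(u)>t=\tau(M)$. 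Combined with $\max F(M)\le\max M<\tau(M)$ (the last inequality because $M$ is active), we get $\max F(M)<\tau(F(M))$, so $F(M)$ is active and therefore $F(M)\in Z$, and the chain $\max F(M)\le\max M<\tau(M)<\tau(F(M))$ holds, which in particular yields $I_{F(M)}\supsetneq I_M$ as advertised.

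**Anticipated obstacle.** The one place that needs care is the set-level cancellation: $F(M)$ is defined as a symmetric difference, so an element of $M$ could cancel against an element of $F_0(v)$, and conversely an element might appear in $F(M)$ even though it lies in only one of the two sets $M\setminus\{v\}$ and $F_0(v)$. The safe move is to argue at the level of the union: $F(M)\subseteq (M\setminus\{v\})\cup F_0(v)$, and then all the bounds on $\max$ and $\tau$ are proved for every element of this union, so they hold a fortiori for every element of $F(M)$. One should double-check that $v$ genuinely does not reappear — it cannot, since $v\notin M\setminus\{v\}$ and $v\notin F_0(v)$ (as $\max F_0(v)<v$ when $F_0(v)\ne\emptyset$) — so $F(M)$ is a subset of $B_t$ not containing $v$, consistent with everything above. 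No other subtlety arises; the rest is bookkeeping with the already-established \eqref{eq:F0}, \eqref{eq:interval}, \Cref{lem:tau}, and \Cref{rem:active-F}.
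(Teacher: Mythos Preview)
Your proof is correct and follows essentially the same approach as the paper: dispose of $M=\emptyset$ and singletons via \eqref{eq:F0} and \eqref{eq:interval}, and for general active $M$ invoke \Cref{rem:active-F} to place $(M\setminus\{v\})\cup F_0(v)$ inside $B_{\tau(M)}$ and to bound $\max$ and $\tau$ elementwise. Your explicit attention to the symmetric-difference cancellation (arguing via the containing union) is more careful than the paper's terse version, but the underlying argument is identical.
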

\begin{proof}
The statement is obvious if $M=\emptyset$. 
For $M=\{v\}$, the statement is immediate from the 
definition of $F_0(v)$ and~\cref{eq:interval}. 
For all other $M\in Z$, according to \cref{rem:active-F}
we have for each $v\in M$
either $v\in B_{\tau(M)}$ if $\tau(v)> \tau(M)$, 
or $F_0(v)\subseteq B_{\tau(M)}$ if $\tau(v)=\tau(M)$. 
This implies $F(M)\subseteq B_{\tau(M)}$. 
Finally, if $F(M)\neq \emptyset$, then 
$\max F(M)\leq \max M<\tau(M)<\tau(F(M))$ follows
from the fact that these inequalities hold for all 
$v\in M$ with $\tau(v)>\tau(M)$ and for $F_0(v)$
for the unique $v\in M$ with $\tau(v)=\tau(M)$ according to \cref{eq:interval}. 
\end{proof}

\smallskip
The mapping $F$ guides the process of iterative referencing
and ensures that, for an active set~$M$, if $t\geq\tau(M)$, then the 
set $N_\oplus^{>t}(M)$ can be rewritten as 
$N_\oplus^{>t}(F(M))$. 
This property is stated in the next lemma.

\begin{lemma}
\label{lem:F}
Let  $M\in Z\setminus\{\emptyset\}$ and let $w\in V(G)$. 
If $w>\tau(M)$, then 
\[
    w\in N_\oplus(M) \Leftrightarrow w\in N_\oplus(F(M)). 
\]
\end{lemma}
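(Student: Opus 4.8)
The statement to prove is \Cref{lem:F}: for $M\in Z\setminus\{\emptyset\}$ and $w>\tau(M)$, we have $w\in N_\oplus(M)\iff w\in N_\oplus(F(M))$. I would split into the three cases of the definition of $F$, mirroring the case analysis in \Cref{claim:tauinc}.

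First, the case $M=\{v\}$. Here $F(M)=F_0(v)$ by definition (since the unique element with $\tau$-value $\tau(M)$ is $v$ itself, and $\{v\}\oplus\{v\}\oplus F_0(v)=F_0(v)$). Then $w>\tau(M)=\tau(v)$ means $w\in V^{>\tau(v)}$, and \cref{eq:F0} states precisely that $N^{>\tau(v)}(v)=N_\oplus^{>\tau(v)}(F_0(v))$, i.e.\ $w\in N(v)\iff w\in N_\oplus(F_0(v))$ for all such $w$. So this case is immediate from \cref{eq:F0}.

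Second, the main case: $M$ active with at least two elements (or a general active set), and let $v\in M$ be the unique element with $\tau(v)=\tau(M)$, so $F(M)=M\oplus\{v\}\oplus F_0(v)$. I would write $M'=M\setminus\{v\}$, so $F(M)=M'\oplus F_0(v)$ (using $v\notin F_0(v)$, which follows from $\max F_0(v)<v$ by \cref{eq:interval}, and $v\notin M'$). Now compute in the vector space over $\mathbb F_2$ of subsets of $V^{>t}$ for $t=\tau(M)$: since $w>t$, membership of $w$ in $N_\oplus^{>t}(\cdot)$ is an $\mathbb F_2$-linear functional, so
\[
\mathbf 1[w\in N_\oplus(F(M))] = \mathbf 1[w\in N_\oplus(M')] + \mathbf 1[w\in N_\oplus(F_0(v))]
\]
(mod $2$). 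By \cref{eq:F0}, and since $w>t=\tau(v)$, we have $\mathbf 1[w\in N_\oplus(F_0(v))]=\mathbf 1[w\in N(v)]$. Hence the right side equals $\mathbf 1[w\in N_\oplus(M')]+\mathbf 1[w\in N(v)] = \mathbf 1[w\in N_\oplus(M'\cup\{v\})]=\mathbf 1[w\in N_\oplus(M)]$, again because $v\notin M'$ so $N_\oplus(M)=N_\oplus(M')\oplus N(v)$ restricted to $V^{>t}$. This gives $w\in N_\oplus(M)\iff w\in N_\oplus(F(M))$, as desired. The only subtlety is being careful that all the symmetric differences, when intersected with $V^{>t}$, behave linearly — but that is exactly \Cref{rem:larger-neighborhoods} combined with the vector-space structure noted just before the definition of neighbor basis, and it is routine bookkeeping rather than a real obstacle.

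**Main obstacle.** There is no deep difficulty here; the only thing to get right is the edge cases — confirming $v\notin F_0(v)$ and $v\notin M\setminus\{v\}$ so that the symmetric-difference identity $M\oplus\{v\}\oplus F_0(v)$ genuinely computes the set-theoretic combination I want, and confirming that $w>\tau(M)\geq\tau(v)$ so that \cref{eq:F0} applies to $v$ (and, for elements $u\in M\setminus\{v\}$, that they play no role beyond contributing their neighborhoods to $N_\oplus(M')$, which is automatic). One should also note $F(M)\in Z$ and nonempty-or-empty behaves fine: if $F(M)=\emptyset$ then $N_\oplus(F(M))=\emptyset$ and the identity reads $w\in N_\oplus(M)\iff$ false, which is consistent since $F(M)=\emptyset$ forces $N_\oplus^{>t}(M)=\emptyset$ by the same linear computation. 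So the proof is essentially a one-line $\mathbb F_2$-linearity argument once the definitions are unwound.
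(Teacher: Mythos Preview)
Your proposal is correct and follows essentially the same approach as the paper: both arguments reduce to the $\mathbb F_2$-linearity of $N_\oplus$ together with \cref{eq:F0}, observing that $F(M)=(M\setminus\{v\})\oplus F_0(v)$ so that swapping $N^{>t}(v)$ for $N_\oplus^{>t}(F_0(v))$ yields $N_\oplus^{>t}(F(M))=N_\oplus^{>t}(M)$. Your write-up is somewhat more explicit about the edge cases ($v\notin F_0(v)$, the degenerate $F(M)=\emptyset$ situation), but the underlying computation is identical.
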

\begin{proof}
If $M=\{v\}$ for $v\in V(G)$, then this follows from~\eqref{eq:F0}.
Otherwise, $M$ is an active set.
Let $t=\tau(M)$ and let $v\in M$ be the unique element
with $\tau(v)=t$. 
Then we have $N_\oplus^{>t}(F(v))=
N_\oplus^{>t}(v)$, and hence 
\begin{align*}
N_\oplus^{>t}(F(M))
&= N_\oplus^{>t}(\{v\})\oplus N_\oplus^{>t}(F(M)\oplus\{v\})\\
&= N_\oplus^{>t}(F_0(v))\oplus N_\oplus^{>t}(F(M)\oplus\{v\})\\
&=N_\oplus^{>t}(M).	
\end{align*}
%
\end{proof}

This lemma can be applied repeatedly to $M, F(M),$ etc.\ until 
$F^k(M)=\emptyset$, or until for some given $w\in V(G)$ we
have $\tau(F^k(M))\geq w$. This justifies to introduce, for 
distinct vertices $u$ and $v$ the value
\begin{equation}
	\xi(u,v):=\min\{k\mid v\in I_{F^k(u)}\text{ or }F^k(u)=\emptyset\}
\end{equation}



\smallskip
As a direct consequence of the previous lemma we have
\begin{corollary}
\label{cor:fund}
For distinct $u,v\in V(G)$ we have
\[
\{u,v\}\in E(G) \Longleftrightarrow
\begin{cases}
	v\in N_\oplus(F^{\xi(u,v)}(u))&\text{if }u<v,\\
	u\in N_\oplus(F^{\xi(v,u)}(v))&\text{if }u>v.
\end{cases}
\]
\end{corollary}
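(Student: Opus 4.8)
The plan is to reduce the adjacency of $u$ and $v$ to a telescoping chain of applications of \Cref{lem:F}. By symmetry it suffices to treat the case $u<v$, the case $u>v$ being obtained verbatim by exchanging the roles of $u$ and $v$; so assume $u<v$ and write $k=\xi(u,v)$. Since $F^0(u)=\{u\}$ and $N_\oplus(\{u\})=N(u)$, the base case $\{u,v\}\in E(G)\iff v\in N_\oplus(F^0(u))$ is immediate, so it suffices to establish the chain
\[
v\in N_\oplus(F^0(u))\iff v\in N_\oplus(F^1(u))\iff\cdots\iff v\in N_\oplus(F^k(u)),
\]
from which $\{u,v\}\in E(G)\iff v\in N_\oplus(F^{\xi(u,v)}(u))$ is read off; this also settles the case $F^k(u)=\emptyset$, since then $v\notin N_\oplus(\emptyset)=\emptyset$ forces $\{u,v\}\notin E(G)$. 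Each link $v\in N_\oplus(F^j(u))\iff v\in N_\oplus(F^{j+1}(u))$, for $0\le j<k$, should come from \Cref{lem:F} applied with $M=F^j(u)$ and $w=v$.

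The only step requiring care --- and the one I expect to be the (mild) obstacle --- is checking that the hypotheses of \Cref{lem:F} hold at each step: that $F^j(u)\in Z\setminus\{\emptyset\}$ and that $v>\tau(F^j(u))$ for all $j<k$. Both follow from the defining minimality of $k=\xi(u,v)$: for $j<k$ neither $F^j(u)=\emptyset$ nor $v\in I_{F^j(u)}$ holds, so $F^j(u)$ is a nonempty member of $Z$, and by \Cref{claim:tauinc} it is an active set (for $j=0$ we just use $F^0(u)=\{u\}$ directly), whence $I_{F^j(u)}=[\max F^j(u),\tau(F^j(u))]$. To turn ``$v\notin I_{F^j(u)}$'' into ``$v>\tau(F^j(u))$'' I would invoke the left-endpoint monotonicity supplied by \Cref{claim:tauinc}, namely $\max F^j(u)\le\max F^{j-1}(u)\le\cdots\le\max F^0(u)=u<v$: since $v$ lies strictly to the right of $\max F^j(u)$ but outside the interval $I_{F^j(u)}$, it must lie strictly to the right of $\tau(F^j(u))$. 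This rules out the only alternative, that $v$ escapes the activity interval on the left, and is really the crux of the bookkeeping.

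It remains to note that $\xi(u,v)$ is well defined, i.e.\ that the minimum in its definition ranges over a nonempty set. This is guaranteed by the strict increase $\tau(M)<\tau(F(M))$ whenever $F(M)\ne\emptyset$ (again \Cref{claim:tauinc}) together with finiteness of $V(G)$: the sequence $\tau(u)<\tau(F(u))<\tau(F^2(u))<\cdots$ cannot continue indefinitely, so $F^k(u)=\emptyset$ for some $k$. With all hypotheses verified, \Cref{lem:F} applies at each of the $k$ steps, the chain of equivalences closes up, and the corollary follows.
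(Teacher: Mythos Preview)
Your proof is correct and follows essentially the same route as the paper's: reduce to $u<v$, then apply \Cref{lem:F} repeatedly along the chain $F^0(u),F^1(u),\dots,F^k(u)$, using \Cref{claim:tauinc} to ensure the hypotheses of \Cref{lem:F} are met at each step. You are more explicit than the paper in justifying $v>\tau(F^j(u))$ via the left-endpoint monotonicity $\max F^j(u)\le u<v$ combined with $v\notin I_{F^j(u)}$, and in checking that $\xi(u,v)$ is well defined, but the argument is the same.
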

\begin{proof}
As the two cases are symmetric, we can assume $u<v$. 
If $k=0$, then the statement is 
$\{u,v\}\in E(G)\Leftrightarrow v\in N_\oplus(u)$, which 
trivially holds. 
Assume $\xi(u,v)=k\geq 1$. By \Cref{claim:tauinc} we have 
	$v>\tau(F^{k-1}(\{u\}))>\tau(F^{k-2}(\{u\}))>\dots>\tau(u)$.
	Moreover, $u,F(\{u\}),\dots, F^{k-1}(\{u\})\in 
	Z\setminus\{\emptyset\}$. 
	Hence  by \Cref{lem:F} we have
\begin{align*}
\{u,v\}\in E(G) & \Leftrightarrow v\in N_\oplus(u) \\
& \Leftrightarrow v\in N_\oplus(F(\{u\}))\\
& \Leftrightarrow v\in N_\oplus(F^2(\{u\})) \Leftrightarrow\ldots\\
& \Leftrightarrow v\in N_\oplus(F^k(\{u\})).
\end{align*}
\end{proof}

The monotonicity property of $F$ (i.e. the property $\tau(F(M))>\tau(M)$ if $F(M)\neq \emptyset$) implies that $F$ defines a rooted tree, \emph{the $F$-tree}, with vertex set~$Z$, root $\emptyset$ and edges $\{M, F(M)\}$. Here the monotonicity guarantees that the graph is acyclic and it is connected because~$\emptyset$ is the only fixed point of~$F$. The following lemma shows that the $F$-tree has bounded height. Recall that~$r$ denotes the linear rankwidth of~$G$. 

\begin{lemma}
	For every $M\in Z$ we have $F^{r+1}(M)=\emptyset$.
\end{lemma}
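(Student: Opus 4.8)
The plan is to show that along any $F$-orbit starting from a set $M\in Z$, the dimension of a certain vector space strictly decreases at each step, so that after at most $r+1$ steps the orbit reaches $\emptyset$. Concretely, fix $M\in Z\setminus\{\emptyset\}$ and let $t=\tau(M)$; by \Cref{claim:tauinc} we know $F(M)\subseteq B_{\tau(M)}$ and, more usefully, that $\tau(F(M))>\tau(M)$ whenever $F(M)\neq\emptyset$, so the sequence $\tau(M)<\tau(F(M))<\tau(F^2(M))<\dots$ is strictly increasing. The idea is to attach to each $M$ the neighborhood set $N_\oplus^{>\tau(M)}(M)$, viewed as an element of the vector space $W_{\tau(M)}$ spanned by $\{N^{>\tau(M)}(v)\mid v\leq \tau(M)\}$ over $\mathbb{F}_2$, and to track how this data propagates under $F$.

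First I would record the key ``size'' invariant: for $M\in Z$ with $F(M)\neq\emptyset$, the set $F(M)$ is a subset of $B_{\tau(M)}$, which has size at most $r$ since $B_{\tau(M)}$ is a neighbor basis for $V^{>\tau(M)}$ and $G$ has linear rankwidth at most $r$. So immediately $|F(M)|\leq r$ for every $M$ with $F(M)\neq\emptyset$. That alone does not finish the argument, so the real step is to show that $|F(M)|$ strictly decreases, i.e.\ $|F(F(M))|<|F(M)|$ as long as both are nonempty — or more precisely that the iterates eventually become empty within $r+1$ steps. The mechanism: when we pass from $M=F^k(\text{original})$ to $F(M)$, we take the unique $v\in M$ with $\tau(v)=\tau(M)$ and replace it by $F_0(v)\subseteq B_{\tau(M)}$ with $\max F_0(v)<v$. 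So the maximum element of $F(M)$ is strictly smaller than $\max M$ unless the maximum of $M$ was \emph{not} the element $v$ being replaced; but by \Cref{lem:tau} (its corollary) the element $v$ with $\tau(v)=\tau(M)$ is forced, and by \cref{eq:interval} $\max F(M)\le \max M$. The cleanest route is: after the first application, $M_1:=F(M)$ lives inside $B_{\tau(M)}$, and then each further $F$ strictly drops $\max$ of the set among those vertices that are ``responsible'', while the other vertices of $M_i$ stay in $B_{\tau(M_i)}$ — I would set up an induction showing $|F^{i+1}(M)|\le r-i$ or that $\max F^{i+1}(M) < \max F^i(M)$, which combined with $|F(M)|\le r$ forces $F^{r+1}(M)=\emptyset$.

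The main obstacle I anticipate is making the strict-decrease claim precise and watertight. The operation $F(M)=M\oplus\{v\}\oplus F_0(v)$ is a symmetric difference, so elements can cancel or appear, and bounding $|F(M)|$ in terms of $|M|$ directly is false in general — $|F_0(v)|$ could be as large as $r$. The resolution has to go through the structural fact that after one step we are inside an \emph{active basis} $B_{\tau(M)}$, which is itself a minimal (hence independent) set of size $\le r$, together with the observation that the $\tau$-values of the elements of $F^i(M)$ are all $>\tau(M)$ for the "surviving" vertices and exactly $\tau(F^i(M))$ for the newly introduced ones (\Cref{rem:active-F}). So the right invariant is probably stated in terms of the rank of $\{N^{>\tau(F^i(M))}(v)\mid v\in F^i(M)\}$ inside the space spanned by $B_{\tau(F^i(M))}$, which has dimension $\le r$, and showing this rank drops by at least one each step — using \Cref{lem:F} to relate the neighborhood data across the step and the fact that $F(M)$ omits the "oldest" contributing vertex $v$. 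Once that monotone rank-decrease is established, $F^{r+1}(M)=\emptyset$ follows by counting: a strictly decreasing sequence of nonnegative integers bounded by $r$ reaches $0$ in at most $r+1$ steps, and rank $0$ forces the set to be $\emptyset$ by minimality of neighbor bases.
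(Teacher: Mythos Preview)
Your proposal has a genuine gap: none of the invariants you suggest actually decreases strictly. You correctly note that $|F^i(M)|$ need not shrink (the symmetric difference with $F_0(v)$ can enlarge the set), and the fallback $\max F^{i+1}(M)<\max F^i(M)$ is also false in general --- \Cref{claim:tauinc} gives only $\max F(M)\le \max M$, with equality whenever the element $v$ being replaced (the one with $\tau(v)=\tau(M)$) is not the maximum of $M$. Your rank idea does not rescue this: whenever $F^i(M)$ is active it is a subset of some neighbor basis, so the neighborhoods $\{N^{>t}(w):w\in F^i(M)\}$ are linearly independent and the rank is just $|F^i(M)|$ again.

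The missing idea is to work inside a \emph{single fixed} basis rather than the moving ones $B_{\tau(F^i(M))}$. Take $M$ active (the singleton case reduces to this after one step) and fix any $t\in I_M$. From \Cref{claim:tauinc} one gets, for every $i$ with $F^i(M)\neq\emptyset$,
\[
\max F^i(M)\le \max M\le t<\tau(M)<\tau(F^i(M)),
\]
so every $F^i(M)$ is a subset of the same $B_t$. Now the quantity that strictly increases is the size of the \emph{union} $\bigcup_{j\le i}F^j(M)\subseteq B_t$: at step $j$ the element $v_j\in F^j(M)$ with $\tau(v_j)=\tau(F^j(M))$ is removed, and since the values $\tau(F^0(M))<\tau(F^1(M))<\cdots$ are strictly increasing, the $v_j$ are pairwise distinct and distinct from every element of $F^i(M)$. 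Hence the union has at least $i+1$ elements, but it lives in $B_t$ of size at most $r$, forcing $i<r$ and thus $F^r(M)=\emptyset$ for active $M$, and $F^{r+1}(M)=\emptyset$ in general.
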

\begin{proof}
If $M=\emptyset$, the statement is obvious, so assume $M\neq \emptyset$. It is sufficient 
to prove that for every active set $M$ we have 
$F^{r}(M)=\emptyset$, as this implies $F^{r+1}(\{v\})=\emptyset$
also for all $v\in V(G)$.
Let $M$ be an active set and let $t \in I_M$. Then every $v\in M$ is in $B_t$, so $M\subseteq B_t$. 

Assume $i\geq 1$ is such that $F^i(M)\neq \emptyset$. 
As $\max F(M)\leq \max M$ and $\tau(F(M))>\tau(M)$ 
by \Cref{claim:tauinc}, we get
\[
\max F^i(M)\leq \max M\leq t<\tau(M)\leq \tau(F^{i-1}(M))<\tau(F^i(M)).
\]
As $\tau(F^i(M))=\min_{v\in F^i(M)}\tau(v)$, we 
have $F^i(M)\subseteq B_t$. Hence, 
considering the sequence $M, F(M), \dots, F^i(M)$, each 
iteration of~$F$ removes the unique element with minimum 
$\tau$ value. It follows that the union of the sets 
has cardinality at least $i+1$. As $|B_t|\leq r$, we have $i<r$ and
hence $F^r(M)=\emptyset$.
\end{proof}

For distinct vertices $u,v$, let $u\wedge v$ denote the greatest common ancestor of $u$ and $v$ in the $F$-tree, i.e. the first common vertex on the paths to the root.
Then there exist $\ell_u$ and~$\ell_v$ such that
$u\wedge v=F^{\ell_u}(u)=F^{\ell_v}(v)$, hence both $u$ and $v$ belong 
to $I_{u\wedge v}$. Thus we have $\tau(u\wedge v)>u$ and $\tau(u\wedge v)>v$. In other words, we have
 $\xi(u,v)\leq \ell_u$ and $\xi(v,u)\leq \ell_v$.


\subsection{The activity interval graph}

\smallskip\noindent
Let $H$ be the intersection graph of the intervals $I_v$ 
for $v\in V(G)$. Note that we may identify $V(H)$ with 
$V(G)$ as $\min I_v=v$ for all $v\ V(G)$.

\begin{lemma}
The intersection graph $H$ of the intervals $I_u$ has pathwidth
at most $r+1$, i.e. at most $r+2$ intervals intersect in each point. 
\end{lemma}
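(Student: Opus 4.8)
The graph $H$ is by construction an interval graph, so from the definition of pathwidth (taking $H$ as its own interval supergraph) we have $\mathrm{pw}(H)\le\omega(H)-1$. The plan is therefore to bound $\omega(H)$, and for this I will use the Helly property of intervals on a line: a set of intervals $I_v$ is pairwise intersecting if and only if the intervals share a common point, and since every endpoint lies in $V$ it suffices to take that common point in $V$. Hence $\omega(H)=\max_{t\in V}\bigl|\{v\in V : t\in I_v\}\bigr|$, and the whole task reduces to showing that at most $r+2$ of the intervals $I_v=[v,\tau(v)]$ contain any fixed $t\in V$.

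So I would fix $t\in V$ and consider a vertex $v$ with $t\in I_v$; since $\min I_v=v$ this forces $v\le t\le\tau(v)$, and I split into two cases. \emph{First}, if $\tau(v)>t$, then $v\le t<\tau(v)$, and by the minimality of $\tau(v)$ in its definition $t$ cannot be a witness that $v$ has left the active basis, i.e.\ $v\in B_t$. Since distinct such $v$ are distinct elements of $B_t$ and $|B_t|\le r$ (as $B_t$ is a neighbor basis for $V^{>t}$ and $G$ has linear rankwidth at most $r$), this case contributes at most $r$ intervals. \emph{Second}, if $\tau(v)=t$, then either $v=t$, which pins down a single vertex, or $v<t=\tau(v)$, in which case $v$ is active and $\tau(v)=t$; by \Cref{lem:tau} there is at most one active vertex with a given $\tau$-value, so this case contributes at most $2$ intervals. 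Summing the two cases gives at most $r+2$ intervals through $t$.

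Putting this together yields $\omega(H)\le r+2$, hence $\mathrm{pw}(H)\le r+1$, which is exactly the claim (and the reformulation "at most $r+2$ intervals intersect in each point" is precisely the bound established along the way). The only place that requires genuine input rather than unwinding definitions is the handling of the right endpoints $\tau(v)$ — the case $\tau(v)=t$ — where the uniqueness statement of \Cref{lem:tau} (equivalently, its corollary that each active set $M$ has a unique $v\in M$ with $\tau(v)=\tau(M)$) is what prevents the right endpoints from piling up; the case $\tau(v)>t$ is immediate from the definitions of $\tau$ and $B_t$ together with the linear-rankwidth bound $|B_t|\le r$.
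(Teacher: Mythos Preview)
Your proof is correct and follows essentially the same approach as the paper's: fix a point $t$, observe that $t\in I_v$ forces either $v\in B_t$ (at most $r$ choices) or $\tau(v)=t$ (at most one inactive vertex $v=t$ and, by \Cref{lem:tau}, at most one active vertex), giving the bound $r+2$. The paper's version is terser, but the case split and the appeal to \Cref{lem:tau} for the right-endpoint case are identical.
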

\begin{proof}
Consider any vertex $t$ with $t\in I_u$ for some $u$. The case $u\in B_t$ 
gives a maximum of $r$ intervals intersecting in~$t$. Otherwise 
$t=\tau(u)$, which gives at most two possibilities for $u$: 
either $u$ is inactive (and $u=t$), or $u$ is active
(and $u$ is uniquely determined, according to \cref{lem:tau}).
Thus at most $r+2$ intervals intersect at point~$t$.
\end{proof}

As mentioned in the proof of the above lemma, every clique of $H$ contains at most one inactive vertex.
It follows that there is a coloring $\gamma\colon V(G)\rightarrow[r+2]$ with the following properties:
\begin{enumerate}[(1)]
	\item for every $u\in V(G)$ we have $\gamma(u)=r+2$ if and only if $u$ is inactive;
	\item for all distinct $u,v\in V(G)$ we have
\begin{equation}
I_u\cap I_v\neq\emptyset\quad\Longrightarrow\quad\gamma(u)\neq\gamma(v).
\end{equation}
\end{enumerate} 

We extend this coloring to sets as follows:
for $M\subseteq V(G)$ we let
\begin{equation}
\Gamma(M)\coloneqq \{\gamma(v)\mid v\in M\}.	
\end{equation}

This coloring allows to define, for each $v\in V(G)$ 
\begin{align*}
	\Class(v)&\coloneqq \big(\gamma(v),\Gamma(F(v)),\dots,\Gamma(F^r(v))\big),\\
	\NC(v)&\coloneqq \{\gamma(u)\mid u\in N(v)\text{ and }v\in I_u\}
\end{align*}

Note that all $u$ with $v\in I_u$ define a 
clique of $H$ (because all $I_u$ contain~$v$) and hence have distinct $\gamma$-colors.

\begin{lemma}
\label{cl:gamma}
	Let $v\in V(G)$. Every $u\in B_v$ can be defined as the 
	maximum vertex $x\leq v$ with $\gamma(x)=\gamma(u)$. 
\end{lemma}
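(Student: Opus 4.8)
The plan is to link membership in the active basis $B_v$ with the activity intervals $I_u$, and then to read off the conclusion from the defining property of the colouring $\gamma$.

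The first step is to show that $u\in B_v$ implies $u\le v<\tau(u)$, hence $v\in I_u$. That $u\le v$ is immediate from the definition of $B_v$. For $v<\tau(u)$ the key observation is that leaving the basis is permanent: if $w\notin B_t$, witnessed by some $B\subseteq V^{<w}$ with $N^{>t}(w)=N_\oplus^{>t}(B)$, then by \cref{rem:larger-neighborhoods} we have $N^{>t'}(w)=N_\oplus^{>t'}(B)$ for every $t'\ge t$, so $w\notin B_{t'}$ as well. Applying this with $w=u$ and $t=\tau(u)$: were $v\ge\tau(u)$, we would obtain $u\notin B_v$, contradicting the hypothesis. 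Hence $v<\tau(u)$ and $v\in I_u=[u,\tau(u)]$. (This monotonicity is, in fact, exactly what justifies calling $I_u$ the interval of activity of $u$.)

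The second step is to rule out any vertex $x$ with $u<x\le v$ and $\gamma(x)=\gamma(u)$. Suppose such an $x$ exists. Combining $u<x$ with $x\le v<\tau(u)$ gives $x\in(u,\tau(u))\subseteq I_u$; moreover $x\in I_x$ since $\min I_x=x$. Thus $I_u\cap I_x\ne\emptyset$ while $u\ne x$, so the property of $\gamma$ that distinct vertices with intersecting activity intervals receive distinct colours forces $\gamma(u)\ne\gamma(x)$, a contradiction. Since $u$ itself is trivially a vertex $x\le v$ with $\gamma(x)=\gamma(u)$, it follows that $u$ is the \emph{maximum} such vertex, which is the claim.

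I do not anticipate a genuine obstacle: the whole argument reduces to the ``permanence of leaving the basis'', a one-line consequence of \cref{rem:larger-neighborhoods}, together with a single application of the colouring property. The only point requiring mild care is checking that the vertices in question really lie in each other's activity intervals, so that the colouring property is applicable; this is precisely what the inequality chain $u<x\le v<\tau(u)$ provides.
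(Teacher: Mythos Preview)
Your proof is correct and follows exactly the paper's argument: assume a counterexample $x$ with $u<x\le v$ and $\gamma(x)=\gamma(u)$, observe $x\in I_u$ because $u\in B_v$ forces $\tau(u)>v$, and derive a contradiction from the colouring property since $I_x\cap I_u\ni x$. The only difference is that you spell out the ``permanence of leaving the basis'' step (via \cref{rem:larger-neighborhoods}) to justify $\tau(u)>v$, whereas the paper states this implication in one clause.
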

\begin{proof}
	By assumption we have $u\leq v$. Assume towards a contradiction that there exists $x\in V(G)$ with $u<x\leq v$ and $\gamma(x)=\gamma(u)$. As $u\in B_v$ we have $\tau(u)>v$, hence $x\in I_u$. It follows that 
	$I_x\cap I_u\neq \emptyset$, in contradiction to $\gamma(x)=\gamma(u)$.
\end{proof}


Towards the aim of bounding the number of graphs of linear 
rankwidth at most $r$, we give a bound on the number of 
colors that can appear. 

\begin{lemma}
\label{cl:f}
Let $f(r)\coloneqq 3(r+2)!\,2^{\binom{r+1}{2}}$. 
The number of pairs $(\Class(v),\NC(v))$ for $v\in V(G)$ can be bounded by~$f(r)$.
\end{lemma}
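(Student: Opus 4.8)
The plan is to bound $\Class(v)$ and $\NC(v)$ separately and multiply. For $\NC(v)$: by the remark following the definition, all $u$ with $v\in I_u$ form a clique in $H$, and by the pathwidth bound (Lemma preceding) every point lies in at most $r+2$ intervals $I_u$. Hence there are at most $r+2$ such vertices $u$, each carrying a distinct $\gamma$-color in $[r+2]$, so $\NC(v)$ is a subset of an $(r+2)$-element set; this gives at most $2^{r+2}$ possibilities, which I will need to sharpen below using the interplay with $\Class(v)$ if the stated constant is to come out exactly.

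For $\Class(v)=(\gamma(v),\Gamma(F(v)),\dots,\Gamma(F^r(v)))$: the first coordinate has $r+2$ choices. For the remaining coordinates, the key structural fact is that each $F^i(v)$ is a subset of $B_{\tau(F^{i-1}(v))}$ (Lemma~\ref{claim:tauinc}), hence has size at most $r$, and by Lemma~\ref{cl:gamma} (more precisely by property (2) of $\gamma$, since each $F^i(v)$ is contained in some $I$-clique) its vertices receive distinct colors; moreover inactive vertices only get color $r+2$ and the sets $F^i(v)$ for $i\ge 1$ are active (Lemma~\ref{claim:tauinc}), so $\Gamma(F^i(v))\subseteq[r+1]$. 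Thus each $\Gamma(F^i(v))$ is one of at most $\sum_{j=0}^{r}\binom{r+1}{j}=2^{r+1}$ subsets of $[r+1]$. A cruder count over $r$ coordinates would give $(2^{r+1})^r$, which overshoots the target $2^{\binom{r+1}{2}}$. To land on the stated bound I would instead track the sequence $\Gamma(v)\supseteq$-ish structure: the colors in the $F^i(v)$ are governed by which elements of the active basis survive, and as $i$ increases the relevant basis $B_{\tau(F^{i-1}(v))}$ sits at a larger index $t$; I would argue that the whole chain $F(v),F^2(v),\dots$ is determined by a nested/triangular pattern of color sets — essentially by recording, for each of the $r$ "levels", a subset of the at most $r+1$ colors not yet consumed, yielding $\prod_{j=1}^{r}2^{j}=2^{\binom{r+1}{2}}$ rather than $(2^{r+1})^r$. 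Combining the $(r+2)$ choices for $\gamma(v)$, the factor $(r+1)!$ absorbed into $(r+2)!$ from the ordering of the basis colors across levels, and the $2^{\binom{r+1}{2}}$ for the color-set pattern, together with a factor $3$ slack, gives $f(r)=3(r+2)!\,2^{\binom{r+1}{2}}$.

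The main obstacle I anticipate is exactly the nested-structure argument in the previous paragraph: proving that the $r$-tuple $(\Gamma(F(v)),\dots,\Gamma(F^r(v)))$ cannot range freely over all of $(2^{r+1})^r$ possibilities, but is constrained so that the total count is $2^{\binom{r+1}{2}}$ (times a small factor). This should follow from the monotonicity $\tau(F(M))>\tau(M)$ together with Lemma~\ref{cl:gamma}, which pins each basis element $u\in B_t$ to the maximum $x\le t$ of its color: as $t$ increases along the chain, a color once assigned to an element of $B_t$ is either retained by the same vertex or permanently "used up", so the sequence of color sets behaves like a decreasing chain in a bounded lattice, of which there are at most $2^{\binom{r+1}{2}}$ up to the $(r+2)!$-many orderings. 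Once this combinatorial lemma about chains of color-sets is in hand, the rest is a direct multiplication of the per-coordinate counts, and the slack factor $3$ comfortably covers the $\NC(v)$ term and any off-by-one in the level count.
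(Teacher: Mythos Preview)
Your overall strategy---bound $\Class(v)$ and $\NC(v)$ separately, and exploit a ``triangular'' structure in the sequence $(\Gamma(F^i(v)))_i$ to beat the crude $(2^{r+1})^r$---is exactly the paper's. But two places in your bookkeeping are wrong, and the first one is the conceptual core of the argument.

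\medskip
\textbf{The sets $\Gamma(F^i(v))$ do not form a decreasing chain.} Elements can enter and leave $F^i(v)$ via the symmetric difference with $F_0$, so $\Gamma(F^{i+1}(v))\subseteq\Gamma(F^i(v))$ fails in general, and ``decreasing chains in a bounded lattice'' does not yield the bound you want. What \emph{does} hold (and what the paper uses) is this: all the $F^i(v)$ for $i\ge 1$ live inside a single $B_t$, so at most $r$ colors from $[r+1]$ ever appear; and the element deleted at step $i$ has $\tau$-value $\tau(F^{i-1}(v))$, which is strictly increasing in $i$, so once its $\tau$-value is passed that element (and hence its colour) never recurs. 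Ordering the colours of $B_t$ by \emph{decreasing} $\tau$-value therefore gives a fixed linear order on $[r+1]$ such that $\Gamma(F^i(v))$ is contained in the first $r+1-i$ colours (first $r-i$ if $v$ is active, since then $v\in B_t$ is itself the first element removed). This ``subset of a shrinking initial segment'' is the precise statement you need; it is strictly weaker than nestedness but is what actually gives $\prod_i 2^{r+1-i}=2^{\binom{r+1}{2}}$ once the $(r+1)!$ orderings are fixed. The paper also splits into the two cases $\gamma(v)=r+2$ versus $\gamma(v)\le r+1$, and it is the sum of these two counts that produces the factor~$3$ multiplying $(r+1)!\,2^{\binom{r}{2}}$ in the bound for $\Class(v)$.

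\medskip
\textbf{The factor $3$ does not absorb $\NC(v)$.} In the paper the factor $3$ already appears in the bound $3(r+1)!\,2^{\binom r2}$ for $\Class(v)$ alone (from the case split above). The $\NC(v)$ term then contributes a further factor $(r+2)\,2^{r+1}$: the $(r+2)$ upgrades $(r+1)!$ to $(r+2)!$, and the $2^{r+1}$ upgrades the exponent $\binom r2$ to $\binom{r+1}{2}$ (since $\binom r2+r+1=\binom{r+1}{2}+1$, up to a harmless factor of~$2$). So your final paragraph has the factors routed to the wrong places: the exponent $\binom{r+1}{2}$ in $f(r)$ is \emph{not} reached by the $\Class$ count on its own, and you cannot afford to treat $\NC(v)$ as a constant.
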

\begin{proof}
		Let $v\in V(G)$.
	From the fact that $\gamma(v)=r+2$ if and only if $v$ is inactive, that images by $F$ only contain active vertices, as well as from \cref{claim:tauinc} we deduce:
\begin{itemize}
	\item 	If $\gamma(v)=r+2$, then there exists a linear order on $[r+1]$ colors such that for $1\leq i\leq r$, the set 
	$\Gamma(F^i(v))$ is a subset of the first $r+1-i$ colors of $[r+1]$.
	\item  If $\gamma(v)\leq r+1$, then there exists a linear order on $[r+1]\setminus\{\gamma(v)\}$ such that for $1\leq i\leq r$, the set $\Gamma(F^i(v))$ is a subset of the first $r-i$ colors of $[r]$.
\end{itemize}

Thus the number of distinct $\Class(v)$ for $v\in V(G)$ is bounded by
\[
(r+1)!\,2^{r}2^{r-1}\dots 2+(r+1)r!\,2^{r-1}\dots 2=3(r+1)!\,2^{\binom{r}{2}}.
\]
Furthermore, the number of distinct $\NC(v)$ for $v\in V(G)$ is at most $(r+2)2^{r+1}$.
\end{proof}


\subsection{Encoding the graph in the linear order}\label{subsec:encoding}

\smallskip\noindent
We first make use of \Cref{cor:fund} to encode $G$ 
by a first-order formula using only the newly added colors
and the order~$<$ on $V(G)$. More precisely, 
for $v\in V(G)$, 
let 
\[\IC(v)\coloneqq \{\gamma(u)\mid v\in I_u\}.\]
Let $\mathcal{L}$ be the structure over signature 
$\Lambda\mathbin{\cup} \{<\}$, where $\Lambda$ is the
set of all colors of the form $(\Class(v), 
\NC(v),\IC(v))$, with the same elements as $G$ and $<$ interpreted as
in $G$. Every element $v$ of $\mathcal{L}$ is equipped with 
the color $(\Class(v),$ $\NC(v),\IC(v))$. The following lemma gives a new
proof of the result of \cite{colcombet2007combinatorial}.

\begin{lemma}
There exists an $\exists\forall$-first-order formula $\phi(x,y)$ over the 
vocabulary $\Lambda\mathbin{\cup} \{<\}$ such that
for all $u,v\in V(G)$ we have 
\[\mathcal{L}\models\phi(u,v)\Longleftrightarrow \{u,v\}\in E(G).\]
\end{lemma}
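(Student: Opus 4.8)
The plan is to turn the combinatorial characterization of adjacency from \Cref{cor:fund} into an explicit $\exists\forall$-formula over $\Lambda\cup\{<\}$. The key observation is that all the data needed to decide whether $\{u,v\}\in E(G)$ — namely the value $\xi(u,v)$ (or $\xi(v,u)$), the set $F^{\xi(u,v)}(u)$, its neighborhood parity with respect to $v$, and the colors of the vertices lying in the relevant activity intervals — is encoded in the colors $(\Class(\cdot),\NC(\cdot),\IC(\cdot))$ attached to the elements, together with the order. First I would recall from \Cref{cl:gamma} that for any vertex $v$ and any $u\in B_v$, the vertex $u$ is \emph{uniquely determined} by its color $\gamma(u)$ as the $<$-maximum $x\le v$ with $\gamma(x)=\gamma(u)$; this is a first-order definable "pointer". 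Since every $F^i(u)$ is a subset of $B_{\tau(F^{i-1}(u))}\subseteq B_w$ for any $w$ in the corresponding activity interval, the \emph{set} $F^i(u)$ is recovered from the \emph{set of colors} $\Gamma(F^i(u))$ (which is a coordinate of $\Class(u)$) the moment we have in hand any vertex $w$ with $w\in I_{F^{i-1}(u)}$: each color $c\in\Gamma(F^i(u))$ points to the $<$-maximum $x\le w$ with $\gamma(x)=c$.

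The formula $\phi(x,y)$ will, by symmetry, first compare $x$ and $y$ under $<$ and handle the case $x<y$ (the case $x>y$ is the mirror image, swapping the roles in \Cref{cor:fund}). For $x<y$ it must express "$y\in N_\oplus(F^{\xi(x,y)}(x))$''. The existential block will guess: (i) the integer $k=\xi(x,y)\le r$ (a bounded choice, so really a finite disjunction, not a genuine quantifier); (ii) the colors comprising $\Gamma(F^j(x))$ for $j\le k$ — but these are already read off from $\Class(x)$, so no guessing is needed; (iii) for each $j<k$, a witness vertex $w_j$ certifying that $F^j(x)$ is still active at a point below $y$, i.e.\ $y\in I_{F^j(x)}$, equivalently that $\tau(F^j(x))>y$; and (iv) for each $j\le k$ and each color $c$ appearing in $\Gamma(F^j(x))$, the vertex $z_{j,c}$ that this color points to. The universal block then checks the pointer conditions (each $z_{j,c}$ really is the $<$-max vertex of its color below the appropriate reference point) and checks minimality of $k$, namely that $y\notin I_{F^{k-1}(x)}$ while $y\in I_{F^j(x)}$ for $j<k$ (equivalently $\tau(F^{k-1}(x))<y$) — these are exactly what make $k=\xi(x,y)$; the condition "$y\in I_{F^j(x)}$'' for $j<k$ is read from $\IC(y)$ together with the colors $\Gamma(F^j(x))$, and for the top level $F^k(x)$ one uses $\NC(y)$ (or rather the analogous fact that the colors of neighbors $u$ of $y$ with $y\in I_u$ are listed, so that $y\in N_\oplus(F^k(x))$ is the parity of $|\{z_{k,c}\in N(y): c\in\Gamma(F^k(x))\}|$ being odd — a bounded Boolean condition on the atoms $z_{k,c}\in N(y)$, which in turn are decided by $\NC$-type color information plus $<$). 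One must be slightly careful when $F^{k}(x)=\emptyset$: then $y\notin N_\oplus(F^k(x))$ automatically, i.e.\ $\{x,y\}\notin E(G)$, and this is the branch where $\Gamma(F^k(x))$ is empty.

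The main obstacle I anticipate is organizing the bounded recursion so that the final formula genuinely has $\exists\forall$ quantifier shape rather than an alternation depth growing with $r$: the naive approach "find $w_0$, then using $w_0$ find the $z_{0,c}$, then find $w_1$, \ldots'' nests quantifiers to depth $\Theta(r)$. The fix is to guess \emph{all} the witnesses $w_0,\dots,w_{k-1}$ and all the pointers $z_{j,c}$ ($j\le k$, $c\in\Gamma(F^j(x))$ — a bounded number, at most $(r+1)(r+1)$ of them) in a single outer existential block, and then verify \emph{all} the consistency constraints — "$z_{j,c}$ is the $<$-maximal vertex of color $c$ that is $\le w_{j-1}$ (with $w_{-1}:=x$ for the base level and $w_j$ otherwise)'', "$\tau(F^j(x))>y$ for $j<k$'', "$\tau(F^{k-1}(x))<y$'' — in a single universal block. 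Each individual constraint is first-order over $\Lambda\cup\{<\}$: a "$<$-maximal vertex of a given color below a given point'' condition is $\forall$-expressible, and the $\tau$-comparisons are decoded from the $\IC$-colors of $y$ together with the already-fixed color sets (membership of a color in $\IC(y)$ tells us exactly whether the corresponding pointed-to vertex has $y$ in its activity interval). Since $k$ ranges over the fixed set $\{0,1,\dots,r\}$, the whole thing is a finite disjunction of $\exists\forall$-formulas, hence an $\exists\forall$-formula, and by \Cref{cor:fund} it holds in $\mathcal L$ exactly when $\{u,v\}\in E(G)$.
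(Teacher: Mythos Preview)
Your approach follows the paper's strategy --- reduce to \Cref{cor:fund}, recover the elements of $F^i(u)$ from their $\gamma$-colors via the pointer lemma \Cref{cl:gamma}, and read off interval membership and adjacency from the $\IC$ and $\NC$ components of the $\Lambda$-color --- but two points deserve correction.

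First, your description of the $\xi$ condition is inverted. By definition $\xi(u,v)$ is the \emph{least} $k$ with $v\in I_{F^k(u)}$ (or $F^k(u)=\emptyset$), so for $j<k$ one has $v\notin I_{F^j(u)}$, not $v\in I_{F^j(u)}$. Your clause ``$y\notin I_{F^{k-1}(x)}$ while $y\in I_{F^j(x)}$ for $j<k$'' is self-contradictory; the intended check is $y\notin I_{F^j(x)}$ for all $j<k$, together with ($y\in I_{F^k(x)}$ or $F^k(x)=\emptyset$). This is easy to repair once noticed, but as written the formula would not compute $\xi$.

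Second, and more substantively, the intermediate witnesses $w_0,\dots,w_{k-1}$ are unnecessary, and dropping them is exactly what makes the paper's argument short. From \Cref{claim:tauinc} one has $\max F^i(u)\le u$ and (when $F^i(u)\neq\emptyset$) $\tau(F^i(u))>u$, so every element of $F^i(u)$ lies in $B_u$. Thus $u$ itself serves as the reference point for \emph{all} pointer lookups, at every level $i$: the formula $\psi^i(x,y)$ expressing $y\in F^i(x)$ simply says ``$\gamma(y)\in\Gamma(F^i(x))$, $y<x$, and no $z$ with $y<z<x$ has $\gamma(z)=\gamma(y)$'', a pure $\forall$-formula with no auxiliary existentials. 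Likewise the condition $v\in I_{F^k(u)}$ becomes a $\forall$-condition in $u,v$ and the color data alone. The only existential block in the final formula guesses the (at most $r$) elements of $F^{\xi(u,v)}(u)$, after which $\NC(v)$ decides the parity. Your witness-chain construction can be made to work, but it hides the reason the $\exists\forall$ shape falls out so directly.
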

\begin{proof}
By symmetry, we can assume that $u<v$. According to \cref{cor:fund} for distinct $u,v\in V(G)$ we have
\[
\{u,v\}\in E(G) \Longleftrightarrow
\begin{cases}
	v\in N_\oplus(F^{\xi(u,v)}(u))&\text{if }u<v\\
	u\in N_\oplus(F^{\xi(v,u)}(v))&\text{if }u>v.
\end{cases}
\]

Note that we can extract any color from $\Lambda$, i.e.\ we can define $\gamma(x) \in \Gamma(F^i(y))$ and $\gamma(x) \in \IC(y)$. For example, $\gamma(x) \in \Gamma(F^i(y))$ is a big disjunction over all possible colorings 
$\Lambda(x) = (\Class(x),NC(x),\IC(x))$ and $\Lambda(y) = (\Class(y),NC(y),\IC(y))$ satisfying that $\Class(x)$ has in 
its first component an element from the $i$th component of $\Class(y)$.

We first define formulas $\psi^i(x,y)$ such that for 
all $u,v\in V(G)$
\[\mathcal{G}\models \psi^i(u,v)\Leftrightarrow v\in F^i(u).\]

Let $C=\Gamma(F^i(u))$. According to \cref{cl:gamma}, 
for $a\in C$, the element of $F^i(u)\subseteq B_u$ 
with color~$a$ is the maximal element 
$w<u$ such that $\gamma(w)=a$. The formula can express
that $y< x$ is maximal with $\gamma(y)=a$ by 
$(y<x)\wedge (\gamma(y)=a)$ $\wedge \forall z\, ((z>y)\wedge(z<x) \rightarrow \gamma(z)\neq a)$. 
Here, for convenience, 
we use $\gamma(z)=a$ as an atom. Note that $\psi^i(x,y)$ is a 
$\forall$-formula. 

\smallskip
We now define formulas $\alpha^k(x,y)$ such that for all 
$u,v\in V(G)$
with $u<v$ we have 
\[\mathcal{G}\models\alpha^k(u,v)\Leftrightarrow k=\xi(u,v).\]

Observe that $v\in I_{F^k(u)}$ if and only if for every $x\in F^k(u)$ we have $x\leq v$, $a\in \IC(v)$ (i.e. there exists some $y$ with $\gamma(y)=a$ and $v\in I_y$) and there exists no $z$ with $x<z\leq v$ with $\gamma(z)=a$ (hence $\min I_y\leq x$, which implies that~$I_y$ and~$I_x$ intersects thus $x=y$ as $\gamma(x)=\gamma(y)$).
We restrict ourselves to the case $u<v$ and obtain 
\begin{equation*}
\begin{split}
u<v \land v\in I_{F^k(u)}\iff 
  u<v  \land{} &\Gamma(F^k(u))\subseteq \IC(v)\\ 
  {}\land &\forall x\, (x\in F^k(u) \limp x \leq v \land \gamma(x)\notin \IC(v))\,.
\end{split}
\end{equation*}

Then $\xi(u,v)$ for $u<v$ is the minimum integer~$k$ such that $v\in I_{F^k(u)}$ or $F^k(u)=\emptyset$, and
  this is easy to state as a $\forall$-formula. 
\smallskip
Finally, if we have determined $\xi(u,v)$, with the help of the
formulas $\psi^i$ we can determine whether $\{u,v\}\in E(G)$
as in the proof of \cref{cor:fund} by existentially quantifying
the elements of $F(u), F^2(u),\ldots, F^{\xi(u,v)}(u)$ and 
expressing whether $v\in N_\oplus(F^{\xi(u,v)}(u))$. 
Indeed, for every $x\in F^{\xi(u,v)}(u)$ we have $v\in I_{F^{\xi(u,v)}(u)}\subseteq I_x$, hence the adjacency of $x$ and $y$ is encoded in $\NC(v)$.

This information can hence
be retrieved by an $\exists\forall$-formula, as claimed. 
\end{proof}

%

\begin{lemma}
	Let $f'(r)\coloneqq (r+2)!\,2^{\binom{r}{2}}3^{r+2}$.
	 The number of triples $(\Class(v),\NC(v),\IC(v))$ for $v\in V(G)$  can be bounded by $f'(r)$.
\end{lemma}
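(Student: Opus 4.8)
The plan is to bound the number of distinct triples by the product of the number of distinct first coordinates $\Class(v)$ with the number of distinct pairs $(\NC(v),\IC(v))$ of the remaining two coordinates. The only structural fact needed beyond what is already available from the proof of \Cref{cl:f} is the trivial observation that $\NC(v)\subseteq\IC(v)$.

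Concretely, I would argue as follows. Since $\NC(v)=\{\gamma(u)\mid u\in N(v)\text{ and }v\in I_u\}$ and $\IC(v)=\{\gamma(u)\mid v\in I_u\}$, and since $\gamma$ takes values in $[r+2]$, every $v$ satisfies $\NC(v)\subseteq\IC(v)\subseteq[r+2]$. Hence the set of pairs $(\NC(v),\IC(v))$ that can occur has size at most the number of pairs $(S,T)$ with $S\subseteq T\subseteq[r+2]$, which is $3^{r+2}$ (each of the $r+2$ colours belongs to $\IC(v)\setminus\NC(v)$, to $\NC(v)$, or to neither). On the other hand, the displayed estimate in the proof of \Cref{cl:f} shows that the number of distinct values $\Class(v)$ for $v\in V(G)$ is at most $3(r+1)!\,2^{\binom r2}$. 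Since $3(r+1)!\leq(r+2)!$ whenever $r\geq1$, multiplying the two counts yields at most $(r+2)!\,2^{\binom r2}\,3^{r+2}=f'(r)$ triples. For $r=0$ the graph $G$ has linear rankwidth $0$ and is therefore edgeless, so $\IC(v)=\{\gamma(v)\}$, $\NC(v)=\emptyset$ and $\Class(v)=(\gamma(v))$ for all $v$; there is then a single triple and the bound holds trivially.

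The one point that needs care — and the only place where the argument could go wrong — is the count of the $\Class(v)$'s: one must use the sharp bound $3(r+1)!\,2^{\binom r2}$ that is obtained en route in \Cref{cl:f} (via the nesting of the colour sets $\Gamma(F^i(v))$ along the branch of the $F$-tree through $v$), and not the coarser packaged bound $f(r)=3(r+2)!\,2^{\binom{r+1}2}$ on the pairs $(\Class(v),\NC(v))$; replacing the former by the latter, or treating $\IC(v)$ as a completely free subset of $[r+2]$ on top of a pair count, would overshoot $f'(r)$ by a factor of order $(3/2)^{r}$. Apart from that, the proof is just the inequality $3(r+1)!\leq(r+2)!$ for $r\geq1$ together with the elementary fact that there are $3^{r+2}$ pairs $S\subseteq T\subseteq[r+2]$.
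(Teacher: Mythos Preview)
Your argument is correct and follows essentially the same approach as the paper: reuse the intermediate bound $3(r+1)!\,2^{\binom{r}{2}}$ on the number of values of $\Class(v)$ from \Cref{cl:f}, observe that $\NC(v)\subseteq\IC(v)\subseteq[r+2]$ so that each colour has three possible statuses, and multiply. The only cosmetic difference is that the paper bounds the number of pairs $(\NC(v),\IC(v))$ by $(r+2)\,3^{r+1}$ (treating only colours in $[r+1]$ as having three options and absorbing the remaining freedom into the factor $r+2$), which multiplies out to exactly $f'(r)$, whereas you bound it by $3^{r+2}$ and then use $3(r+1)!\leq(r+2)!$; both routes land on the same value.
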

\begin{proof}
	In \Cref{cl:f} we have shown that the number of distinct $\Class(v)$ for $v\in V(G)$ is bounded by $3(r+1)!\,2^{\binom{r}{2}}$.
	The number of pairs $(\NC(v),\IC(v))$ is at most $(r+2)3^{r+1}$ (for each color $a$ in $[r+1]$ either
	$a\notin\IC(v)$ or $a\in\IC(v)\setminus \NC(v)$ or $a\in \NC(v)$).
\end{proof}

As a corollary we conclude an upper bound on the number of 
graphs of bounded linear rankwidth. 
\begin{theorem}
\label{thm:numlrw}
	Unlabeled graphs with linear rankwidth at most $r$ can be encoded using at most 
	$\binom{r}{2}+r\log_2 r+\log_2(3/e)r+O(\log_2 r)$ bits per vertex.
	Precisely, the number of unlabelled graphs of order $n$ with linear rankwidth at most $r$ is at most $\left[(r+2)!\,2^{\binom{r}{2}}3^{r+2}\right]^{n}$.
\end{theorem}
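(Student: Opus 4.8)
The plan is to read off the bound directly from the encoding established in the preceding lemmas. By the last lemma, for every graph $G$ with $\lrw(G)\le r$ and every vertex $v\in V(G)$, the triple $(\Class(v),\NC(v),\IC(v))$ takes one of at most $f'(r)=(r+2)!\,2^{\binom r2}3^{r+2}$ values, so the structure $\mathcal L$ is a linearly ordered set of $n$ elements each carrying one of at most $f'(r)$ colors. By the $\exists\forall$-formula $\phi(x,y)$ constructed above, the edge relation of $G$ is \emph{definable} in $\mathcal L$; in particular $G$ is determined up to isomorphism by the isomorphism type of $\mathcal L$ as a colored linear order. Hence it suffices to count colored linear orders on $n$ elements up to isomorphism, which is exactly the number of words of length $n$ over an alphabet of size $f'(r)$ \emph{up to nothing} — since the order is part of the structure, two such words give isomorphic structures only if they are equal. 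Therefore the number of unlabeled graphs of order $n$ with linear rankwidth at most $r$ is at most $f'(r)^{\,n}=\bigl[(r+2)!\,2^{\binom r2}3^{r+2}\bigr]^{n}$.

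For the per-vertex bit count I would simply take the base-$2$ logarithm: each vertex needs at most $\log_2 f'(r)$ bits, and
\[
\log_2 f'(r)=\log_2 (r+2)! + \binom r2 + (r+2)\log_2 3.
\]
Using Stirling's approximation $\log_2(r+2)! = (r+2)\log_2(r+2) - (r+2)\log_2 e + O(\log_2 r) = r\log_2 r - \log_2(e)\,r + O(\log_2 r)$ and $(r+2)\log_2 3 = (\log_2 3)\,r + O(1)$, I collect the linear terms: $(-\log_2 e + \log_2 3)\,r = \log_2(3/e)\,r$. This yields the stated bound $\binom r2 + r\log_2 r + \log_2(3/e)\,r + O(\log_2 r)$ bits per vertex. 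I would present the encoding-to-count step as the conceptual content and relegate the Stirling manipulation to a one-line remark.

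There is essentially no obstacle here beyond bookkeeping: all the work is in the earlier lemmas (the definition of $\Class$, $\NC$, $\IC$, the $F$-tree having height $\le r$, and the definability of $E(G)$ via $\phi$). The one point that deserves an explicit sentence is \emph{why} the colored order determines the graph: because $\phi(x,y)$ uses only the predicates in $\Lambda$ and the symbol $<$, any isomorphism of colored linear orders $\mathcal L\cong\mathcal L'$ is automatically an isomorphism $G\cong G'$ of the underlying graphs. I would also note that we never need $n$ to be encoded — $n$ is the length of the word — and that the count $f'(r)^n$ is an upper bound (not every colored word arises, and different choices of linear order witnessing $\lrw(G)\le r$ may give the same graph), which is all we need. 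So the proof is short:

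\begin{proof}
  By the previous lemma, for every graph $G$ with $\lrw(G)\le r$, every vertex $v\in V(G)$ receives one of at most $f'(r)=(r+2)!\,2^{\binom r2}3^{r+2}$ colors in $\Lambda$, and by the lemma before it the edge relation of $G$ is definable in the colored linear order $\mathcal L$ by the formula $\phi(x,y)$, which uses only the symbols of $\Lambda\cup\{<\}$. Hence $G$ is determined, up to isomorphism, by the isomorphism type of $\mathcal L$, i.e.\ by a word of length $n=|V(G)|$ over an alphabet of size at most $f'(r)$ (two such words give isomorphic colored linear orders only if they are equal). Therefore the number of unlabelled graphs of order $n$ with linear rankwidth at most $r$ is at most $f'(r)^{\,n}=\bigl[(r+2)!\,2^{\binom r2}3^{r+2}\bigr]^{n}$, which means each vertex can be encoded using at most $\log_2 f'(r)$ bits. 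Finally, by Stirling's formula $\log_2 (r+2)! = r\log_2 r - \log_2(e)\,r + O(\log_2 r)$, and since $(r+2)\log_2 3 = (\log_2 3)\,r + O(1)$, we obtain
  \[
    \log_2 f'(r) = \binom r2 + r\log_2 r + \log_2(3/e)\,r + O(\log_2 r),
  \]
  as claimed.
\end{proof}
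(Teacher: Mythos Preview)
Your proof is correct and follows exactly the approach the paper intends: the theorem is stated as an immediate corollary of the two preceding lemmas (the $\exists\forall$-definability of $E(G)$ in the colored order $\mathcal L$, and the bound $f'(r)$ on the number of triples), and you have simply spelled out the counting and the Stirling computation that the paper leaves implicit. There is nothing to add.
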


\begin{remark}
	The encoding can be computed in linear time if the linear order on $G$ is given.
\end{remark}


\subsection{Partition into cographs}
\label{sec:cograph}


\begin{theorem}
\label{thm:cog}
Let $f(r)=3(r+2)!\,2^{\binom{r+1}{2}}$.
The $c$-chromatic number of every graph $G$ is bounded by $f(\lrw(G))$ and hence 
\begin{equation}
	\chi(G)\leq f(\lrw(G))\,\omega(G).
\end{equation}
\end{theorem}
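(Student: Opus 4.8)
The plan is to colour each vertex $v$ of $G$ by the pair $(\Class(v),\NC(v))$ defined in the preceding subsections; by \cref{cl:f} this uses at most $f(r)=3(r+2)!\,2^{\binom{r+1}{2}}$ colours, where $r=\lrw(G)$, so the statement reduces to proving that every colour class induces a cograph. Granting this, the bound $\chi(G)\le f(\lrw(G))\,\omega(G)$ is immediate: if $V(G)=W_1\cup\dots\cup W_c$ with each $G[W_i]$ a cograph, then, since cographs are perfect, $\chi(G)\le\sum_{i=1}^{c}\chi(G[W_i])=\sum_{i=1}^{c}\omega(G[W_i])\le c\,\omega(G)$.

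Fix a colour class $W$ and take $u,v\in W$ with $u<v$. By \cref{cor:fund}, $\{u,v\}\in E(G)$ if and only if $v\in N_\oplus(F^{\xi(u,v)}(u))$, and the point is that this can be read off from the colours of $u$ and $v$ together with $\xi(u,v)$. Indeed $v\in I_{F^{\xi(u,v)}(u)}$, so by \cref{claim:tauinc} every $x\in F^{\xi(u,v)}(u)$ satisfies $x\le\max F^{\xi(u,v)}(u)\le u<v\le\tau(F^{\xi(u,v)}(u))\le\tau(x)$, hence $v\in I_x$; the vertices $x$ with $v\in I_x$ form a clique of the activity interval graph, so they carry pairwise distinct colours $\gamma(x)$, and whether such an $x$ is a neighbour of $v$ is recorded in $\NC(v)$. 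Therefore $\{u,v\}\in E(G)$ if and only if $|\Gamma(F^{\xi(u,v)}(u))\cap\NC(v)|$ is odd. Since $u,v\in W$, the set $\Gamma(F^{\xi(u,v)}(u))$ is the $\xi(u,v)$-th coordinate of the common value $\Class$, and $\NC(v)$ is the common value $\NC$, so there are bits $\varepsilon_0,\dots,\varepsilon_{r+1}\in\{0,1\}$, depending only on $W$, such that
\[
\{u,v\}\in E(G)\iff\varepsilon_{\xi(u,v)}=1\qquad\text{for all }u<v\text{ in }W.
\]

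It remains to check that a graph of this form is a cograph. For $u,v\in W$ with $u<v$ the activity intervals $I_u$ and $I_v$ are disjoint (as $\gamma(u)=\gamma(v)$), so $\tau(u)<v$, and by \cref{claim:tauinc} one may write $\xi(u,v)=\min\{k:\tau(F^k(u))\ge v\text{ or }F^k(u)=\emptyset\}$. The plan is to show that $d(u,v):=\xi(\min(u,v),\max(u,v))$ is an ultrametric on $W$, equivalently that $\xi(a,c)=\max(\xi(a,b),\xi(b,c))$ whenever $a<b<c$ in $W$. This should follow from the least-common-ancestor structure of the $F$-tree: climbing $u$'s $F$-chain until its activity interval first captures $v$ lands on $u\wedge v$, and from there the ascent coincides with that of $v$'s chain; the tools are \cref{lem:F}, \cref{lem:tau}, and the strict monotonicity $\tau(F^k(M))<\tau(F^{k+1}(M))$ of \cref{claim:tauinc}. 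Once $d$ is an ultrametric, the single-linkage correspondence turns it into a rooted tree with leaf set $W$ and strictly increasing merge levels in which $d(u,v)$ is the level of the least common ancestor of $u$ and $v$; labelling every internal node at level $k$ by ``join'' when $\varepsilon_k=1$ and by ``union'' otherwise produces a connection model over a one-element label set, i.e.\ a cotree for $G[W]$. Since $\xi$ takes values in $\{0,1,\dots,r+1\}$ (recall $F^{r+1}=\emptyset$), this cotree has height at most $r+2$, which also gives the refinement stated in the introduction.

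The step I expect to be the main obstacle is precisely this ultrametric property of $\xi$ on $W$: one must argue with care that ascending $u$'s $F$-chain to the first ancestor whose activity interval contains $v$ produces exactly $u\wedge v$, so that the remaining ascent toward a third vertex is governed by the shared part of the two chains — the delicate cases being those where $\tau(F^k(u))$ equals $v$ and where an $F$-chain reaches $\emptyset$ before its interval captures $v$. Everything else — the colour count, the reduction of adjacency within a class to a function of $\xi$, and the passage from a cograph partition to the bound on $\chi$ — is routine bookkeeping with $\Class$, $\NC$ and \cref{cl:f}.
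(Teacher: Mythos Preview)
Your approach is the paper's, and the obstacle you single out is exactly the one step the paper supplies; you are only missing the right lemma. The tools you list (\cref{lem:F}, \cref{lem:tau}, \cref{claim:tauinc}) are not what closes the gap; what does the work is \cref{cl:gamma}, which says that an element of $B_v$ is determined by its $\gamma$-colour. For $u<v$ in $W$ with $k=\xi(u,v)$ and $F^k(u)\neq\emptyset$, both $F^k(u)$ and $F^k(v)$ lie in $B_v$ (since $v\in I_{F^k(u)}$ and trivially $v\in I_{F^k(v)}$), and $\Gamma(F^k(u))=\Gamma(F^k(v))$ because $\Class(u)=\Class(v)$; hence $F^k(u)=F^k(v)$ by \cref{cl:gamma}. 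So $F^k(u)$ is already a common $F$-ancestor of $u$ and $v$, and minimality of $k$ together with $u,v\in I_{u\wedge v}$ forces $F^k(u)=u\wedge v$. Running the same argument with $\ell=\xi(v,u)$ gives $F^\ell(u)=F^\ell(v)=u\wedge v$; since the $F$-chain of $u$ is injective, $k=\ell$. Thus every vertex of $W$ sits at the same depth in the $F$-tree and $\xi(u,v)$ is the number of $F$-steps from either endpoint to $u\wedge v$ --- an ultrametric for the trivial reason that in any rooted tree, among three leaves at equal depth, the LCA of the outer pair is an ancestor of one of the two inner LCAs.

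The paper does not bother with the ultrametric repackaging: once $F^{\xi(u,v)}(u)=u\wedge v$ is known, the $F$-tree itself (restricted to the ancestors of $W$) \emph{is} the cotree, each internal node being a join or a union according to your bit $\varepsilon_k$. Your detour through single-linkage clustering is correct but adds a layer that the direct use of the $F$-tree avoids.
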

\begin{proof}
	Let $u\sim v$ hold if and only if $\Class(u)=\Class(v)$ and $\NC(u)=\NC(v)$. 
As proved in \Cref{cl:f} there are at most $f(r)$ equivalence classes for the relation $\sim$.
		
	Let $X$ be an equivalence class for $\sim$, and let $u,v$ be distinct elements in $X$.
	Let $k=\xi(u,v)$ and let $\ell=\xi(v,u)$. 

	If $F^k(u)=\emptyset$, then $F^k(v)=\emptyset$ as $\Class(v)=\Class(u)$. 
	Otherwise, $F^k(u)\neq \emptyset$, thus $F^k(v)\neq \emptyset$. As $v\in I_{F^k(u)}$ and $v\in I_{F^k(v)}$ we deduce that 
	$F^k(u)$ and $F^k(v)$ are both included in $B_v$. As the 
	vertices of a given color in $B_v$ are uniquely determined 
	we deduce $F^k(u)=F^k(v)$. 
	Similarly, we argue that $F^\ell(u)=F^\ell(v)$. It follows that $F^k(u)=F^\ell(u)=u\wedge v$.
	
Hence, if $x\wedge y=u\wedge v$ for $x,y\in X$, then we have
$x\wedge y=F^k(x)=F^k(u)$. 
As $\NC(u)=\NC(v)$, we deduce that for all $x,y\in X$ with 
$x\wedge y=u\wedge v$ we have 
$y\in N_\oplus(F^k(x))$ or for all $x,y\in X$ 
with $x\wedge y=u\wedge v$ we have 
$y\not\in N_\oplus(F^k(x))$. Then it follows from \cref{cor:fund}
that at each inner vertex
of $F$ on $X$ we either define a join or a union. Hence, 
$G[X]$ is a cograph with cotree $F$ restricted to 
$X$ of height at most $r+2$. 
\end{proof}

\begin{remark}
	The partition can be computed in linear time if the ordering of the vertex set is given.
\end{remark}

The function $f(r)$ is most probably far from being optimal. This naturally leads to the following question.

\begin{problem}
	Estimate the growth rate of function $g:\mathbb N\rightarrow\mathbb R$ defined by
\begin{equation}
	g(r)=\sup\,\biggl\{\frac{\chi(G)}{\omega(G)}\mid\lrw(G)\leq r\biggr\}\,.
\end{equation}
\end{problem}
\begin{remark}
One may wonder whether bounding $\chi(G)$ by an affine function of $\omega(G)$ could decrease the coefficient of $\omega(G)$. In other words, is the ratio  $\chi/\omega$ be asymptotically much smaller (as $\omega\rightarrow\infty$) than its supremum?
Note that if $\lrw(G)=r$ and $n\in\mathbb N$, then the graph $G_n$ obtained as the join of $n$ copies of $G$ satisfies $\lrw(G_n)\leq r+1$, $\omega(G_n)=n\omega(G)$ and $\chi(G_n)=n\chi(G)$. Thus

\[
g(r-1)\leq \limsup_{\omega\rightarrow\infty}\,\biggl\{\frac{\chi(G)}{\omega(G)}\biggr|\lrw(G)\leq r\text{ and }\omega(G)\geq\omega\biggr\}\leq g(r).
\]	
\end{remark}

\begin{problem}
	Is the ratio $\chi(G)/\omega(G)$ bounded by a polynomial function of the neighborhood-width of $G$ (equivalently, of the linear cliquewidth or of the linear NLC-width of $G$)?
\end{problem}

\section{Conclusion, further works, and open problems}

\smallskip\noindent
In this paper, several aspects of classes with bounded linear-rankwidth have been studied, both from  (structural) graph theoretical and the model theoretical points of view.

On the one hand, it appeared that graphs with bounded linear rankwidth do not form a ``prime'' class, in the sense that they can be further decomposed/covered using pieces in classes with bounded embedded shrubdepth. As an immediate corollary we obtained that classes with bounded linear rankwidth are linearly $\chi$-bounded.
Of course, the $\chi/\omega$ bound obtained in \Cref{thm:cog} is most probably very far from being optimal.

On the other hand, considering how graphs with linear rank-width at most $r$ are encoded in a linear order or in a graph with bounded pathwidth with marginal ``quantifier-free'' use of a compatible linear order improved our understanding of this class in the first-order transduction framework. 

%


Classes with bounded rankwidth seem to be much more complex than expected and no simple extension of the results obtained from classes with bounded linear rankwidth seems to hold. In particular, these classes seem to be ``prime'' in the sense that you cannot even partition the vertex set into a bounded number of parts, each inducing a graph is a simple hereditary class like the class of cographs (see \Cref{cor:herw}). 
However, the following conjecture seems reasonable to us.

\begin{conjecture}\label{conjecture:rw}
  Let $\Cc$ be a class of graphs of bounded rankwidth. Then $\Cc$ has structurally  bounded treewidth if
  and only if $\Cc$ is stable.
\end{conjecture}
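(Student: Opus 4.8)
The plan is to prove the two implications separately; only one is expected to be hard. For ``$\Cc$ has structurally bounded treewidth $\Rightarrow$ $\Cc$ is stable'' one does not even use that $\rw(\Cc)$ is bounded: a class of bounded treewidth has bounded expansion, hence is nowhere dense, hence monadically stable by \Cref{thm:Adler}; and monadic stability is preserved under first-order transductions (immediate from the characterization of monadic stability by non-transducibility of the class of half graphs together with closure of transductions under composition), so any transduction of a bounded-treewidth class is monadically stable, in particular stable. The substance of the conjecture is the converse: \emph{if $\Cc$ has bounded rankwidth and is stable, then $\Cc$ is a first-order transduction of a class of bounded treewidth.} Since stability already forbids arbitrarily large semi-induced half-graphs (apply the definition of stability to the formula $\varphi(x,y)=E(x,y)$), we may assume throughout that $\Cc$ excludes some fixed semi-induced half-graph.

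For the converse the natural route is to lift the proof of \Cref{thm:stlrw} from words to trees. Fix a finite label set and write each $G\in\Cc$ as $\Xi(T)$ for an NLC-width expression $T$ of bounded width, now a tree rather than a word; the role of Simon's factorization forest theorem would be played by a Ramseyan-type factorization of $T$ of bounded depth over a suitable finite forest algebra. Given such a bounded-depth factorization $Y$, one first records, as in \Cref{lem:scode}, the colour and edge-set data along the branch of $Y$ from a leaf up to the least common ancestor of two leaves, so that adjacency in $G$ becomes a function of these data. One then exploits stability exactly as in the half-graph arguments preceding \Cref{thm:stlrw}: at each node $\delta$ of $Y$ only boundedly many pairwise ``incomparable colour types'' can occur among descendants of distinct children of $\delta$, for otherwise $G$ would contain a large semi-induced half-graph; this allows one to split the children of each node into a bounded number of blocks of consecutive children and to attach to each node of $Y$ a bounded amount of positional information. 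The target structure is then $Y$ decorated by this bounded data: one shows that it has bounded treewidth and that a first-order (indeed $\exists\forall$) transduction recovers $E(G)$ from it by navigating $Y$ to the least common ancestor and reading off the recorded colours, in the manner of the encoding of \Cref{subsec:encoding}.

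The hard part will be the branching. In the linear case the auxiliary object is literally an interval graph and is of bounded pathwidth for free, and ``splitting into consecutive children'' has an unambiguous one-dimensional meaning; for rankwidth one must (i) produce a genuinely bounded-depth Ramseyan factorization of the NLC-tree strong enough to support the colour-propagation analysis --- a tree analogue of Simon's theorem that is not available off the shelf --- and (ii) prove that the decorated tree really does have bounded treewidth, i.e.\ that the information which must pass between a node and its children stays bounded even though a node may have many children carrying many distinct colour types. A naive adaptation may well produce structures of unbounded treewidth, in which case one should probably route through intermediate statements about shrubdepth covers and sparsification --- for instance that a stable bounded-rankwidth class admits low shrubdepth covers, hence (by the characterization of structurally bounded expansion) is structurally of bounded expansion --- and then bootstrap to bounded treewidth, in the spirit of the $2$-covers with bounded shrubdepth appearing in \Cref{thm:stlrw}. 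Controlling, via stability, the many incomparable branches of a rankwidth decomposition, as opposed to the essentially single obstruction one meets along a word, is where the genuine difficulty lies, and is presumably why the statement is still only a conjecture.
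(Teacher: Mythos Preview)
There is nothing in the paper to compare your proposal against: \Cref{conjecture:rw} is stated as an open conjecture, not a theorem, and the paper offers no proof of it. The surrounding discussion in the conclusion explicitly contrasts the situation with the linear case, pointing to \Cref{cor:herw} as evidence that bounded-rankwidth classes are ``prime'' and that ``no simple extension of the results obtained from classes with bounded linear rankwidth seems to hold.'' Your write-up implicitly acknowledges this in its final sentence, so you are not under a misapprehension about the status of the statement.

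On content: the easy direction you give is correct and does not require bounded rankwidth. For the hard direction, your outline --- lift the NLC/Simon machinery from words to trees via a forest-algebra Ramseyan factorization, then use exclusion of a semi-induced half-graph to bound the positional information at each node --- is the natural attack, and you correctly flag the two genuine obstructions: (i) an off-the-shelf tree analogue of Simon's theorem with the right properties is not available, and (ii) even granting such a factorization, branching means many children of a node may carry many pairwise incomparable colour types, and stability alone does not obviously bound this in a way that keeps the decorated tree of bounded treewidth. Your fallback route through low shrubdepth covers is also natural but is itself not known for bounded-rankwidth classes; indeed \Cref{cor:cog} shows that even $1$-covers by cographs fail for rankwidth~$2$, so the cover-based approach cannot proceed by the same partition argument as in the linear case. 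In short, your proposal is a fair survey of the plausible lines of attack and their known obstacles, but none of them currently closes, which is exactly why the paper leaves the statement as a conjecture.
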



We believe that our study of classes with bounded linear rankwidth might  open the perspective to study classes admitting low linear rankwidth covers. Let us elaborate on this.
As a consequence of \Cref{thm:stlrw} we have the following:
\begin{theorem}
\label{thm:lowlrwcover}
	Let $\Cc$ be a class with low linear rankwidth covers. Then the following are equivalent:
	\begin{enumerate}
		\item\label{it:lrwc1} $\Cc$ is monadically stable,
		\item\label{it:lrwc2} $\Cc$ is stable,
		\item\label{it:lrwc3} $\Cc$ excludes a semi-induced half-graph,
		\item\label{it:lrwc4} $\Cc$ has structurally bounded expansion.
	\end{enumerate}
\end{theorem}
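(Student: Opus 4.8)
The plan is to establish the cycle $(\ref{it:lrwc1})\Rightarrow(\ref{it:lrwc2})\Rightarrow(\ref{it:lrwc3})\Rightarrow(\ref{it:lrwc4})\Rightarrow(\ref{it:lrwc1})$, the single substantial step being $(\ref{it:lrwc3})\Rightarrow(\ref{it:lrwc4})$, which is where \Cref{thm:stlrw} enters. The other three implications are short. Implication $(\ref{it:lrwc1})\Rightarrow(\ref{it:lrwc2})$ holds because monadic stability is by definition stronger than stability. For $(\ref{it:lrwc2})\Rightarrow(\ref{it:lrwc3})$ I argue by contraposition: if $\Cc$ semi-induces arbitrarily large half-graphs, then the edge formula $\varphi(x,y)\coloneqq E(x,y)$ orders unboundedly long sequences in members of $\Cc$, so $\Cc$ is unstable. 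For $(\ref{it:lrwc4})\Rightarrow(\ref{it:lrwc1})$, recall that every bounded expansion class is nowhere dense (a depth-$r$ topological minor $K_m$ has $(m-1)/2$ edges per vertex), hence monadically stable by \Cref{thm:Adler}; since the composition of two transductions is again a transduction and, by the displayed equivalence, monadic stability of $\Cc$ is exactly the non-existence of a transduction $\Cc\twoheadrightarrow\mathscr{H}$, a transduction of a monadically stable class is monadically stable; hence structurally bounded expansion implies monadic stability.

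For $(\ref{it:lrwc3})\Rightarrow(\ref{it:lrwc4})$, suppose $\Cc$ excludes the semi-induced half-graph $H_k$. Fix $p\in\N$ and let $\mathcal{U}_p$ be a finite $p$-cover of $\Cc$ with $\Cc[\mathcal{U}_p]$ of bounded linear rankwidth (such a cover exists since $\Cc$ has low linear rankwidth covers). Every graph in $\Cc[\mathcal{U}_p]$ is an induced subgraph of a graph in $\Cc$, and a semi-induced $H_k$ inside an induced subgraph is a semi-induced $H_k$ in the ambient graph; hence $\Cc[\mathcal{U}_p]$ too excludes $H_k$ as a semi-induced subgraph. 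Now \Cref{thm:stlrw} applies to the class $\Cc[\mathcal{U}_p]$ and yields that $\Cc[\mathcal{U}_p]$ has structurally bounded expansion, which by the characterization of \cite{SBE_drops} means that $\Cc[\mathcal{U}_p]$ admits low shrubdepth covers. In particular there is a finite $p$-cover $\mathcal{V}$ of $\Cc[\mathcal{U}_p]$ of bounded shrubdepth.

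It remains to compose the two families. For $\mathbf A\in\Cc$ put $\mathcal{W}_{\mathbf A}\coloneqq\{W\mid U\in(\mathcal{U}_p)_{\mathbf A},\ W\in\mathcal{V}_{\mathbf A[U]}\}$. Given a set $S\subseteq V(\mathbf A)$ of at most $p$ elements, pick $U\in(\mathcal{U}_p)_{\mathbf A}$ with $S\subseteq U$, and then $W\in\mathcal{V}_{\mathbf A[U]}$ with $S\subseteq W$; thus $\mathcal{W}_{\mathbf A}$ is a $p$-cover of $\mathbf A$. It is finite, of cardinality at most $(\sup_{\mathbf A}|(\mathcal{U}_p)_{\mathbf A}|)\cdot(\sup|\mathcal V|)$, and for $W\in\mathcal V_{\mathbf A[U]}$ we have $\mathbf A[W]=(\mathbf A[U])[W]$, which has shrubdepth bounded by the bound witnessing that $\mathcal V$ has bounded shrubdepth. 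Hence $\Cc$ admits low shrubdepth covers, so by \cite{SBE_drops} once more $\Cc$ has structurally bounded expansion, closing the cycle. The step requiring the most care is exactly this passage between the class and its piece class: checking that the exclusion of a semi-induced half-graph transfers to $\Cc[\mathcal U_p]$, and that the composed family $\mathcal W$ is still a \emph{finite} $p$-cover of $\Cc$ with the required width bound --- both routine, but this is where the definitions of $p$-cover and of semi-induced subgraph must be handled precisely. Everything else is a direct appeal to \Cref{thm:stlrw} and to the equivalence of structurally bounded expansion with low shrubdepth covers.
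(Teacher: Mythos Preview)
Your proposal is correct and follows essentially the same route as the paper: the cycle $(\ref{it:lrwc1})\Rightarrow(\ref{it:lrwc2})\Rightarrow(\ref{it:lrwc3})\Rightarrow(\ref{it:lrwc4})\Rightarrow(\ref{it:lrwc1})$, with the only substantive step $(\ref{it:lrwc3})\Rightarrow(\ref{it:lrwc4})$ handled by applying \Cref{thm:stlrw} to the piece class $\Cc[\mathcal U_p]$ and then composing the resulting low shrubdepth covers with $\mathcal U_p$. The paper phrases the invocation of \Cref{thm:stlrw} via the ``transduction of bounded pathwidth'' item rather than the ``structurally bounded expansion'' item, but this is an immaterial difference; your version is in fact slightly more explicit about why the half-graph exclusion passes to $\Cc[\mathcal U_p]$ and why the composed cover is finite.
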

\begin{proof}
	Clearly $\ref{it:lrwc1}\Rightarrow\ref{it:lrwc2}\Rightarrow\ref{it:lrwc3}$.	
For $\ref{it:lrwc3}\Rightarrow\ref{it:lrwc4}$, let $p$
be an integer and consider a depth-$p$ cover $\cal U$ of $G\in\Cc$ with linear rankwidth at most $r$. If $\Cc$ excludes some semi-induced half-graph we deduce by \Cref{thm:stlrw} that each $U\in \cal U$ 
induces a subgraph that is a fixed transduction of a graph with pathwidth at most $C(r)$, hence, of a class
that has depth-$p$ covers with bounded shrubdepth. Considering the intersection of the two covers, we get that $\Cc$ has depth-$p$ covers with bounded shrubdepth, hence, has structurally bounded expansion. Thus $\ref{it:lrwc3}\Rightarrow\ref{it:lrwc4}$. Finally, $\ref{it:lrwc4}\Rightarrow\ref{it:lrwc1}$ is implied by 
\Cref{thm:Adler}.
\end{proof}

The next example illustrates again the concept of simple 
transductions and as a side product will provide us with 
some examples of classes of graphs admitting low 
linear rankwidth covers.

\begin{example}
\label{ex:Lozin}
We consider the following graph classes, introduced in \cite{lozin2011minimal}.
Let $n,m$ be integers. The graph $H_{n,m}$ has vertex set 
$V=\{v_{i,j}\mid (i,j)\in [n]\times [m]\}$. In this graph, two vertices $v_{i,j}$ and $v_{i',j'}$ with $i\leq i'$ are adjacent if $i'=i+1$ and $j'\leq j$. The graph $\widetilde{H}_{n,m}$ is obtained from $H_{n,m}$ by adding all the edges between vertices having same first index (that is between $v_{i,j}$ and $v_{i,j'}$ for every $i\in [n]$ and all distinct $j,j'\in [m]$.

First note that for fixed $a\in\mathbb N$ the classes 
$\mathscr H_a=\{H_{a,m}\mid m\in\mathbb N\}$ and 
$\widetilde{\mathscr H}_a=\{\widetilde{H}_{a,m}\mid m\in\mathbb N\}$ have bounded linear rank-width as they can be obtained as interpretations of  $a$-colored linear orders:
we consider the linear order on $\{v_{i,j}\mid (i,j)\in [a]\times [m]\}$ defined by $v_{i,j}<v_{i',j'}$ if $j<j'$ or $(j=j')$ and $(i<i')$. We color $v_{i,j}$ by color $i$. Then the graphs in $\mathscr H_a$ are obtained by the interpretation stating that $x<y$ are adjacent if the color of $x$ is one less than the color of $y$, and if there is no $z$ between $x$ and $y$ with the same color as $x$. The graphs  in $\widetilde{\mathscr H}_a$ are obtained by further adding all the edges between vertices with same color.
\end{example}

Following the lines of \cite[Theorem 9]{kwon17} we deduce from \Cref{ex:Lozin}:
\begin{proposition}
	The class of unit interval graphs and the class of bipartite permutation graphs admit
low linear rank-width colorings.
\end{proposition}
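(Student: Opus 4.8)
The plan is to follow the argument of \cite[Theorem~9]{kwon17}, replacing the use of bounded rank-width there by the stronger bounded \emph{linear} rank-width provided by \Cref{ex:Lozin}. Recall that a class admits \emph{low linear rank-width colorings} if for every $p\in\N$ there is an integer $N_p$ such that each graph $G$ in the class carries a colouring of $V(G)$ with at most $N_p$ colours for which the union of any at most $p$ colour classes induces a subgraph of linear rank-width bounded by some function of $p$; taking, for each $G$, the family of all unions of at most $p$ colour classes then yields a finite $p$-cover of bounded linear rank-width, so this is equivalent to admitting low linear rank-width covers. Since linear rank-width is monotone under taking induced subgraphs and equals the maximum of the linear rank-widths of the connected components, a class enjoys this property whenever every one of its members is an induced subgraph of some member of a class that enjoys it. By \cite{lozin2011minimal}, every bipartite permutation graph is an induced subgraph of $H_{n,n}$ for some $n$, and every unit interval graph is an induced subgraph of $\widetilde H_{n,n}$ for some $n$; hence it is enough to prove the two statements for the families $\{H_{n,n}\mid n\in\N\}$ and $\{\widetilde H_{n,n}\mid n\in\N\}$.

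Fix $p\in\N$ and colour the vertex $v_{i,j}$ of $H_{n,n}$ (respectively of $\widetilde H_{n,n}$) by the residue of $i$ modulo $p+1$, using $p+1$ colours in total. I claim that any set of at most $p$ of these colour classes induces a graph of linear rank-width at most $r(p)$, where $r(p)$ is the bound on the linear rank-width of the graphs of $\mathscr H_p$ (respectively $\widetilde{\mathscr H}_p$) guaranteed by \Cref{ex:Lozin}. Indeed, let $R$ be the corresponding set of at most $p$ residues modulo $p+1$. Since $|R|\le p$, the \emph{selected} columns, i.e.\ those columns $i$ with $i\bmod (p+1)\in R$, form in the natural order of the columns a disjoint union of maximal runs of consecutive columns, each of length at most $|R|\le p$, any two distinct runs being separated by at least one unselected column. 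Because every edge of $H_{n,n}$ (respectively of $\widetilde H_{n,n}$) joins two vertices that lie either in the same column or in two consecutive columns, there is no edge between vertices belonging to two distinct runs. Consequently the subgraph induced by the colour classes in $R$ is the disjoint union, over these runs, of graphs each spanned by at most $p$ consecutive columns, i.e.\ each isomorphic to an induced subgraph of $H_{p,n}\in\mathscr H_p$ (respectively of $\widetilde H_{p,n}\in\widetilde{\mathscr H}_p$), hence of linear rank-width at most $r(p)$ by \Cref{ex:Lozin} and monotonicity under induced subgraphs; since taking disjoint unions does not increase linear rank-width, the whole induced subgraph has linear rank-width at most $r(p)$, proving the claim.

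This establishes that $\{H_{n,n}\mid n\in\N\}$ and $\{\widetilde H_{n,n}\mid n\in\N\}$ admit low linear rank-width colorings, and therefore so do the classes of bipartite permutation graphs and of unit interval graphs. I expect the only delicate point to be matching the pieces of the decomposition to the families of \Cref{ex:Lozin}: the argument works precisely because the edges of the Lozin graphs $H_{n,m}$ and $\widetilde H_{n,m}$ are ``local'' --- they never span more than two consecutive columns --- so that a single unselected column already cuts the graph into blocks living in at most $p$ consecutive columns, which is exactly the regime ($\mathscr H_p$, $\widetilde{\mathscr H}_p$) shown to have bounded linear rank-width. The universality statements for bipartite permutation graphs and unit interval graphs used in the reduction are those of \cite{lozin2011minimal}, as already exploited in \cite{kwon17}.
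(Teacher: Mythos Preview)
Your proof is correct and follows precisely the route the paper indicates: it spells out the argument of \cite[Theorem~9]{kwon17} (colouring by the residue of the column index modulo $p+1$ so that any $p$ classes split into blocks of at most $p$ consecutive columns) and plugs in the bounded linear rank-width of $\mathscr H_p$ and $\widetilde{\mathscr H}_p$ from \Cref{ex:Lozin}, together with the universality results of \cite{lozin2011minimal}. This is exactly what the paper means by ``following the lines of \cite[Theorem~9]{kwon17} we deduce from \Cref{ex:Lozin}''.
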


As we have shown above, classes with low linear rankwidth covers generalize structurally bounded expansion classes. Among the first problems to be solved on these class, two arise very naturally:

\begin{problem}
\label{pb:lrwT}
Is it true that every first-order transduction of a class with low linear rankwidth covers has again low linear rankwidth covers?	
\end{problem}

As a stronger form of this problem, one can also wonder whether classes with low linear rankwidth covers enjoy a form of quantifier elimination, as structurally bounded expansion class do.

\begin{problem}
\label{pb:lrwNIP}
Is it true that every class with low linear-rankwidth covers is mondadically NIP?	
\end{problem}
Note that it is easily checked that a positive answer to \Cref{pb:lrwT} would imply a positive answer to \Cref{pb:lrwNIP}.

\bibliographystyle{elsarticle-harv}
\bibliography{stablerw}

\end{document}